\documentclass[runningheads,envcountsame]{llncs}
\usepackage[T1]{fontenc}

\usepackage{amsthm}
\usepackage{amsmath, amssymb}
\usepackage{thmtools,thm-restate}

\usepackage{tikz}
\usepackage{xspace}

\usepackage{booktabs}   
\usepackage{subcaption} 

\usepackage{enumitem} 
\usepackage{makecell} 
\usepackage{mathtools}
\usepackage{wrapfig}

\usepackage{amsmath}
\usepackage{pifont}
\usepackage{thm-restate}
\usetikzlibrary{arrows,automata,shapes,decorations,decorations.markings,calc, matrix,decorations.pathmorphing, patterns,backgrounds,shapes.misc,arrows.meta,positioning}

\usepackage{xspace}
\usepackage[ruled,vlined,resetcount,linesnumbered,noend]{algorithm2e}
\usepackage{parskip}
\usepackage{comment}

\usepackage{arydshln}
 \usepackage{multicol}
 \usepackage{multirow}

\usepackage{hyperref}
\usepackage[capitalise]{cleveref}

\usepackage{varwidth}

\usepackage{array}
\newcolumntype{H}{>{\setbox0=\hbox\bgroup}c<{\egroup}@{}}

\usepackage{ifthen}
\usepackage{stackengine}

\usepackage[normalem]{ulem} 

\usepackage{todonotes}
  \newcounter{todocounter}

\newcommand\xvar{x}

\newcommand\varset{{\mathbb X}}

\newcommand\valset{{\mathbb V}}

\newcommand\lbl\ell
\newcommand\action\lbl

\newcommand\run\rho

\colorlet{colorPO}{darkgray!80!black}
\colorlet{colorRF}{blue}
\colorlet{colorOB}{orange}
\colorlet{colorCO}{red!80!black}
\colorlet{colorMO}{red!80!black}
\colorlet{colorFR}{olive}
\colorlet{colorECO}{orange}
\colorlet{colorCOM}{magenta}
\colorlet{colorSW}{teal}
\colorlet{colorHB}{green!40!black}
\colorlet{colorPPO}{magenta}
\colorlet{colorRSEQ}{green!40!black}
\colorlet{colorSC}{violet}
\colorlet{colorHBMO}{brown}
\colorlet{colorPSC}{violet}
\colorlet{colorREL}{olive}
\colorlet{colorWB}{olive}
\colorlet{colorRMW}{brown}

\newcommand\conf\gamma

\tikzset{
   every path/.style={>=stealth},
   po/.style={->,draw=colorPO,shorten >=-0.5mm,shorten <=-0.5mm},
   ppo/.style={->,draw=colorPPO,shorten >=-0.5mm,shorten <=-0.5mm},
   sw/.style={->,draw=colorSW,shorten >=-0.5mm,shorten <=-0.5mm},
   rf/.style={->,draw=colorRF,dashed,shorten >=-0.5mm,shorten <=-0.5mm},
   mo/.style={->, draw=colorMO,dotted,shorten >=-0.5mm,shorten <=-0.5mm},
    mrf/.style={->,draw=colorRF,dashed,shorten >=-0.5mm,shorten <=-0.5mm},   
   urf/.style={->,draw=colorRF,shorten >=-0.5mm,shorten <=-0.5mm},
   fr/.style={->,draw=colorFR,dashed,shorten >=-0.5mm,shorten <=-0.5mm},
   hb/.style={->,draw=colorHB,thick,shorten >=-0.5mm,shorten <=-0.5mm},
   co/.style={->,draw=colorCO,dotted,very thick,shorten >=-0.5mm,shorten <=-0.5mm},
   rmw/.style={->,draw=colorRMW,thick,shorten >=-0.5mm,shorten <=-0.5mm},
   rseq/.style={->,draw=colorRSEQ,thick,dotted,shorten >=-0.5mm,shorten <=-0.5mm},
   com/.style={->,draw=colorCOM,thick,shorten >=-0.5mm,shorten <=-0.5mm},
}

\newcommand{\tup}[1]{\langle#1\rangle}

\newcommand{\R}{\mathsf{R}}
\newcommand{\W}{\mathsf{W}}
\newcommand{\E}{\mathsf{E}}

\newcommand{\wramm}{\mathsf{WRA}}
\newcommand{\ramm}{\mathsf{RA}}
\newcommand{\rlxmm}{\mathsf{Relaxed}}
\newcommand{\sramm}{\mathsf{SRA}}

\newcommand{\cmmm}{\mathsf{CM}}
\newcommand{\ccmm}{\mathsf{CC}}
\newcommand{\cvmm}{\mathsf{CCv}}

\newcommand{\mmorder}{\preccurlyeq}

\newcommand{\tuple}[1]{\left\langle#1\right\rangle}

\newcommand{\po}{\mathsf{\color{colorPO}po}}

\newcommand{\rf}{\mathsf{\color{colorRF}rf}}
\newcommand{\rfinv}{\rf^{-1}}

\newcommand{\mo}{\mathsf{\color{colorMO}mo}}

\newcommand{\hb}{\mathsf{\color{colorHB}hb}}

\newcommand\locx[1]{{{#1}_{x}}}
\newcommand{\ob}[1]{\mathsf{{\color{colorOB}ob}_{#1}}}

\newcommand{\irr}{\mathsf{irr}}
\newcommand{\acy}{\mathsf{acy}}

\newcommand{\op}{\mathsf{op}}

\newcommand{\event}{e}

\newcommand{\lloc}{\mathsf{var}}

\newcommand{\val}{\mathsf{val}}

\newcommand{\rd}{\mathtt{r}}
\newcommand{\wt}{\mathtt{w}}

\newcommand{\ex}{\mathsf{\color{black}X}}

\newcommand{\expartial}{\overline{\ex}}

\newcommand{\ypartial}{\overline{\mathsf{\color{black}Y}}}

\newcommand{\NP}{\mathsf{NP}}

\newcommand\np{\mathsf{NP}}

\newcommand{\MemModel}{\mathcal{M}}

\newcommand{\var}{\operatorname{var}}

\newcommand{\xra}{\xrightarrow}

\SetKwProg{myfun}{function}{}{}
\SetKwProg{myhandler}{handler}{}{}
\SetKwFunction{init}{Initialization}
\SetKwFunction{getLW}{getLastWriteBeforeOffline}
\SetKwFunction{poprop}{poPropagate}
\SetKwFunction{joinchangedindex}{pairwiseMaxIndexes}
\SetKwFunction{getLWB}{getLastWriteBefore}
\SetKwFunction{getLRB}{getLastReadBefore}
\SetKwFunction{getLRWB}{getLastReadWriteBefore}
\SetKwFunction{initWtLst}{initWtLst}
\SetKwFunction{initRdLst}{initRdLst}
\SetKwFunction{initWtRdLst}{initWtRdLst}
\SetKwFunction{get}{get}
\SetKwFunction{getHBTS}{getHBTimestamps}
\SetKwFunction{getTCPC}{getTopAndPositionInRFChain}
\SetKwFunction{checkRace}{checkRace}
\SetKwFunction{rdhandler}{readHandler}
\SetKwFunction{wthandler}{writeHandler}
\SetKwFunction{acqhandler}{acquire}
\SetKwFunction{relhandler}{release}
\SetKwInput{Input}{Input}
\SetKwInOut{Output}{Output}
\SetKw{Let}{let}
\SetKw{Break}{break}
\SetKw{NOT}{not}
\SetKw{declare}{declare}
\SetKw{exit}{exit}
\SetKw{Continue}{continue}
\SetKw{Return}{return}
\SetKwFor{Foreach}{for each}{}{}%
\DontPrintSemicolon

\SetCommentSty{mycommfont}
\SetNoFillComment

\def\WCoh{write-coherence}
\def\SWCoh{strong-write-coherence}
\def\RCoh{read-coherence}
\def\WRCoh{weak-read-coherence}
\def\RlxWCoh{relaxed-write-coherence}
\def\RlxRCoh{relaxed-read-coherence}
\def\PORF{porf-acyclicity}
\def\OB{ob-acyclicity}

\tikzset{
   every path/.style={>=stealth},
   po/.style={->,color=colorPO, thick},
   ppo/.style={->,color=colorPPO, thick},
   sw/.style={->,color=colorSW, thick},
   rf/.style={->,color=colorRF,dashed, thick},
   mrf/.style={->,color=colorRF,dashed, thick},   
   urf/.style={->,color=colorRF, thick},
   fr/.style={->,color=colorFR,dashed, thick},
   hb/.style={->,color=colorHB,thick, thick},
   mo/.style={->,color=colorMO,dotted, very thick},
   dob/.style={->,color=colorDOB, very thick},
   ob/.style={->,color=colorOB, very thick},
   rmw/.style={->,color=colorRMW,thick, thick},
   rseq/.style={->,color=colorRSEQ,thick,dotted, thick},
   com/.style={->,color=colorCOM,thick, thick},
}

\makeatletter
\newcommand{\labitem}[2]{%
\def\@itemlabel{\textbf{#1}}
\item
\def\@currentlabel{#1}\label{#2}}
\makeatother

\SetKwFunction{ChCon}{CheckConsistency}
\SetKwFunction{ChPCon}{CheckPreemptionBoundedConsistency}
\SetKwComment{Comment}{/* }{ */}
\SetKwFunction{Explore}{Explore}
\SetKwFunction{ExploreSC}{ExploreSC}
\SetKwFunction{enable}{EnableEvent}
\SetKwFunction{enableth}{EnableThread}
\SetKwFunction{enablethsc}{EnableThreadSC}
\SetKwFunction{buildgraph}{BuildGraph}
\SetKwProg{Fn}{Function}{:}{}

\newcommand{\threewriter}{\textsc{3-Writer}}
\newcommand{\twowriter}{\textsc{2-Writer}}
\newcommand{\onewriter}{\textsc{1-Writer}}

\newcommand{\update}{\mathsf{Update}}
\newcommand{\Initialize}{\mathsf{Initialize}}
\newcommand{\CheckC}{\mathsf{CheckConsistency}}

\title{Complexity of Consistency Testing for the Release-Acquire Semantics}
\titlerunning{Complexity of Consistency Testing for the RA Semantics}
\author{
R.~Govind\inst{1,2,5}
\and
S.~Krishna\inst{3}
\and
Sanchari Sil\inst{4}
\and
B.~Srivathsan\inst{4,5}
}
\institute{
  The Institute of Mathematical Sciences, Chennai, India,
  \email{govind@imsc.res.in}\\
  \and
  Homi Bhabha National Institute, Anushaktinagar, Mumbai, India
  \and
  IIT Bombay, India, 
  \email{krishnas@ cse.iitb.ac.in}\\
  \and
   Chennai Mathematical Institute, India,
    \email{\{sanchari, sri\} @ cmi.ac.in}
  \and
  CNRS, ReLaX, IRL 2000, Chennai, India
}
\authorrunning{R.~Govind et al.}

\begin{document}

\maketitle

\begin{abstract}
    In a seminal work, Gibbons and Korach~\cite{Gibbons} studied the complexity of deciding whether an observed sequence of reads and writes of a multi-threaded program admits a sequentially consistent interleaving. They showed the problem to be $\NP$-hard even under strong syntactic restrictions. More recently, Chakraborty \emph{et al.}~\cite{ChakrabortyKMP24} considered the problem for weak memory models and proved that $\NP$-hardness remains even when the number of threads, the number of memory locations, and the value domain are all bounded.

    In this paper we revisit the problem for the release-acquire variants of the C11 memory model. Our main positive result is that consistency testing  can be done in polynomial-time when each memory location is written by at most one thread (multiple readers are allowed). Notably, this restriction is already $\NP$-hard for sequential consistency. We complement this upper bound with tight hardness results: the problem is $\NP$-hard when two threads may write to the same location, and allowing three writers per location rules out $2^{o(k)}\cdot n^{\mathcal{O}(1)}$ algorithms under the Exponential Time Hypothesis, where $k$ denotes the number of threads, and $n$ the number of memory operations.
\end{abstract}

\section{Introduction}
Multi-threaded programs, in which threads interact through shared memory operations, are prevalent across software systems. Modern hardware and compilers apply various optimizations (like instruction reordering, store buffering, etc.) to enhance performance. These optimizations can cause threads to observe writes to a shared memory location in an order that differs from the program order. \emph{Memory models} formalize the rules to constrain which values reads may observe, and allow the programmer to be aware of the memory behaviour. These models are implemented via hardware or software protocols. As the correctness of concurrent programs depends on the chosen memory model, a faithful implementation of memory models is essential. 

One of the fundamental results in this subject is the work of Gibbons and Korach~\cite{Gibbons} on \emph{testing shared memories}, which presents a way to test whether an observed execution of a multi-threaded program \emph{is consistent with} the (rules of the) underlying memory model. In their work~\cite{Gibbons}, Gibbons and Korach consider \emph{sequential consistency} (SC)~\cite{SC-Lamport78}, the most common and simple-to-understand memory model. SC uses an interleaving semantics to explain memory operations: any observed execution across the threads is identical to that of some global interleaving of the threads' operations that preserves the program order of each thread, and each read observes the value of the last write (in this interleaving) made to the memory location. Deciding whether an observed execution can be justified by such an interleaving is the classical \emph{consistency testing} problem for SC.\begin{center}
  \fbox{%
    \begin{minipage}{0.98\textwidth}
      \noindent \textsc{Verifying Sequential Consistency} (The VSC-problem~\cite{Gibbons})

      \medskip

      \noindent \textbf{Instance:} A variable set $\varset$, a value set $\valset$, a finite collection of nonempty sequences $S_1, S_2, \dots, S_k$, each consisting of a a finite set of memory operations of the form ``$read(x,d)$'' or ``$write(x,d)$'' with $x \in \varset$ and $d \in \valset$. 

      \medskip

      \noindent \textbf{Question:} Is there a sequence $S$, an interleaving of $S_1, S_2, \dots, S_k$ such that for each ``$read(x, d)$'' there is a preceding ``$write(x,d)$'' in $S$ with no other ``$write(x, d')$'' in between.
    \end{minipage}%
  }
\end{center}
The set of sequences $S_1, \dots, S_k$ forms our observed execution. A ``Yes'' answer to the question above indicates that the observed execution admits an SC behaviour, while a ``No'' answer shows the memory implementation can violate sequential consistency.

Gibbons and Korach~\cite{Gibbons} established that the VSC problem is $\NP$-complete, even under a number of natural syntactic restrictions (see Table~1 of~\cite{Gibbons}). Subsequent work has extended these complexity results along two complementary directions: (i) to weaker memory models that arise in programming languages, hardware, and distributed systems~\cite{ManovitH06,Qadeer03,ChenLHCSWP09,Mathur20,fine-grained-complexity,tso_rf_bui2021reads,BouajjaniEGH17}; and (ii) to algorithmic techniques for stateless model-checking of multi-threaded programs under sequential consistency~\cite{rv-equivalence}. Consistency testing also appears in the context of comparing memory models~\cite{Kokologiannakis:POPL:2017}. 

In this work, we revisit this consistency testing problem for the Release-Acquire semantics ($\ramm$), which form the core of the memory model adopted by the C11 programming language. We also investigate its variants, the strong release acquire ($\sramm$)~\cite{Lahav:2022} and the weak release acquire ($\wramm$)~\cite{KLSV:popl18}. For all these models, we make use of an axiomatic semantics that help reason about the observed executions as graphs and the rules of the model as constraints over special kinds of edges~\cite{Lahav2021,Lahav:2022}. Closely related to these memory models are $\rlxmm$~\cite{Margalit:2021:POPL,Norris:2013:OOPSLA} and causal consistency models~\cite{BouajjaniEGH17,Burckhardt:2014} which employ axioms which differ mildly from the release-acquire variants. 
From~\cite{ChakrabortyKMP24} we know that the analogous consistency testing problem is $\NP$-complete for all these models even when the number of threads, the number of memory locations and the domain of values are kept bounded. In this paper, we consider a different dimension to bound.

\noindent \textbf{Contributions.} We prove that the consistency testing problem under the release-acquire model and its variants can be solved in $\mathcal{O}(n^3)$ (where $n$ is the total number of memory operations in the instance) for \onewriter\ systems: these are programs where atmost one thread writes to a shared variable, but multiple threads can read from it. To the best of our knowledge, this is the first known polynomial-time algorithm for the consistency testing problem under the release-acquire semantics. 
Single-writer systems appear in several contexts, for instance, in producer-consumer models or CREW (concurrent-read-exclusive-write) distributed systems~\cite{Jaja1992ParallelAlgorithms}. Notably, for sequential consistency, the problem for single-writer systems is known to be $\NP$-hard~\cite{Gibbons}, and from~\cite{rv-equivalence} it is known to have a polynomial-time procedure when the number of threads is kept bounded. Here, we show polynomial-time under no restriction, for the release-acquire semantics.
As our second contribution, we present new  lower bounds for the consistency testing  problem under the release-acquire model and its variants: 
\begin{itemize}
\item For \onewriter\ programs, we prove that there is no $\mathcal{O}(n^{\frac{3}{2} - \epsilon})$-time algorithm, under the Boolean Matrix Multiplication (BMM) hypothesis, for any $\epsilon > 0$. 
\item When we consider \twowriter\ programs (those where each variable is written into by atmost two threads), the problem continues to be $\np$-complete. 
\item Furthermore, when we consider \threewriter\ programs, the problem admits a conditional lower bound under the Exponential Time Hypothesis (ETH): there is no $2^{o(k)} \cdot n^{\mathcal{O}(1)}$ algorithm, where $k$ is the number of threads, and $n$ is the total number of operations.
\end{itemize}
The results provide a tight complexity picture while bounding the number of writer threads per memory location, with \onewriter\ giving polynomial-time, \twowriter\ being $\NP$-hard and \threewriter\ allowing a stronger conditional lower bound.

\noindent \textbf{Structure of the paper.} In Section~\ref{sec:prelims}, we present the axiomatic definitions for the release-acquire semantics and its variants. 
In Section~\ref{sec:onewriter}, we present the polynomial-time procedure for single-writer systems and in Section~\ref{sec:hardness}, the hardness results. In the paper, we talk about the models $\ramm, \sramm$ and $\wramm$. The corresponding results for the closely related memory models $\rlxmm$ and the models for causal consistency are presented in Appendices~\ref{app:relaxed} and \ref{app:cc-models}. 
Finally, in Section~\ref{sec:conclusion}, we conclude the paper with some directions for future work. Missing proofs can be found in clearly marked appendices.

\section{Preliminaries}\label{sec:prelims}

\newcommand{\vm}{\mathrm{V}\MemModel}

Unlike SC, where consistency is checked via an interleaving, release-acquire semantics requires reasoning with two special types of relations on the reads and writes: a reads-from relation $\rf$ (mapping each read to the write it observes) and a per-location modification order $\mo$ (ordering writes to the same variable). The verification question becomes: does there exist $\rf$ and $\mo$ on the observed events that satisfy the axioms for the release-acquire memory model? The presentation in this section follows~\cite{Lahav2021,ChakrabortyKMP24} and introduces these preliminaries.

An \emph{event} is a memory operation: either a read $\rd(x, v)$ or write $\wt(x, v)$, where $x \in \varset$ is a shared memory location  and $v \in \valset$ is a value that is either read-from or written-to the memory location. These are the analogues of ``$read(x,v)$'' and ``$write(x,v)$'' that were used in~\cite{Gibbons} and recalled in the Introduction. We sometimes use the term \emph{variable} for a memory location. For an event $e$, we write $e.\op$, $e.\lloc$ and $e.\val$ respectively for the memory operation ($\rd$ or $\wt$), the shared memory location and the value associated with the event. Events are spread across a set of \emph{threads}.
For a set of events $\E$, we write $\E.\R$ and $\E.\W$ for the set of all read and write events in $\E$ respectively, and $\E.\R_x, \E.\W_x$ for the read and write events involving variable $x$. When the set of events $\E$ is clear from the context, we simply write $\R_x$ and $\W_x$.

\noindent \textbf{Execution graph.} An \emph{execution graph} \cite{Lahav:2017} (also known as candidate execution~\cite{Batty:2011})  
is a tuple $\ex = \tup{\E, \po, \rf, \mo}$  where $\E$ is a set of events and $\po$, $\rf$, $\mo$ are binary relations over $\E$. 
The \emph{program order} $\po$ is a partial order that enforces a total order on events of the same thread. 
The \emph{reads-from relation} $\rf\subseteq (\E.\W \times \E.\R)$ associates write
 events $\event_1$ to read events $\event_2$. It can be seen as an edge $e_1 \xra{~\rf~} e_2$ from (the node corresponding to) $e_1$ to (the node corresponding to) $e_2$ in the execution graph $\ex$. 
Indeed, it must be that $\event_1.\lloc=\event_2.\lloc$, $\event_1.\val=  \event_2.\val$ and every read event has a unique writer ($\rfinv$ is a function). 
The \emph{modification order} $\mo\subseteq (\E.\W\times \E.\W)$ is the union of modification orders $\mo_x$ for $x \in \varset$, where each $\mo_x$ is a total order over the set $\locx{\W}$ of write events that involve variable $x$. For a read $r$, let $\rf^{-1}(r)$ denote the write associated by $\rf$ to $r$, in other words, it is the write from which $r$ reads from according to $\rf$.

Consistency of an observed execution graph with respect to a weak memory model can be characterized using certain properties of the combination consisting of the $\po$, $\rf$ and $\mo$ relations -- these properties are called \emph{consistency axioms}. In this work, we focus on the release-acquire memory model ($\ramm$) and its close variants strong release-acquire ($\sramm$) and weak release-acquire ($\wramm$). In Section~\ref{subsec:consistency_axioms}, we formally define the consistency axioms. Before that, we state our problem of interest, the analogue of the VSC problem~\cite{Gibbons} for release-acquire semantics. The problem statement uses the notion of a \emph{partial execution graph} $\expartial=\tup{\E, \po}$, which is an abstraction of execution graphs without the modification order and reads-from relation.

\begin{center}
  \fbox{%
    \begin{minipage}{0.98\textwidth}
      \noindent \textsc{Verifying weak memory consistency} (The $\vm$-problem)

      \medskip

      \noindent \textbf{Instance:} A partial execution graph $\expartial=\tuple{\E, \po}$ and a memory model $\MemModel \in \{\ramm, \sramm, \wramm\}$

      \medskip

      \noindent \textbf{Question:} Do there exist $\rf$ and $\mo$ relations on $\E$ such that the complete execution graph $\ex = \tuple{\E, \po, \rf, \mo}$ satisfies the consistency axioms of the memory model $\MemModel$?
    \end{minipage}%
  }
\end{center}

\noindent{\bf{Notations for Relations}}. 
Let $S$ be a binary relation over the set of events $\E$.  The reflexive, transitive, reflexive-transitive closures, and inverse relations of $S$ 
are denoted as $S^?$, $S^+$, $S^*$, $S^{-1}$ respectively. 
The relation $S$ is \emph{acyclic} if $S^+$ is irreflexive. 
We write $\irr(S)$ and $\acy(S)$ to denote that  relation $S$ is irreflexive and acyclic respectively.  The identity relation on a set $B$ is denoted as $[B]$ : 
$[B](x,y) \triangleq x = y \land x \in B$.
The composition of two relations $S_1$ and $S_2$, is denoted by $S_1;S_2$. 
For a shared variable $x$, we let $S_x=[\locx{\E}];S;[\locx{\E}]$ be the restriction of $S$ to all events of $\E$ on $x$, with $\locx{\E} = \{ e \in E \mid e.\var = x \}$.

\noindent \textbf{Derived relations.} The \emph{happens-before relation} $\hb$ is the transitive closure of $\po$ and $\rf$. In our descriptions, we view $\hb$ as a path consisting of $\xra{\po}$ and $\xra{\rf}$ edges. Happens-before ($\hb$) projected to an individual variable $x$ is denoted as  $\hb_x$: this will be a path where all the $\rf$ edges are over the variable $x$. 
 More precisely: 
 $\hb \triangleq (\po\cup \rf)^+$, and $\hb_x \triangleq (\po\cup \rf_x)^+$.  We write $\rf_x$ for the projection of $\rf$ to variable $x$. 

 \begin{figure}[t]
    \def\ystep{0.4}
     \resizebox{\textwidth}{!}{
	\begin{tikzpicture}[yscale=1]
        \node (t11) at (0,0*\ystep)  {$\rd(x)$};
      \node (t12) at (0,-4*\ystep) {$\wt(y)$};
      \node (t21) at (1.8,0,0*\ystep) {$\rd(y)$};
      \node (t22) at (1.8,-4*\ystep) {$\wt(x)$};
     \node (t23) at (1,-4.8*\ystep) {$(a)$};
    
      \draw[po] (t11) to (t12);
      \draw[po] (t21) to (t22);
      \draw[rf,bend left=0] (t12) to node[above,pos=0.1, sloped]{$\rf$} (t21);
      \draw[rf,bend right=0] (t22) to node[above,pos=0.1, sloped]{$\rf$} (t11);

      \node (t11) at (2.9,0*\ystep)  {$\wt_1(x)$};
      \node (t12) at (2.9,-4*\ystep) {$\wt(y)$};
      \node (t21) at (4.7,0*\ystep) {$\rd(y)$};
      \node (t22) at (4.7,-4*\ystep) {$\wt_2(x)$};
    %
     \node (t23) at (3.9,-4.8*\ystep) {$(b)$};
    
      \draw[po] (t11) to (t12);
      \draw[po] (t21) to (t22);
    %
      \draw[rf,bend left=0] (t12) to node[above,pos=0.9, sloped]{$\rf$} (t21);
    %
      \draw[mo,bend left=0] (t22) to node[above,pos=0.8, sloped]{$\mo$} (t11);
    
      \node (t11) at (6,0*\ystep)  {$\wt_1(x)$};
      \node (t12) at (6,-2*\ystep) {$\wt_2(x)$};
      \node (t13) at (6,-4*\ystep) {$\wt(y)$};
      \node (t21) at (7.8,0*\ystep) {$\rd(y)$};
      \node (t22) at (7.8,-4*\ystep) {$\rd(x)$};
     \node (t23) at (7,-4.8*\ystep) {$(c)$};
    
      \draw[po] (t12) to (t13);
      \draw[po] (t21) to (t22);
     \draw[mo] (t11) to node[left]{$\mo$} (t12);
      \draw[rf,bend left=0] (t13) to node[above,pos=0.8, sloped]{$\rf$} (t21);
      \draw[rf,bend left=5] (t11) to node[above,pos=0.8, sloped]{$\rf$} (t22);
    %
      \node (t11) at (9.1,0*\ystep)  {$\wt_1(x)$};
      \node (t12) at (9.1,-4*\ystep) {$\wt(y)$};
      \node (t21) at (10.9,0*\ystep) {$\rd(y)$};
      \node (t22) at (10.9,-2*\ystep) {$\wt_2(x)$};
      \node (t23) at (10.9,-4*\ystep) {$\rd(x)$};
    \node (t24) at (10,-4.8*\ystep) {$(d)$};
      \draw[po] (t11) to (t12);
      \draw[po] (t21) to (t22);
      \draw[po] (t22) to (t23);
      \draw[rf,bend left=0] (t12) to node[above,pos=0.8, sloped]{$\rf$} (t21);
      \draw[rf,bend right=0] (t11) to node[below,pos=0.8, sloped]{$\rf$} (t23);
    
     \node (t11) at (12.2,0*\ystep)  {$\wt(y)$};
      \node (t12) at (12.2,-4*\ystep) {$\wt(x)$};
    %
      \node (t21) at (14,0*\ystep) {$\wt(x)$};
      \node (t22) at (14,-4*\ystep) {$\wt(y)$};
      \node (t23) at (13.1,-4.5*\ystep) {$(e)$};
      \draw[po] (t11) to (t12);
    %
      \draw[po] (t21) to (t22);
    %
    %
      \draw[mo,bend left=0] (t12) to (t21);
      \draw[mo,bend right=0] (t22) to node[above,pos=0.8, sloped]{$\mo$} (t11);
    
    \end{tikzpicture}
	 }
    \caption{Violations of coherence and causality. Values in the $\rd$ and $\wt$ operations suppressed for clarity. Black edge denotes $\po$: (a)~\PORF{} is violated (b)~\WCoh{} is violated: $\wt_1(x)~\hb~\wt_2(x)$ while 
    $\wt_2(x)~\mo~\wt_1(x)$, 
     (c)~\RCoh{} is violated : $\wt_1(x)~\rf~\rd(x)$ with $\wt_1(x)~\mo~\wt_2(x)$ and $\wt_2(x)~\hb~\rd(x)$, 
       (d)~\WRCoh{} is violated : $\wt_1(x)~\rf~\rd(x)$ with $\wt_1(x)~\hb~\wt_2(x)$
      and $\wt_2(x)~\hb~\rd(x)$,    (e)~\SWCoh{} is violated : $\hb \cup \mo$ cycle. }
      \label{fig:illust}
\end{figure}

 \subsection{Consistency Axioms}\label{subsec:consistency_axioms}
The axioms are broadly classified as  
\emph{coherence} and \emph{causality cycles}. Table~\ref{table:ax:models-2} states the  axioms and indicates the axioms for each memory model. Figure~\ref{fig:illust} gives scenarios for violation of the various axioms described below.

\noindent{\bf{Coherence}}. The modification order $\mo$ enforces an `SC-per-variable' constraint in any execution $\ex$, that is, memory accesses per shared variable are  totally ordered.  Coherence axioms essentially describe rules on the $\mo$.
 
\emph{Write-coherence} enforces that $\mo$ agrees with $\hb$ for each variable (denoted as $\irr(\mo_x;\hb)$ in Table~\ref{table:ax:models-2} and a violation of it is shown in Figure~\ref{fig:illust}(b)): intuitively, if a write $\wt(x, v_1)$ happens-before $\wt(x, v_2)$, then $x$ is updated to $v_1$ first, and then $v_2$. As the order of updates to $x$ is encoded by the modification order $\mo_x$, the axiom requires $\mo_x$ to align with $\hb$.  
A stronger version of it, called \emph{strong-write-coherence}, requires the entire $\mo$ (and not just $\mo_x$) to align with $\hb$, in other words, it requires the acyclicity of $\po \cup \rf \cup \mo$ -- again, notice that this requirement is across variables, as $\mo=\bigcup_{x \in \varset}\mo_x$. Execution graph in Figure~\ref{fig:illust}(e) violates strong-write-coherence, while it does not violate write-coherence.

 \emph{Read coherence}  enforces that a read $\rd(x)$ cannot read from a write $\wt_1(x)$ if 
there is an \emph{intermediate} write $\wt_2(x)$ that happens-before $\rd(x)$, 
i.e. $(\wt_2(x), \rd(x)) \in \hb$ (denoted as $\wt_2(x) \xra{\hb} \rd(x)$), and ``intermediate'' refers to $\mo$,
i.e.,  $(\wt_1(x),\wt_2(x))\in \mo$ (denoted as $\wt_1(x) \xra{\mo} \wt_2(x)$). Said differently, if $\wt_1(x)$ and $\wt_2(x)$ happen-before a read $\rd(x)$, then $\wt_1 \xra{\mo} \wt_2$, implies $\rd(x)$ cannot read from $\wt_1(x)$. The $\mo$ relation makes $\wt_1$ stale for $\rd$. Figure~\ref{fig:illust}(c) shows a violation of read-coherence. 
In a weaker form of read-coherence called
\emph{weak-read-coherence}~\cite{Lahav:2022} (Figure~\ref{fig:illust}(d)), 
``intermediate'' relates to $\hb$ (and not $\mo$), i.e., we have $\wt_1(x) \xra{\hb} \wt_2(x)$: in other words, if two writes $\wt_1 \xra{\hb} \wt_2 \xra{\hb} \rd$, then $\rd$ cannot read from $\wt_1$. 
 
\noindent{\bf{Causality cycles}}. A causality cycle consists of $\po$ and $\rf$ orderings.
The C11 models explicitly mandate \PORF{} \cite{Lahav:2022,Luo:2021}, that is, 
$\acy(\po \cup \rf)$ (Figure~\ref{fig:illust}(a)). 

\begin{table}[t!]
     \caption{\small Models $\wramm, \ramm, \sramm$.}
    \label{table:ax:models-2}  
        \setlength\tabcolsep{7.5pt}
      \begin{tabular}{lr|lr}
      \multicolumn{2}{l|}{} & \multicolumn{2}{l}{} \\
       &  & $\irr(\mo_x;\hb)$  (\WCoh{})~~ & \\
     $\wramm$ & \makecell{\PORF{}\\ \WRCoh{}} &$\irr(\rf^{-1};\mo_x;\hb)$  (\RCoh{})&\\  
      &&$\acy(\hb \cup \mo)$  (\SWCoh{})~~& \\
       && $\irr(\hb;[\W_x];\hb;\rf^{-1}_x)~~~$ (\WRCoh{})& \\
      \hline
      $\ramm$ & \makecell{\PORF{} \\
      \WCoh{}\\ \RCoh{}}   &$\sramm$  \makecell{\PORF{}\\ \SWCoh{}\\ \RCoh{}} &\\ 
      \cline{1-4}
      \end{tabular}
          
  \end{table}

An execution graph $\ex = (\E, \po, \rf, \mo)$ is consistent with memory model $\MemModel$ if it satisfies all the axioms corresponding to $\MemModel$ as shown in Table  ~\ref{table:ax:models-2}. We write $\ex \models \MemModel$ if $\ex$ is consistent with $\MemModel$. For a partial execution graph $\expartial = \tuple{\E, \po}$, we write $\expartial \models \MemModel$ if there exists an $\rf$ and $\mo$ such that $\ex \models \MemModel$ for the complete execution graph $\ex = \tuple{\E, \po, \rf, \mo}$. Hence, the $\vm$ problem can be stated as: given a partial execution graph $\expartial$ and a memory model $\MemModel$, does $\expartial \models \MemModel$? 

We conclude this section with a useful lemma describing the relationship between these three models. 
\begin{lemma}\label{lem:sra-ra-wra-relationship}
Let $\ex = \tuple{\E, \po, \rf, \mo}$ be an execution graph. Then: $\ex \models \sramm$ implies $\ex \models \ramm$, and $\ex \models \ramm$ implies $\ex \models \wramm$. 
\end{lemma}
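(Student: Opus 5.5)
We check the two implications axiom by axiom against Table~\ref{table:ax:models-2}. All three models require \PORF{}, so it transfers for free. $\ramm$ and $\sramm$ both require \RCoh{}, so that transfers too. The remaining work is one implication between individual axioms for each step.

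\textbf{$\sramm \Rightarrow \ramm$.} It suffices to show that \SWCoh{} implies \WCoh{}. Suppose $\irr(\mo_x;\hb)$ fails. Then there are writes $w_1,w_2$ on $x$ with $w_1 \xra{\mo_x} w_2$ and $w_2 \xra{\hb} w_1$. Since $\mo_x \subseteq \mo$ and $\hb = (\po\cup\rf)^+$, the $\mo$ edge followed by the $\po\cup\rf$ path is a cycle in $\hb\cup\mo$. This contradicts $\acy(\hb\cup\mo)$.

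\textbf{$\ramm \Rightarrow \wramm$.} It suffices to show that \WCoh{} and \RCoh{} together imply \WRCoh{}. Suppose $\irr(\hb;[\W_x];\hb;\rf^{-1}_x)$ fails. Then there is a read $r$ on $x$ with $w_1 = \rf^{-1}(r)$, and a write $w_2$ on $x$ with $w_1 \xra{\hb} w_2 \xra{\hb} r$. This forces $w_1 \neq w_2$: otherwise $w_1 \xra{\hb} w_1$, which contradicts \PORF{}. Since $\mo_x$ is a total order on $\W_x$, there are two cases.
\begin{itemize}
\item If $w_2 \xra{\mo_x} w_1$, then together with $w_1 \xra{\hb} w_2$ this violates \WCoh{}.
\item If $w_1 \xra{\mo_x} w_2$, then $w_1 \xra{\rf} r$ and $w_2 \xra{\hb} r$ give a cycle $r \xra{\rf^{-1}} w_1 \xra{\mo_x} w_2 \xra{\hb} r$. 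This violates \RCoh{}.
\end{itemize}
Either way we reach a contradiction.

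The proof is routine. The only step needing care is this last case split, which uses totality of $\mo_x$ and the fact that $w_1 \neq w_2$ follows from \PORF{}.
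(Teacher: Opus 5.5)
Your proof is correct and follows the same route as the paper: strong-write-coherence immediately gives write-coherence, and a weak-read-coherence violation is split by the $\mo_x$-order of the two writes into either a write-coherence or a read-coherence violation. You also rule out the degenerate case $w_1 = w_2$ using porf-acyclicity, a small detail the paper's proof leaves implicit.
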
 
\begin{proof}
  It is easy to see that if $\ex$ satisfies strong-write-coherence, then it satisfies write-coherence also. Hence $\ex \models \sramm$ implies $\ex \models \ramm$. For $\ramm$ to $\wramm$, we show that if weak-read-coherence is violated, then either read-coherence or write-coherence is violated: suppose weak-read-coherence is violated, there is a cycle $\rd(x) \xra{\rf^{-1}} \wt_1(x) \xra{\hb} \wt_2(x) \xra{\hb} \rd(x)$. The $\mo$ relation can order $\wt_1$ and $\wt_2$ in either way. If $\wt_1(x) \xra{\mo} \wt_2(x)$, then read-coherence is violated, and if $\wt_2(x) \xra{\mo} \wt_2(x)$, then write-coherence is violated.
\end{proof}

\section{Polynomial-time algorithm for \onewriter\ systems}
\label{sec:onewriter}

We now consider programs where, for every variable, there is a single thread which writes to it. Any number of threads can read from a variable. An execution graph $\ex = (\E, \po, \rf, \mo)$ is said to be a \onewriter\ execution graph if in the set of events $\E$, there is atmost one thread writing to every memory location. 

\noindent \textbf{Overview of this section.} The fact that we work with \onewriter\ execution graphs simplifies several axioms of Table~\ref{table:ax:models-2}, allowing for  a polynomial-time procedure. Recall however that for sequential consistency, the \onewriter\ assumption offers no specific advantage and still yields a $\NP$-hard complexity~\cite{Gibbons}. 

\noindent \textit{$\sramm$ and $\ramm$ reduce to $\wramm$.} Firstly, we note that consistency checking for the $\sramm$ and $\ramm$ memory models reduce to checking for $\wramm$ in \onewriter\ execution graphs (Lemma~\ref{lem:one-writer-reduces-to-wra}). Moreover, since all writes of a memory location lie in a single thread, the $\mo$ relation on writes gets fixed to the unique order that agrees with $\po$.  Therefore, the problem boils down to detecting if there is an $\rf$ that satisfies \emph{porf-acyclicity} and \emph{weak-read-coherence}, the two axioms needed for $\wramm$ consistency. So we focus on these two axioms for the rest of the section. 

\noindent \textit{A partial order on $\rf$.} Secondly, since all writes to a memory location $x$ are totally ordered by $\po$, there is a natural partial order induced on the $\rf$ relations. A reads-from relation $\rf_1$ is said to be smaller than $\rf_2$, written as $\rf_1 \sqsubseteq \rf_2$, if for every 
read $r$, the write event $w_2 = \rf^{-1}_2(r)$ appears $\po$-later than $w_1 = \rf^{-1}_1(r)$ (we say $w_2$ is $\po$-later than $w_1$ if $w_1~\po^*w_2$). In other words, $\rf_2$ associates each read $r$ to a write that appears later in the program order compared to the write that $\rf_1$ associates it to\label{order}. 
We observe two useful properties of this partial order: (1) if an $\rf$ violates porf-acyclicity, then every $\rf^{\dagger}$ larger than it also violates porf-acyclicity, and (2) among all the $\rf$ that satisfy weak-read-coherence, there is a unique minimum element $\rf_{\min}$.

\noindent \textit{Algorithm.} Using these two properties, we design an algorithm that starts with the smallest $\rf$ and iteratively computes larger reads-from relations until a $\wramm$-consistent $\rf$ is found, or concludes that no such $\rf$ exists.  

The first two ideas appear in Section~\ref{sec:onewriter-structural-properties} and the algorithm is described in Section~\ref{sec:onewriter-algorithm}.

\subsection{Structural properties of \onewriter\ execution graphs}
\label{sec:onewriter-structural-properties}
Our first observation is that for \onewriter\ systems, checking $\sramm$ and $\ramm$ reduces to checking $\wramm$.  For $\wramm$ the only axioms are porf-acyclicity and weak-read-coherence.  For $\ramm$ one must also ensure write-coherence and read-coherence, in addition to porf-acyclicity.  In a \onewriter\ setting, write-coherence is enforced by choosing $\mo$ to agree with $\po$ (so all writes to a location are ordered by the program order of their single writer).  

A violating cycle for read-coherence has the form
\[
r_x \xra{\rf^{-1}} w_x \xra{\mo_x} w'_x \xra{\hb} r_x.
\]
When $\mo$ agrees with $\po$, the $\mo_x$ edge is simply $\po^+$, since both $w_x$ and $w'_x$ are in the same thread. Now, a violating cycle for weak-read-coherence has the form
\[
r_x \xra{\rf^{-1}} w_x \xra{\hb} w'_x \xra{\hb} r_x,
\]
and, since $w_x$ and $w'_x$ come from the same thread, the $\hb$ step between $w_x$ and $w'_x$ can likewise be instantiated by $\po^+$.  Hence read-coherence coincides with weak-read-coherence in the \onewriter\ setting.  Moreover, having an $\mo$ agreeing with $\po$ reduces strong-write-coherence to porf-acyclicity, culminating in this lemma.  

\begin{restatable}{lemma}{lemonewriterreducestowra}
    \label{lem:one-writer-reduces-to-wra}
 Let $\ex = \tuple{\E, \po, \rf, \mo}$ be a \onewriter\ execution graph. Then, for $\MemModel \in \{\sramm, \ramm\}$, we have $\ex \models \MemModel$ iff $\mo$ agrees with $\po$ and $\ex \models \wramm$. 
\end{restatable}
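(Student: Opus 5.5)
We prove the two directions separately, relying on Lemma~\ref{lem:sra-ra-wra-relationship} and on the fact that in a \onewriter\ graph all writes to a location $x$ lie in one thread, so $\po$ restricted to $\W_x$ is a total order.

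\emph{($\Rightarrow$).} Suppose $\ex \models \MemModel$ with $\MemModel \in \{\sramm,\ramm\}$. By Lemma~\ref{lem:sra-ra-wra-relationship}, $\ex \models \ramm$ and $\ex \models \wramm$. It remains to show that $\mo$ agrees with $\po$. Take $w, w' \in \W_x$ with $w~\po^+~w'$. Since $\mo_x$ is total, either $w~\mo_x~w'$ or $w'~\mo_x~w$. In the latter case $w'~\mo_x~w~\hb~w'$ (as $\po \subseteq \hb$), which contradicts write-coherence $\irr(\mo_x;\hb)$. Hence $\mo_x$ coincides with the restriction of $\po$ to $\W_x$.

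\emph{($\Leftarrow$).} Assume $\mo_x = \po \cap (\W_x\times\W_x)$ for every $x$ and $\ex\models\wramm$. Porf-acyclicity holds directly. We then check the remaining axioms.

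\textbf{Write-coherence.} A violation $w~\mo_x~w'~\hb~w$ gives $w~\po^+~w'~\hb~w$, a cycle in $\hb=(\po\cup\rf)^+$. This contradicts porf-acyclicity.

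\textbf{Read-coherence.} A violation has the form $r \xra{\rf^{-1}} w \xra{\mo_x} w' \xra{\hb} r$. Here $w~\mo_x~w'$ means $w~\po^+~w'$, so $w~\hb~w'~\hb~r$ with $w~\rf~r$. This contradicts weak-read-coherence, since $w'\in\W_x$.

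\textbf{Strong-write-coherence (for $\sramm$).} Every $\mo$ edge is contained in $\po^+$, so $\hb\cup\mo\subseteq\hb$. Any cycle in $\hb\cup\mo$ is therefore a cycle in $(\po\cup\rf)^+$, which is again excluded by porf-acyclicity.

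Combining these gives $\ex\models\ramm$ and $\ex\models\sramm$.

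The main obstacle is small. The one point to state explicitly is the \onewriter\ hypothesis, which makes $\po$ total on $\W_x$; this is needed in the ($\Rightarrow$) direction, where totality of $\mo_x$ is compared against $\po$. The remaining steps are direct unfoldings of the axioms in Table~\ref{table:ax:models-2}.
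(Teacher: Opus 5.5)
Your proof is correct and is essentially the paper's argument: write-coherence forces $\mo_x$ to coincide with $\po$ on $\W_x$, and once $\mo \subseteq \po^+ \subseteq \hb$, write-coherence and strong-write-coherence reduce to porf-acyclicity while read-coherence reduces to weak-read-coherence. The only variation is in the ($\Rightarrow$) direction, where you get weak-read-coherence from Lemma~\ref{lem:sra-ra-wra-relationship} rather than from the paper's claim that the two read-coherence axioms coincide; this sidesteps a step the paper leaves implicit, namely that an $\hb$ edge between two same-thread writes must be a $\po^+$ edge, which itself needs porf-acyclicity.
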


As a consequence, the consistency checking of \onewriter\ under $\sramm$, $\ramm$ and $\wramm$ boils down to checking porf-acyclicity and weak-read-coherence violation.
Secondly, the problem of synthesizing both $\rf$ and $\mo$ boils down to just synthesizing $\rf$, since $\mo$ is fixed to the order that respects $\po$. Nevertheless, there are still exponentially many $\rf$ functions possible. The next lemma entails that there is a ``smallest'' $\rf$ among all $\wramm$-consistent $\rf$s in the $\sqsubseteq$ order on $\rf$s defined in the overview of this section. Given $\rf_1, \rf_2$, we write $\min(\rf_1, \rf_2)$ for the reads-from relation which associates every read $r$ to the $\po$-smaller write between $\rf^{-1}_1(r)$ and $\rf^{-1}_2(r)$.

\begin{restatable}{lemma}{lemmaonewritertwoproperties}
    \label{lem:onewriter-axioms-two-properties}
Let $\rf_1$ and $\rf_2$ be two reads-from relations over a partial \onewriter\ execution graph $\ex=\tuple{\E, \po, \mo}$ s.t. $\mo$ agrees with $\po$. 
\begin{itemize}
\item Suppose $\rf_1 \sqsubseteq \rf_2$. Then, if $\rf_1$ induces a $\po \cup \rf_1$ cycle, then $\rf_2$ induces a $\po \cup \rf_2$ cycle.
\item Suppose both $\rf_1$ and $\rf_2$ satisfy weak-read-coherence. Then, the $\rf$ function $\min(\rf_1, \rf_2)$ also satisfies weak-read-coherence.
\end{itemize} 
\end{restatable}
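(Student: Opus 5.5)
For the first item, I would take a cycle in $\po \cup \rf_1$ and turn it into one in $\po \cup \rf_2$. The idea is to replace each $\rf_1$ edge $w_1 \xra{\rf_1} r$ on the cycle by a path $w_1 \xra{\po^*} w_2 \xra{\rf_2} r$, where $w_2 = \rf_2^{-1}(r)$. This works because $\rf_1 \sqsubseteq \rf_2$ gives $w_1~\po^*~w_2$. The $\po$ edges of the cycle are kept unchanged. The result is a closed walk in $\po \cup \rf_2$ of positive length: it still contains every $\rf$-step, and if the original cycle had only $\po$ edges it would already contradict irreflexivity of $\po^+$. A closed walk of positive length in $\po \cup \rf_2$ means $(\po \cup \rf_2)^+$ is not irreflexive. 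This item is routine.

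For the second item, let $\rf = \min(\rf_1,\rf_2)$ and suppose it violates weak-read-coherence. Then there are a read $r$ on $x$ and writes $w, w'$ with $r \xra{\rf^{-1}} w$, $w \xra{\hb} w'$ and $w' \xra{\hb} r$, where $\hb = (\po \cup \rf)^+$. In the \onewriter\ setting $w \xra{\hb} w'$ reduces to $w~\po^+~w'$: both writes lie in the same thread, so a $\po$-later position is required, assuming porf-acyclicity (or reading the violation as in the paper's discussion). So the real content is a path $\pi$ from $w'$ to $r$ in $\po \cup \rf$, where $w$ is $\po$-before $w'$. Without loss of generality $w = \rf_1^{-1}(r)$, i.e.\ $\rf_1$ attains the minimum at $r$.

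The key step is to transform $\pi$ into a $(\po\cup\rf_1)$ path from some write $w''$ with $w~\po^+~w''$ to $r$, which yields a weak-read-coherence violation for $\rf_1$. I would do this by induction on the number of $\rf$ edges in $\pi$, processing edges from the end of the path. Take the last $\rf$ edge $u \xra{\rf} r'$ on $\pi$ that does not agree with $\rf_1$. Then $u = \rf_2^{-1}(r')$ and $u~\po^+~\rf_1^{-1}(r')$. Now the suffix from $u$ to $r$ is an $\rf_2$-edge followed by a path, and we want to exploit $\rf_2$'s coherence. The cleaner route is to pick a violation that is minimal, namely with the fewest $\rf$ edges on $\pi$ disagreeing with $\rf_1$. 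Consider the first disagreeing edge $u \xra{\rf_2} r'$ along $\pi$. Everything before it is a $(\po\cup\rf_1)$ path from $w'$ to $u$; in fact the prefix reaches $u$, and $u$ is a write $\po$-earlier than $\rf_1^{-1}(r')$. This means $\rf_1(r')$ reads from a write $\po$-before which lies $u$, which is reachable from $w'$. If the prefix from $w'$ to $u$ can be made into a path from some write $\po$-after $\rf_2^{-1}(r')$... Here a case split is natural. Along the prefix, $u$ is reached from $w'$; since $u$ reaches $r'$ in $\hb$ and $u = \rf_2^{-1}(r')$, apply the symmetric minimal-counterexample argument for $\rf_2$ on the remaining suffix edges.

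\textbf{Main obstacle.} I expect the main obstacle to be exactly this path-surgery: showing that mixed $\rf_1/\rf_2$ edges can be separated so that one of $\rf_1$, $\rf_2$ alone exhibits a violation. My first attempt would be a well-founded choice of counterexample, minimizing $(|\pi|$, number of disagreeing edges$)$ lexicographically. For each disagreeing edge I would use the fact that the write chosen by the other relation is $\po$-later, so that a $\po$-detour exists, mirroring the first item, and show that such a detour either shortens the counterexample or directly exhibits a violation of $\rf_1$ or $\rf_2$.
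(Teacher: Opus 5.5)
Your first item is correct and is exactly the paper's argument: each edge $e \xra{\rf_1} f$ on the cycle becomes $e \xra{\po^*} \rf_2^{-1}(f) \xra{\rf_2} f$.

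For the second item there is a genuine gap. What you give is a plan (a lexicographically minimal counterexample, a case split, a ``symmetric argument for $\rf_2$ on the suffix'') that is never carried out. The obstacle it is meant to overcome does not actually exist.

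The missing observation is that $\rf_0 := \min(\rf_1,\rf_2)$ satisfies $\rf_0 \sqsubseteq \rf_1$. So the substitution from your first item applies verbatim to the path $\pi$ from $w'$ to $r$. You already wrote down the needed fact: for an edge $u \xra{\rf_0} r'$ of $\pi$ that is not in $\rf_1$, we have $u~\po^+~\rf_1^{-1}(r')$.

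Replace every such edge by $u \xra{\po^+} \rf_1^{-1}(r') \xra{\rf_1} r'$. These are interior detours, so the endpoints $w'$ and $r$ are unchanged. The replacements are independent of one another, so no induction, ordering of edges, or minimality is needed. The result is a $(\po\cup\rf_1)^+$ path from $w'$ to $r$. Since you chose $\rf_1^{-1}(r)=w$, the cycle $r \xra{\rf_1^{-1}} w \xra{\po^+} w' \xra{\hb_1} r$ is a weak-read-coherence violation for $\rf_1$, a contradiction. This is precisely the paper's proof.

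The coherence of $\rf_2$ plays no role on any part of the path. It is needed only in the symmetric case where $\rf_2$ attains the minimum at $r$, which your WLOG already covers. Trying to invoke $\rf_2$'s coherence on a suffix is a detour: for the fixed read $r$, only $\rf_1$ maps it to $w$. The contradiction therefore has to come from a cycle built entirely in $\po\cup\rf_1$, and the substitution produces that cycle directly.

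A side remark: you can drop your caveat about needing porf-acyclicity to reduce $w \xra{\hb} w'$ to $w~\po^+~w'$. Apply the same substitution also to the $\hb_0$ path from $w$ to $w'$. This transfers the literal violation ($w \xra{\hb_0} w' \xra{\hb_0} r$ with $w \xra{\rf_0} r$) unchanged to $\rf_1$.
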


Suppose $S$ is the set of all $\rf$ that satisfy weak-read-coherence. The second item of Lemma~\ref{lem:onewriter-axioms-two-properties} implies that $S$ contains a unique minimum element, which we denote by $\rf_{\min}$. By the first item of Lemma~\ref{lem:onewriter-axioms-two-properties}, if $\rf_{\min}$ violates porf-acyclicity then every $\rf$ in $S$ also violates it. Therefore, $\wramm$-consistency checking reduces to computing $\rf_{\min}$ and testing it for porf-acyclicity. The algorithm in the next section performs exactly this construction and test. 

\begin{figure}[!tbp]
   \def\ystep{0.5}

\centering
	\scalebox{.7}{
\begin{tikzpicture}[yscale=1]
\node (t01) at (-.2, 1.5*\ystep) {$t_1$};
\node (t02) at (1.5, 1.5*\ystep) {$t_2$};
\node (t03) at (3, 1.5*\ystep) {$t_3$};

  \node (t11) at (-.2, 0*\ystep) {$\wt_1(x,1)$};
    \node (t12) at (1.5, 0*\ystep) {$\rd_1(x,2)$};
    \node (t13) at (3, 0*\ystep) {$\rd_2(y,1)$};

    \node (t21) at (-.2, -2*\ystep) {$\wt_2(x,2)$};
    \node (t22) at (1.5, -2*\ystep) {$\rd_3(x,1)$};    
    
    \node (t23) at (3, -2*\ystep) {$\rd_4(x,2)$};

    \node (t31) at (-.2, -4*\ystep) {$\wt_3(x,1)$};
    \node (t32) at (1.5, -4*\ystep) {$\wt_4(y,1)$};
    
    \node (t41) at (-.2, -6*\ystep) {$\wt_5(x,2)$};
    
    \node (n1) at (1.5,-8*\ystep) {$\rf_0$};
    
    \draw[po] (t11) -- (t21);
    \draw[po] (t21) -- (t31);
    \draw[po] (t31) -- (t41);
    
    \draw[po] (t12) -- (t22);
    \draw[po] (t22) -- (t32);
    \draw[po] (t13) -- (t23);

    \draw[rf] (t11) -- (t22);
    \draw[rf] (t21) -- (t12);
    \draw[rf] (t21) to[out=-25,in=-160] (t23);
    \draw[rf] (t32) -- (t13);
    
    \node (t01) at (5, 1.5*\ystep) {$t_1$};
\node (t02) at (6.5, 1.5*\ystep) {$t_2$};
\node (t03) at (8, 1.5*\ystep) {$t_3$};

  \node (t11) at (5, 0*\ystep) {$\wt_1(x,1)$};
    \node (t12) at (6.5, 0*\ystep) {$\rd_1(x,2)$};
    \node (t13) at (8, 0*\ystep) {$\rd_2(y,1)$};

    \node (t21) at (5, -2*\ystep) {$\wt_2(x,2)$};
    \node (t22) at (6.5, -2*\ystep) {$\rd_3(x,1)$};   
    \node (t23) at (8, -2*\ystep) {$\rd_4(x,2)$};

    \node (t31) at (5, -4*\ystep) {$\wt_3(x,1)$};
    \node (t32) at (6.5, -4*\ystep) {$\wt_4(y,1)$};
    
    \node (t41) at (5, -6*\ystep) {$\wt_5(x,2)$};
    
    \node (n1) at (6.5,-8*\ystep) {$\rf_1$};
    
    \draw[po] (t11) -- (t21);
    \draw[po] (t21) -- (t31);
    \draw[po] (t31) -- (t41);
    
    \draw[po] (t12) -- (t22);
    \draw[po] (t22) -- (t32);
    \draw[po] (t13) -- (t23);

    \draw[rf] (t31) -- (t22);
    \draw[rf] (t21) -- (t12);
    \draw[rf] (t21) to[out=-25,in=-160] (t23);
    \draw[rf] (t32) -- (t13);
    
    
    \node (t01) at (10, 1.5*\ystep) {$t_1$};
\node (t02) at (11.5, 1.5*\ystep) {$t_2$};
\node (t03) at (13, 1.5*\ystep) {$t_3$};

  \node (t11) at (10, 0*\ystep) {$\wt_1(x,1)$};
    \node (t12) at (11.5, 0*\ystep) {$\rd_1(x,2)$};
    \node (t13) at (13, 0*\ystep) {$\rd_2(y,1)$};

    \node (t21) at (10, -2*\ystep) {$\wt_2(x,2)$};
    \node (t22) at (11.5, -2*\ystep) {$\rd_3(x,1)$};   
    \node (t23) at (13, -2*\ystep) {$\rd_4(x,2)$};

    \node (t31) at (10, -4*\ystep) {$\wt_3(x,1)$};
    \node (t32) at (11.5, -4*\ystep) {$\wt_4(y,1)$};
    
    \node (t41) at (10, -6*\ystep) {$\wt_5(x,2)$};
    
    \node (n1) at (11.5,-8*\ystep) {$\rf_2$};
    
    \draw[po] (t11) -- (t21);
    \draw[po] (t21) -- (t31);
    \draw[po] (t31) -- (t41);
    
    \draw[po] (t12) -- (t22);
    \draw[po] (t22) -- (t32);
    \draw[po] (t13) -- (t23);

    \draw[rf] (t31) -- (t22);
    \draw[rf] (t21) -- (t12);
    \draw[rf] (t41) to[out=-15,in=250] (t23);
    \draw[rf] (t32) -- (t13);

     \end{tikzpicture}}

\caption{\footnotesize From left: $\rf_0 \sqsubseteq \rf_1 \sqsubseteq \rf_2$. $\rf_0$ has a violation of \WRCoh{}:~$\rd_3 \xra{\rf_{0}^{-1}} \wt_1 \xra{\po} \wt_2 \xra{\hb_0} \rd_3 $.  $\rf_1$ has a violation of \WRCoh{}:~$\rd_4 \xra{\rf_{1}^{-1}} \wt_2 \xra{\po} \wt_3 \xra{\hb_1} \rd_4 $. $\rf_2$ does not have any more violations of \WRCoh{}. \PORF{} is also satisfied by $\rf_2$. The underlying partial execution graph is $\wramm$-consistent.
}\label{Example-one-writer-without-porfcycle}

\end{figure}

\subsection{Algorithm}
\label{sec:onewriter-algorithm}

Our algorithm starts from an $\rf_0$ which maps every read $r$ to the $\po$-smallest  write in its writer thread, such that the value of the chosen write matches with that of $r$, and in case the writer thread is the same as the one where $r$ is present, the mapped write lies above $r$. This relation $\rf_0$ may not satisfy weak-read-coherence. In that case, a violating cycle is identified and $\rf_0$ is modified to $\rf_1$ as explained below. Assume we have iteratively constructed upto $\rf_i$, and $\rf_i$ does not satisfy weak-read-coherence.

\begin{wrapfigure}[20]{r}{0.4\textwidth}
   \def\ystep{0.5}

\centering
	\scalebox{.7}{
\begin{tikzpicture}[yscale=1]
\node (t01) at (-.2, 1.5*\ystep) {$t_1$};
\node (t02) at (1.5, 1.5*\ystep) {$t_2$};

  \node (t11) at (-.2, 0*\ystep) {$\wt_1(x,1)$};
    \node (t12) at (1.5, 0*\ystep) {$\rd_1(x,2)$};
    
    \node (t21) at (-.2, -2*\ystep) {$\wt_2(x,2)$};
    \node (t22) at (1.5, -2*\ystep) {$\rd_3(x,1)$};    
    
    \node (t31) at (-.2, -4*\ystep) {$\rd_2(y,1)$};
    \node (t32) at (1.5, -4*\ystep) {$\wt_4(y,1)$};
    
    \node (t41) at (-.2, -6*\ystep) {$\wt_5(x,1)$};
    
    \node (n1) at (1.1,-8*\ystep) {$\rf_0$};
    
    \draw[po] (t11) -- (t21);
    \draw[po] (t21) -- (t31);
    \draw[po] (t31) -- (t41);
    
    \draw[po] (t12) -- (t22);
    \draw[po] (t22) -- (t32);
      
    \draw[rf] (t21) to (t12);
    \draw[rf] (t11) -- (t22);
    \draw[rf] (t32) to (t31);
    \node (t01) at (4, 1.5*\ystep) {$t_1$};
\node (t02) at (5.5, 1.5*\ystep) {$t_2$};

  \node (t11) at (4, 0*\ystep) {$\wt_1(x,1)$};
    \node (t12) at (5.5, 0*\ystep) {$\rd_1(x,2)$};
    
    \node (t21) at (4, -2*\ystep) {$\wt_2(x,2)$};
    \node (t22) at (5.5, -2*\ystep) {$\rd_3(x,1)$};   
    
    \node (t31) at (4, -4*\ystep) {$\rd_2(y,1)$};
    \node (t32) at (5.5, -4*\ystep) {$\wt_4(y,1)$};
    
    \node (t41) at (4, -6*\ystep) {$\wt_5(x,1)$};
    
    \node (n1) at (5.2,-8*\ystep) {$\rf_1$};
    
     \draw[po] (t11) -- (t21);
    \draw[po] (t21) -- (t31);
    \draw[po] (t31) -- (t41);
    
    \draw[po] (t12) -- (t22);
    \draw[po] (t22) -- (t32);
    
    \draw[rf] (t21) to (t12);
    \draw[rf] (t41) to[out=30,in=-135] (t22);
    \draw[rf] (t32) to (t31);
    
    \end{tikzpicture}}

\caption{\footnotesize From left: $\rf_0 \sqsubseteq \rf_1$. $\rf_0$ has a violation of \WRCoh{}:~$\rd_3 \xra{\rf_{0}^{-1}} \wt_1 \xra{\po} \wt_2 \xra{\hb} \rd_3 $.  $\rf_1$ has no \WRCoh{} violation but does not satisfy \PORF{}:~$\wt_5 \xra{\rf_{1}} \rd_3 \xra{\po} \wt_4 \xra{\rf} \rd_2 \xra{\po} \wt_5$. The partial execution graph is not $\wramm$-consistent.
}\label{Example-one-writer-with-porfcycle}  
\end{wrapfigure}
\noindent \emph{From $\rf_i$ to $\rf_{i+1}$.} Pick a violating cycle for weak-read-coherence:
\[ r_i \xra{\rf_{i}^{-1}} w_i \xra{\hb_i} w'_i \xra{\hb_i} r_i \]
where all three events $r_i, w_i, w'_i$ are over the same variable, say $x_i$, and $\hb_i$ is $(\po \cup \rf_i)^+$. Recall that since we are in the \onewriter\ case, the intermediate $\hb_i$ can be replaced with $\po^+$. Modify $\rf_i$ to eliminate the violating cycle: choose a write for $r_i$ which appears on or after $w'_i$, that is, pick $w''_i$ such that $w'_i~\po^* w''_i$ and $w''_i$ is a write on the same variable $x_i$, writing the same value as the read event $r_i$. If no such write exists, conclude that there is no $\rf$ for $\expartial$ that satisfies weak-read-coherence. Else, let the resulting $\rf$ be called $\rf_{i+1}$. If $\rf_{i+1}$ does not satisfy weak-read-coherence, we update $\rf_{i+1}$ as above. Else, $\rf_{i+1}$ satisfies weak-read-coherence. We declare $\rf_{\min} = \rf_{i+1}$.

Finally, we test whether $\rf_{\min}$ satisfies porf-acyclicity. If it does, we have a witness for $\wramm$-consistency. Else, we conclude (due to Lemma~\ref{lem:onewriter-axioms-two-properties}) that $\expartial$ is not $\wramm$-consistent. 
We formalize this discussion in Algorithm~\ref{algo:one-writer-wra} in Appendix~\ref{app:algorithm}. Here is the key invariant maintained by the algorithm: at each step $i$ where the update function is called on $\rf_i$,  we can show that $\rf_i \sqsubseteq \rf^{\dagger}$ for every $\rf^\dagger$ that satisfies weak-read-coherence. Examples~\ref{Example-one-writer-without-porfcycle} and ~\ref{Example-one-writer-with-porfcycle} illustrate the running of the algorithm on two different partial execution graphs. 

\begin{restatable}{theorem}{thmwracorrectness}\label{th-wra-correctness}
A partial execution graph $\expartial = \tuple{\E, \po}$ of a one-writer system is consistent for $\wramm$ iff the final $\rf$ computed by the procedure discussed above satisfies weak-read-coherence and porf-acyclicity. Moreover, the algorithm runs in time $\mathcal{O}(n^3)$ where $n = |E|$ and returns the smallest $\rf$  that is $\wramm$-consistent, if one exists.
\end{restatable}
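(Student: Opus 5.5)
The proof rests on the invariant announced before the theorem: every $\rf_i$ produced by the procedure satisfies $\rf_i \sqsubseteq \rf^\dagger$ for every $\rf^\dagger$ that satisfies weak-read-coherence. I would establish this by induction on $i$.

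\emph{Base case.} Any admissible $\rf^\dagger$ maps each read $r$ to a write on the same variable, with the same value, in the unique writer thread of that variable. If the writer is $r$'s own thread, porf/hb forces the write to lie $\po$-above $r$. Hence $\rf_0^{-1}(r)\,\po^*\,\rf^{\dagger-1}(r)$.

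Strictly, the same-thread constraint comes from porf-acyclicity, not weak-read-coherence. I would handle this by quantifying the invariant over reads-from relations satisfying weak-read-coherence together with this local sanity condition. Alternatively, I would note that a same-thread write $\po$-below $r$ gives the cycle $r\,\po^+\,w\,\rf\,r$, so such an $\rf$ is never $\wramm$-consistent.

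\emph{Inductive step.} Suppose $\rf_i \sqsubseteq \rf^\dagger$ and the algorithm finds a violation $r_i \xra{\rf_i^{-1}} w_i \xra{\po^+} w'_i \xra{\hb_i} r_i$. The key monotonicity fact is that $\hb_i \subseteq \hb^\dagger$ in the following sense. In a one-writer graph, an $\rf_i$ edge $w\to r$ can be replaced by $w\,\po^*\,\rf^{\dagger-1}(r) \xra{\rf^\dagger} r$, since both writes lie in the same thread and the second is $\po$-later. Hence every $\hb_i$ path from a write to $r_i$ yields an $\hb^\dagger$ path from that write to $r_i$; this is the same argument as the first item of Lemma~\ref{lem:onewriter-axioms-two-properties}.

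Thus $w'_i \xra{\hb^\dagger} r_i$. If $\rf^{\dagger-1}(r_i)$ were $\po$-strictly before $w'_i$, then $\rf^\dagger$ would itself violate weak-read-coherence. So $w'_i\,\po^*\,\rf^{\dagger-1}(r_i)$. Choosing $w''_i$ as the $\po$-first write after or at $w'_i$ with matching variable and value then preserves the invariant; I would fix this choice in the algorithm. The same argument shows that if no such $w''_i$ exists, no $\rf^\dagger$ satisfies weak-read-coherence.

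\emph{Termination and correctness.} Each update strictly advances one read's write in $\po$, so there are at most $n^2$ updates. On termination, the final $\rf$ satisfies weak-read-coherence and lies below every such relation, so it is $\rf_{\min}$. By Lemma~\ref{lem:one-writer-reduces-to-wra} and the first item of Lemma~\ref{lem:onewriter-axioms-two-properties}, $\expartial$ is $\wramm$-consistent iff $\rf_{\min}$ is porf-acyclic. If it is, $\rf_{\min}$ is the smallest $\wramm$-consistent $\rf$, since every consistent $\rf$ satisfies weak-read-coherence.

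\emph{Complexity.} This is the step I expect to be the main obstacle. The naive bound is $n^2$ updates, each requiring an $\hb$ recomputation, which is far above $\mathcal{O}(n^3)$. I would instead process reads in a fixed-point style. Each read's pointer moves monotonically along at most $n$ positions, for $\mathcal{O}(n^2)$ total moves. I would argue that violation detection for a read needs only the $\po$-latest write on $x$ that $\hb$-reaches it, maintained via per-thread vector clocks. Because $\hb$ only grows as the pointers advance, these clocks can be updated incrementally with amortized $\mathcal{O}(n)$ work per pointer move, giving $\mathcal{O}(n^3)$ overall. The final porf-acyclicity test takes linear time on a graph with $\mathcal{O}(n)$ edges.
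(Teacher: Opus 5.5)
Your correctness argument is the paper's. It uses the same invariant $\rf_i \sqsubseteq \rf^\dagger$ proved by induction. It transfers the $\hb_i$ path from $w'_i$ to $r_i$ into an $\hb^\dagger$ path by sliding each $\rf_i$-source $\po$-forward, picks the $\po$-first matching write at or after $w'_i$, and ends with the first item of Lemma~\ref{lem:onewriter-axioms-two-properties}.

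Your ``local sanity'' restriction is not needed for the base case, since a write $\po$-after $r$ is also $\po$-after $\rf_0^{-1}(r)$. It is, however, what justifies the same-thread failure exits of $\Initialize$ and $\update$, which the paper passes over.

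You genuinely depart from the paper on the running time, and your route is the one that supports the claimed bound. The paper bounds the number of calls to $\update$ by $n$, because each call moves some read's write $\po$-later. That only bounds the moves per read. Take one thread with $k$ copies of $\wt(x,1)$ followed by $\wt(y,1)$, and another with $\rd(y,1)$ followed by $m$ copies of $\rd(x,1)$. Always picking $w'_i$ as the next write gives $m(k-1)=\Theta(n^2)$ updates, so the paper's $\mathcal{O}(n^2)$-per-check accounting only yields $\mathcal{O}(n^4)$.

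Your incremental clocks repair this, but state the bound globally rather than as ``amortized $\mathcal{O}(n)$ per move'', which is not accurate as stated. Let $C[e][t]$ be the $\po$-latest event of thread $t$ that $\hb$-reaches $e$.
\begin{itemize}
\item Replacing $w \xra{\rf} r$ by $w'' \xra{\rf} r$ with $w\,\po^+\,w''$ never lowers any clock.
\item Each entry $C[e][t]$ increases at most $|t|$ times, so there are at most $\sum_e\sum_t |t| = n^2$ increases.
\item Each increase is propagated along at most $n+1$ out-edges, giving $\mathcal{O}(n^3)$ total propagation.
\item Each of the $\mathcal{O}(n^2)$ pointer moves adds an $\mathcal{O}(n)$ join at the read.
\item A violation test for $r$ costs $\mathcal{O}(1)$ per clock change: compare $C[r][t_x]$ with the precomputed next write on $x$ after $\rf^{-1}(r)$.
\end{itemize}
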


The major contributor to the complexity of Algorithm~\ref{algo:one-writer-wra} is verification of whether a given $\rf$ satisfies weak-read-coherence. This can be posed as a graph problem: there are $n$ nodes corresponding to the events, and there are three kinds of edges $\po$, $\rf$ and $\rf^{-1}$; does there exist a cycle of the form $r \xra{\rf^{-1}} w \xra{\po^+} w' \xra{(\po \cup \rf)^*} r$?  For each read event $r$ and its corresponding write event $w$, we need to check if there exist a $w'$ such that (1) there is a path of $\po$ edges from $w$ to $w'$, and (2) there is a path from $w'$ to $r$ over a combination of $\po$ and $\rf$ edges. For each read, this can be achieved by in $\mathcal{O}(n)$ time by a graph search. Hence, overall, across all possible reads, this algorithm takes $\mathcal{O}(n^2)$ time. Therefore, each update to an $\rf$ costs $\mathcal{O}(n^2)$ and there can be at most $n$ updates (as each update moves the write of a read to a $\po$-later write). This gives a complexity of $\mathcal{O}(n^3)$.  As our next observation, we prove a conditional lower bound of $\mathcal{O}(n^{\frac{3}{2} - \epsilon})$ on the time to check for weak-read-coherence in a one-writer program with $n$ events. 
This also transfers as a lower bound for the $\wramm$-consistency checking. Due to Lemma~\ref{lem:one-writer-reduces-to-wra}, it also gives a lower bound on the consistency checking for $\sramm$ and $\ramm$ as well. To show the lower bound, we make use of the Boolean Matrix Multiplication hypothesis and a lower bound result on finding triangles in graphs. 

\paragraph*{Boolean Matrix Multiplication (BMM) hypothesis.} There is no $\mathcal{O}(n^{3-\epsilon})$ algorithm for Boolean Matrix Multiplication, for any $\epsilon > 0$.

\begin{theorem}[Theorem~1.3 of \cite{WilliamsW18}]
  Under BMM hypothesis, there is no $\mathcal{O}(n^{3 - \epsilon})$ algorithm for detecting a triangle in an (undirected) graph with $n$ vertices, for any $\epsilon > 0$. 
\end{theorem}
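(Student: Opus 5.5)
This is the subcubic reduction from Boolean matrix multiplication to triangle detection due to Vassilevska Williams and Williams, and I would reproduce its argument. We argue contrapositively. Suppose some algorithm detects a triangle in an $s$-vertex graph in time $T(s)=\mathcal{O}(s^{3-\epsilon})$. From it we build an $\mathcal{O}(n^{3-\epsilon'})$ algorithm for Boolean matrix multiplication, which contradicts the BMM hypothesis.

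\textbf{Encoding the product as triangles.} Given $n\times n$ Boolean matrices $A,B$, we want $C=AB$. Build a tripartite graph on vertex sets $I,K,J$, each a copy of $[n]$. Add an edge $\{i,k\}$ iff $A[i][k]=1$, and an edge $\{k,j\}$ iff $B[k][j]=1$. Edges between $I$ and $J$ form a set $U$, initially all pairs; $U$ records the pairs $(i,j)$ not yet known to satisfy $C[i][j]=1$. Then $C[i][j]=1$ iff the $I$--$J$ edge $\{i,j\}$ lies on a triangle. Initialize $C$ to $0$ everywhere.

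\textbf{Reducing all-pairs to detection.} Partition each of $I,K,J$ into $n/s$ blocks of size $s$. For each triple of blocks $(I_a,K_b,J_c)$, take the induced subgraph with the current $U$ and repeat the following.
\begin{itemize}
\item Run the detector on this $3s$-vertex subgraph.
\item If it reports no triangle, move on to the next triple.
\item If it reports a triangle, extract an explicit one by self-reduction: split one part in half, recurse on a half that still contains a triangle, and iterate. This costs $\mathcal{O}(\log s)$ further detector calls, each on a graph of at most $3s$ vertices.
\item For the triangle found, with $I$--$J$ edge $(i,j)$, set $C[i][j]=1$ and delete $(i,j)$ from $U$ globally.
\end{itemize}
Correctness: a pair $(i,j)$ with $C[i][j]=1$ has a witness $k$ in some block $K_b$. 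The loop for the triple containing $i$, $j$ and that block does not terminate while $(i,j)\in U$, so $(i,j)$ is eventually recorded. Conversely, only genuine witnesses set an entry to $1$.

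\textbf{Running time.} This accounting is the step needing the most care. Each successful detection deletes a distinct pair from $U$, so there are at most $n^2$ successes. Each success costs $\mathcal{O}(T(s)\log s)$. Each of the $(n/s)^3$ block triples ends with exactly one failing call, costing $T(s)$. The total is
\[
\mathcal{O}\bigl(n^2 s^{3-\epsilon}\log s + (n/s)^3 s^{3-\epsilon}\bigr) = \mathcal{O}\bigl(n^2 s^{3-\epsilon}\log n + n^3 s^{-\epsilon}\bigr).
\]
Choosing $s=n^{1/3}$ gives $\mathcal{O}(n^{3-\epsilon/3}\log n)$, which is truly subcubic. This contradicts the BMM hypothesis.

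The main obstacle I anticipate is the interaction between the global deletions from $U$ and the per-triple loops. One must check that deleting $(i,j)$ never hides a still-unrecorded pair, which holds because the deleted pair is itself recorded. One must also check that each triple's loop ends after exactly one failing call, which holds because the loop exits on its first failure. A further subtlety is ensuring the self-reduction step returns a triangle whose $I$--$J$ edge is still in $U$. This is immediate, since only edges of $U$ are present in the subgraph.
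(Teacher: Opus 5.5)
Your proof is correct and is essentially the block-partitioning reduction of Vassilevska Williams and Williams that the paper cites without reproving: the tripartite encoding, global deletion of found $I$--$J$ edges, at most $n^2 + (n/s)^3$ detector calls, and $s = n^{1/3}$, giving $\mathcal{O}(n^{3-\epsilon/3}\log n)$ for BMM. Your reduction uses only oracle calls and block manipulations, so it is combinatorial, which is the setting in which the BMM hypothesis is meant (it is false for algebraic algorithms such as Strassen's).
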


We reduce triangle detection in a graph with $n$ vertices to weak-read-coherence checking in a \onewriter\ execution graph with $n^2$ events. This results in the following proposition. 

\begin{proposition}
    \label{prop:wra-onewriter-lowerbound}
Under BMM, for any $\epsilon > 0$, there is no $\mathcal{O}(n^{\frac{3}{2} - \epsilon})$ algorithm for deciding if a given \onewriter\ execution graph $\ex = \tuple{\E, \po, \rf, \mo}$ satisfies the weak-read-coherence axiom. 
\end{proposition}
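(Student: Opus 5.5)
We reduce triangle detection to weak-read-coherence checking. Given an undirected graph $G=(V,E_G)$ with $|V|=n$, we build in time $\mathcal{O}(n^2)$ a \onewriter\ execution graph $\ex_G$ with $N=\mathcal{O}(n^2)$ events. It will violate weak-read-coherence iff $G$ has a triangle. An $\mathcal{O}(N^{\frac{3}{2}-\epsilon})$ checker would then detect triangles in time $\mathcal{O}(n^{3-2\epsilon})$, contradicting Theorem~1.3 of \cite{WilliamsW18} under BMM.

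\emph{Construction.} The idea is a three-layer graph in which $\hb$ can travel from layer $A$ to layer $B$ to layer $C$ and nowhere else. For every vertex $v$ we create three threads $A_v, B_v, C_v$ and use the following variables:
\begin{itemize}
\item $x_v$, written only by $A_v$;
\item $y_{a,b}$ for each ordered pair with $\{a,b\}\in E_G$, written only by $A_a$;
\item $z_{b,c}$ for each ordered pair with $\{b,c\}\in E_G$, written only by $B_b$.
\end{itemize}
So every variable has exactly one writer thread, and $\mo$ is taken to agree with $\po$.

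The threads are as follows.
\begin{itemize}
\item Thread $A_a$ executes $\wt(x_a,0)$, then $\wt(x_a,1)$, then $\wt(y_{a,b},1)$ for each neighbour $b$ of $a$.
\item Thread $B_b$ first executes $\rd(y_{a,b},1)$ for every neighbour $a$, with $\rf$ from the unique write of $y_{a,b}$. It then executes $\wt(z_{b,c},1)$ for every neighbour $c$.
\item Thread $C_c$ first executes $\rd(z_{b,c},1)$ for every neighbour $b$, with $\rf$ from the unique write. It then executes $\rd(x_a,0)$ for every neighbour $a$, with $\rf$ from $\wt(x_a,0)$.
\end{itemize}
Each thread has $\mathcal{O}(n)$ events, so $N=\mathcal{O}(n^2)$, and the construction is clearly linear in its output.

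\emph{Correctness.} Every $\rf$ edge goes from an $A$-thread to a $B$-thread or from a $B$-thread to a $C$-thread. Since $\po$ stays inside a thread, any $\hb$ path crosses at most one $A\to B$ and one $B\to C$ edge, in that order.

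A weak-read-coherence violation needs a read $r$ and writes $w \xra{\hb} w' \xra{\hb} r$ on the same variable, with $r$ reading from $w$. Reads of $y_{a,b}$ and $z_{b,c}$ cannot be involved, since those variables have a single write. So a violation must take the form
\[
r=\rd(x_a,0)\ \text{in } C_c,\qquad w=\wt(x_a,0),\qquad w'=\wt(x_a,1),\qquad w' \xra{\hb} r .
\]
The read $r$ exists only if $\{a,c\}\in E_G$. An $\hb$ path from $\wt(x_a,1)$ into $C_c$ must go through some $B_b$. Because the $\rf$ edges run $A_a\to B_b$ and $B_b\to C_c$, this requires $\{a,b\}\in E_G$ and $\{b,c\}\in E_G$. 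Note that it does not matter which particular write in $A_a$ or read in $B_b$ is used, since reads precede writes in $B_b$ and $C_c$. Hence a violation exists iff $a,b,c$ form a triangle, and the three vertices are distinct because $G$ has no self-loops. Conversely, any triangle $(a,b,c)$ yields the path
\[
\wt(x_a,1) \xra{\po} \wt(y_{a,b},1) \xra{\rf} \rd(y_{a,b},1) \xra{\po} \wt(z_{b,c},1) \xra{\rf} \rd(z_{b,c},1) \xra{\po} \rd(x_a,0),
\]
which gives a violation.

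\emph{Main obstacle.} The main difficulty is ruling out spurious $\hb$ paths. Since $\po$ inside a thread merges all incoming causality, the layering has to ensure that this merging only ever composes exactly one edge of each adjacency step. Placing all reads before all writes in $B$- and $C$-threads, and allowing no $\rf$ edges into $A$-threads, is what guarantees this. It also keeps $\po\cup\rf$ acyclic, although porf-acyclicity is not needed for the statement.
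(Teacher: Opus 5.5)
Your reduction is essentially the paper's: per-vertex layered threads whose $\rf$ edges simulate adjacency, with the violation $\wt(x_a,0)\xra{\po}\wt(x_a,1)\xra{\hb}\rd(x_a,0)$ detecting a closed walk of length three. The only difference is that you encode the closing edge $\{c,a\}$ by placing $\rd(x_a,0)$ in $C_c$ for every neighbour $a$, whereas the paper adds a fourth layer $t_{d_v}$, reached by a third $\rf$ hop, that holds a single $\rd(a_v,0)$. One slip: the reads $\rd(x_a,0)$ create $\rf$ edges directly from $A$-threads to $C$-threads, so your claim that every $\rf$ edge goes $A\to B$ or $B\to C$ is false. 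This is harmless, because those edges leave from $\wt(x_{a'},0)$, which is unreachable from $\wt(x_a,1)$: for $a'=a$ it is $\po$-earlier, and for $a'\neq a$ it lies in another $A$-thread, and $A$-threads have no incoming $\rf$.
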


Given graph $G = (V_G,\;E_G)$ with $n$ vertices, we build an execution graph $\expartial_G = \tuple{\E, \po}$ of a single-writer program with $n^2$ events, such that $\expartial$ is consistent w.r.t. $\MemModel$ for iff $G$ has no triangles. Therefore, if the consistency problem can be solved in $\mathcal{O}(n^{\frac{3}{2}\;-\;\epsilon})$ for some $\epsilon$, 
we can solve triangle detection on $n$-node graphs in $\mathcal{O}((n^2)^{\frac{3}{2}\;-\;\epsilon}) = \mathcal{O}(n^{3 - \frac{\epsilon}{2}})$,  contradicting the BMM hypothesis.

\begin{wrapfigure}[]{r}{0.55\textwidth}
  \centering
  \resizebox{0.5\textwidth}{!}{
  \begin{tikzpicture}
    \node (x1) at (0,0) {\scriptsize $x$};
    \node (y) at (2, 0) {\scriptsize $y$};
    \node (z) at (4, 0) {\scriptsize $z$};
    \node (x2) at (6, 0) {\scriptsize $x$};
    \draw [thin] (x1) to (y) to (z) to (x2);

    \begin{scope}[yshift=-0.6cm]
    \node (a) at (0,0) {\scriptsize $t_{a_x}$};
    \node (b) at (2, 0) {\scriptsize $t_{b_y}$};
    \node (c) at (4, 0) {\scriptsize $t_{c_z}$};
    \node (d) at (6, 0) {\scriptsize $t_{d_x}$};

    \node [right] at (-0.5, -0.5) {\scriptsize $\wt(a_x, 0)$};
    \node [right] at (-0.5, -0.8) {\scriptsize $\wt(a_x, 1)$};
    \node [right] (1) at (-0.5, -1.1) {\scriptsize $\wt(a_{xy}, 0)$};

    \node [right] (2) at (1.5, -0.5) {\scriptsize $\rd(a_{xy}, 0)$};
    \node [right] (3) at (1.5, -1.1) {\scriptsize $\wt(b_{yz}, 0)$};

    \draw [blue, ->, >=stealth] (1) to  (2);

    \node [right] (4) at (3.5, -0.5) {\scriptsize $\rd(b_{yz}, 0)$};
    \node [right] (5) at (3.5, -1.1) {\scriptsize $\wt(c_{zx}, 0)$};
    \draw [blue, ->, >=stealth] (3) to  (4);

    \node [right] (6) at (5.5, -0.5) {\scriptsize $\rd(c_{zx}, 0)$};
    \node [right] (7) at (5.5, -1.1) {\scriptsize $\rd(a_x, 0)$};
    \draw [blue, ->, >=stealth] (5) to  (6);

    \end{scope}
  \end{tikzpicture}
  }  
  \caption{From triangles to violations of weak-read-coherence} 
  \label{fig:super-linear-illustration}
\end{wrapfigure}
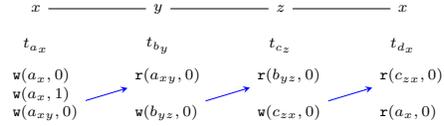
Our idea for the reduction is as follows. We will construct $\expartial$ in such a way that for every $\rd(x, v)$ there is a unique $\wt(x, v)$ present in the entire program. Therefore, the $\rf$ relation is fixed. If $G$ has a triangle, we will have a shared variable $a$ and a cycle of the form: $\wt(a, 0) \xrightarrow{\po} \wt(a, 1) \xrightarrow{\hb} \rd(a, 0) \xrightarrow{\rf^{-1}} \wt(a, 0)$. This cycle violates weak-read-coherence.
If $\expartial$ with the fixed $\rf$ violates  
\WRCoh{}, then $G$ has a triangle.  The $\hb$ path $\wt(a, 1) \xrightarrow{\hb} \rd(a, 0)$ should reflect the presence of a triangle in $G$. 

The key aspect of the construction is depicted in Figure~\ref{fig:super-linear-illustration}. Suppose $G$ contains a triangle over vertices $x, y, z$. We consider a variable $a_x$ for which we extract an $\hb$ path as above. To force the $\hb$ path to have length $3$, we consider four copies $t_{a_x}, t_{b_x}, t_{c_x}, t_{d_x}$ for each vertex $x$. When there is edge $(x, y)$, thread $t_{a_x}$ writes to $t_{b_y}$. Similarly, $(y, z)$ makes $t_{b_y}$ write to $t_{c_z}$ and finally the edge $(z, x)$ makes $t_{c_z}$ write to $t_{d_x}$ which contains $\rd(a_x, 0)$. The full construction is explained in Appendix~\ref{app:super-linear-lower-bound}.

\section{Hardness in the presence of multiple writers}
\label{sec:hardness}

While having a single writer per location induced a polynomial-time algorithm for the consistency check, we show that having two writers per thread makes the problem $\NP$-complete. The $\NP$ upper bound follows by guessing an $\rf$ and $\mo$ and then checking whether the complete execution graph $\ex = \tuple{\E, \po, \rf, \mo}$ satisfies the consistency axioms. When $\rf$ and $\mo$ are already given, this test can be done in polynomial-time~\cite{Tunc2023}. Hence the $\NP$ upper bound follows. The hardness for our $\vm$ problem comes from the fact that we need to \emph{synthesize} suitable $\rf$ and $\mo$ (the analogue in the sequential consistency setting would be the synthesis of an interleaving). For proving hardness, we provide a reduction from 3-CNF-SAT: given a boolean formula $\varphi$ over $k$ variables and $m$ clauses, with each clause being a disjunction of three literals, is there a satisfying assignment for $\varphi$?  We provide two reductions.
\begin{enumerate}
\item Formula $\varphi$ to a partial execution graph $\expartial_\varphi$ with $\mathcal{O}(k)$ threads, and containing at most \emph{three threads} writing to a location (\threewriter). 
\item Formula $\varphi$ to a partial execution graph $\ypartial_\varphi$ with $\mathcal{O}(k + m)$ threads, and containing at most \emph{two threads} writing to a location (\twowriter). 
\end{enumerate}

In both the reductions, we achieve the property that $\varphi$ is satisfiable iff $\expartial_\varphi$ and $\ypartial_\varphi$ are $\MemModel$-consistent, for $\MemModel \in \{\sramm, \ramm, \wramm\}$. This shows that the consistency checking problem is $\NP$-complete with a bounded number of writers per location, in fact, even with two writers. The reason we present the first reduction to \threewriter\ is because we get a stronger conditional bound: since the number of threads in $\expartial_\varphi$ is linearly related to the number of variables in $\varphi$, we get a parameterized reduction that helps proving a conditional lower bound under the Exponential Time Hypothesis (ETH). In the reduction to \twowriter\ systems, the number of threads depends both on $k$ (number of variables in the formula) and $m$ (the number of clauses). Hence this reduction cannot be used for the conditional lower bound. 
We will first describe the reduction to three writers and its implications, and then follow up with the second reduction. 

\paragraph*{Notation.} Let the variables used by $\varphi$ be $x_1, \dots, x_k$. The clauses are $C_1, \dots, C_m$. Each clause $C_j$ consists of three literals $(\ell_j^1, \ell_j^2, \ell_j^3)$, where each literal is either a variable $x_i$ or its negation $\neg x_i$. We define $S_{x_i} := \{j \mid C_j \text{ contains } x_i\}$ and $S_{\neg x_i} := \{ j \mid C_j \text{ contains } \neg x_i\}$ as the set of (indices of) clauses containing the literals $x_i$ and $\neg x_i$ respectively. 

\subsection{Reduction to \threewriter\ systems}
\label{sec:hardness-threewriter}
We consider execution graphs where for each variable $x$ there are at most three threads writing to $x$, that is, at most three threads can have statements of the form $\wt(x, d)$. We will call them \threewriter\ execution graphs. 
Our objective is to show a hardness result stronger than $\NP$-hardness, for the $\vm$ problem under the assumption stated below.

\noindent \textbf{Exponential Time Hypothesis (ETH)}. There is no $2^{o(k)}\cdot (k+m)^{\mathcal{O}(1)}$ algorithm for 3-CNF-SAT, where $k$ is the number of variables, $m$ is the number of clauses~\cite{ImpagliazzoPZ01,Lokshtanov:ETH:2011}.

We prove the following theorem in this section. 

\begin{theorem}\label{thm:eth-hardness}
    Under ETH, there is no $2^{o(k)} \cdot n^{\mathcal{O}(1)}$ algorithm for the $\vm$-problem with $\MemModel \in \{ \sramm, \ramm, \wramm\}$, $k$ being the number of threads, and $n$ the number of events. The result holds even while restricting the input instance to \threewriter\ execution graphs.
\end{theorem}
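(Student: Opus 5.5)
The plan is a parameterized polynomial-time reduction from 3-CNF-SAT. Given $\varphi$ with $k$ variables and $m$ clauses, we build a \threewriter\ partial execution graph $\expartial_\varphi$ with $\mathcal{O}(k)$ threads and $\mathcal{O}(k+m)^{\mathcal{O}(1)}$ events. We then show that $\varphi$ is satisfiable iff $\expartial_\varphi \models \MemModel$, for each $\MemModel \in \{\sramm,\ramm,\wramm\}$. The lower bound then follows directly. Suppose an algorithm ran in $2^{o(k')}\cdot n^{\mathcal{O}(1)}$ with $k' = \mathcal{O}(k)$ threads and $n = (k+m)^{\mathcal{O}(1)}$ events. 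Composing it with the reduction would decide 3-CNF-SAT in $2^{o(k)}\cdot (k+m)^{\mathcal{O}(1)}$, contradicting ETH. It is standard (sparsification) that we may take $m=\mathcal{O}(k)$, but this is not needed since only $k$ sits in the exponent.

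\textbf{Construction.} For each variable $x_i$ we use a constant number of threads: an \emph{assignment} thread $A_i$ and helper threads. $A_i$ contains a read of a choice location $v_i$. Two different writer threads write distinct values $\mathtt{T}$ and $\mathtt{F}$ to $v_i$, so the choice of $\rf$ for that read encodes the truth value of $x_i$. The read is then $\po$-followed in $A_i$ by writes that ``publish'' the literal. For each clause $C_j$ we introduce a location $c_j$ with a read $\rd(c_j, 1)$ placed in some thread. The legitimate writers of value $1$ to $c_j$ are arranged among the threads of the (at most three) variables occurring in $C_j$. Each literal-thread for $\ell_j^p$ writes $\wt(c_j,1)$ sequenced so that this write is usable only if the literal is true. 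If the literal is false, the write is either made stale by weak-read-coherence, because a later same-location write $\hb$-precedes the read, or would create a $\po\cup\rf$ cycle. Since each clause has three literals, $c_j$ has at most three writer threads, which gives \threewriter. Because all clause events for $x_i$ live on the threads of $x_i$, the thread count is $\mathcal{O}(k)$, while events grow with $m$.

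\textbf{Correctness.} For the forward direction, from a satisfying assignment we define $\rf$ by letting each $v_i$-read take the chosen value. Each clause read takes its value from the write of a true literal. We take $\mo$ to be a linear extension of $\hb$ restricted to each location. Choosing $\mo$ compatible with a single global topological order of $\po\cup\rf$ ensures that $\hb\cup\mo$ is acyclic, hence strong-write-coherence holds. We then check porf-acyclicity and read-coherence. Consistency for $\sramm$ then gives $\ramm$ and $\wramm$ by Lemma~\ref{lem:sra-ra-wra-relationship}. For the converse we start from the weakest model: given a $\wramm$-consistent $\rf,\mo$, we read off an assignment from the $v_i$-reads. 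We then argue, using only porf-acyclicity and weak-read-coherence, that each clause read must take its value from the write of a true literal. This yields satisfiability, and again Lemma~\ref{lem:sra-ra-wra-relationship} covers all three models at once.

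\textbf{Main obstacle.} The delicate step is the gadget design ensuring that a false literal's write to $c_j$ is genuinely forbidden under the weak axioms ($\wramm$ only has porf-acyclicity and weak-read-coherence, which is quite permissive). At the same time it must not over-constrain the forward direction under $\sramm$'s global $\hb\cup\mo$ acyclicity. I would first try to kill false literals via weak-read-coherence: in $A_i$, after the choice read, place a ``blocking'' write $\wt(c_j,0)$ that is $\po$-after $\wt(c_j,1)$ in the false branch. The $\hb$ path from the read of $v_i$ then forces this blocking write to happen-before $\rd(c_j,1)$. Making this work requires that the branch be selected by which writer the $v_i$-read observes. I would implement this by putting the literal writes in the two $v_i$-writer threads themselves, so the $\rf$ edge creates the needed $\hb$ path. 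Finally, I would check that the clause read's thread has no accidental $\hb$ edges that create spurious cycles when all literals are true.
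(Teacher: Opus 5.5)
Your outer framework matches the paper's: a reduction from 3-CNF-SAT to a \threewriter\ graph with $\mathcal{O}(k)$ threads, plus the range argument via Lemma~\ref{lem:sra-ra-wra-relationship} (satisfiable $\Rightarrow$ $\sramm$-consistent, $\wramm$-consistent $\Rightarrow$ satisfiable). But the gadget, which is the whole content of the theorem, is missing, and the one mechanism you commit to cannot work.

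\emph{The choice gadget.} In a partial execution graph every read has a fixed value, and $\rf$ may only pair $\rd(v_i,d)$ with writes $\wt(v_i,d)$. So two writers of distinct values $\mathtt{T},\mathtt{F}$ leave the read of $v_i$ no choice at all. All nondeterminism must come from several writes of the \emph{same} value to one location.

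\emph{The blocking gadget.} A straight-line trace has no ``false branch''. Whether a blocker $\wt(c_j,0)$ is $\hb$-before $\rd(c_j,1)$ depends only on which threads lie in the read's $\hb$-past. If both the literal write and its blocker sit in $A_i$, this does not depend on which writer $A_i$'s $v_i$-read chose. Blockers in $A_i$ would also add extra writer threads to $c_j$.

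\emph{The forward direction.} An arbitrary per-location linear extension of $\hb$ gives strong-write-coherence but not read-coherence. You must make each read's source $\mo$-last among the same-location writes in its $\hb$-past. This is why the paper fixes specific $\mo$ orders on the $s_i$ and $c_j$ writes.

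The paper's key idea is an exclusivity check at the end rather than a choice at the start. $T_{x_i}$ and $T_{\neg x_i}$ are both always available, since $\rd(v_i,1)$ and $\rd(v_i,0)$ have unique writers. After its clause reads, $T_f$ performs $\rd(s_i,2);\rd(s_i,1)$ against two threads $T_i^1,T_i^0$ that each write $\wt(s_i,1);\wt(s_i,2)$. If clause reads used writes from both $T_{x_i}$ and $T_{\neg x_i}$, both $T_i^1$ and $T_i^0$ would be in the $\hb$-past of $\rd(s_i,1)$. Weak-read-coherence would then make both $\wt(s_i,1)$ stale.

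Your last sentence does point to a workable alternative, but only once made precise:
\begin{itemize}
\item both $v_i$-writers write the same value;
\item each writer thread holds the pairs $\wt(c_j,1);\wt(c_j,0)$ for exactly the literals falsified when $A_i$ reads from it;
\item $A_i$ then writes a flag that the clause-reading thread reads before its first clause read;
\item the pairs are ordered by clause index, consistently with the clause reads, so a true literal's write is never blocked by an earlier read from the same thread.
\end{itemize}
Then weak-read-coherence blocks exactly the false literals, and an explicit sequentially consistent schedule gives the $\sramm$ witness. None of these ingredients is in the proposal as written, so the proof is incomplete.
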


\paragraph*{Proof strategy.} Given a 3-CNF formula $\varphi$ with $k$ threads and $m$ clauses, we construct a partial execution graph $\expartial_\varphi = \tuple{\E, \po}$ that satisfies the following properties:
\begin{itemize}
\item $\expartial_\varphi$ is \threewriter,
\item the number of threads in $\expartial_\varphi$ is $\mathcal{O}(k)$,
\item the number of events in $\expartial_\varphi$ is $\mathcal{O}(k + m)$, 
\item $\expartial_\varphi \models \MemModel$ iff $\varphi$ is satisfiable, for $\MemModel \in \{\sramm, \ramm, \wramm\}$. 
\end{itemize} 

\begin{wrapfigure}[]{r}{0.48\textwidth}
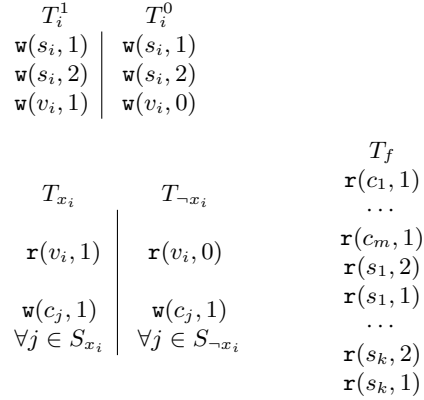
 
\medskip
\footnotesize
\begin{minipage}{0.1\textwidth}
\centering
\begin{tabular}{c| c}
    $T_i^1$ & $T_i^0$  \\
    $\wt(s_i, 1)$~~&~~$\wt(s_i, 1)$~~ \\
    $\wt(s_i, 2)$~~&~~$\wt(s_i, 2)$~~ \\
    $\wt(v_i, 1)$~~&~~$\wt(v_i, 0)$~~
\end{tabular}
\end{minipage}
\hfill
\medskip

\begin{minipage}{0.05\textwidth}
\centering
\begin{tabular}{c | c }
    $T_{x_i}$~ & ~$T_{\neg x_i}$ \\
        & \\
        $\rd(v_i, 1)$~&~$\rd(v_i, 0)$ \\
        & \\
        $\wt(c_j, 1)$ ~&~ $\wt(c_j, 1)$ \\
        $\forall j \in S_{x_i}$ ~&~  $\forall j \in S_{\neg x_i}$ 
\end{tabular}
\end{minipage}
\hfill
\medskip
\begin{minipage}{0.15\textwidth}
\centering
\begin{tabular}{c}
    $T_f$ \\
    $\rd(c_1, 1)$ \\
    $\cdots$ \\
    $\rd(c_m, 1)$ \\
    $\rd(s_1, 2)$ \\
    $\rd(s_1, 1)$ \\
    $\cdots$ \\
    $\rd(s_k, 2)$ \\
    $\rd(s_k, 1)$ 
\end{tabular}
\end{minipage}
 \caption{Threads in the reduction, for $i \in \{1, 2, \dots, k\}$ }
   \label{tab:reduction-threads-ra}
\end{wrapfigure}
Thus, if there was a $2^{o(k)} \cdot n^{\mathcal{O}(1)}$ algorithm for the $\vm$ problem, we could solve 3-CNF-SAT in $2^{o(k)}.(k+m)^{\mathcal{O}(1)}$ time: from $\varphi$, construct the $\vm$ instance in $\mathcal{O}(k + m)$ time, and then use the $2^{o(k)} \cdot (k+m)^{\mathcal{O}(1)}$ procedure for $\vm$, thereby contradicting ETH. To show the last item above, we make use of what is known in the literature as a \emph{range reduction}~\cite{ChakrabortyKMP24}: using Lemma~\ref{lem:sra-ra-wra-relationship}, it is enough to show that if $\varphi$ has a satisfying assignment, then $\expartial \models \sramm$, and if $\varphi$ does not have a satisfying assignment, then $\expartial \not \models \wramm$. 

\paragraph*{The reduction.} 
 For every variable $x_i$, we introduce four threads $T_i^1$, $T_i^0$, $T_{x_i}$ and $T_{\neg x_i}$, as shown in Figure~\ref{tab:reduction-threads-ra}. Apart from these, there is a special thread $T_f$. 
 A concrete example of the reduction is presented in Appendix~\ref{app:hardness-example}. 

Here is an intuitive description of the reduction. The statements $\wt(v_i, 1)$ and $\wt(v_i,0)$ in $T_i^1$ and $T_i^0$ encode variable $x_i$ being set to \emph{true} and \emph{false} respectively. Thread $T_{x_i}$ has a read statement $\rd(v_i,1)$ (reading \emph{true} for $x_i$) and sets all clauses containing $x_i$ to true, using the statement $\wt(c_j, 1)$ for all $j \in S_{x_i}$. Similarly, thread $T_{\neg x_i}$ reads $\rd(v_i, 0)$ (\emph{false} for $x_i$) and sets all clauses containing $\neg x_i$ to true. Finally the thread $T_{f}$ needs to make sure that all clauses are set to true, hence the read statements $\rd(c_1, 1), \dots, \rd(c_m, 1)$.  The main challenge is to encode a valid assignment, where every variable is assigned a unique truth value: to do this, we need to ensure that there are no two distinct reads $\rd(c_{j}, 1)$ and $\rd(c_\ell, 1)$ in $T_f$ which read from some $T_{x_i}$ and its negation $T_{\neg x_i}$.  This is where the rest of the statements, using $s_i$ become useful, as we elaborate in the proofs below.

\begin{lemma}\label{lem:wra-eth}
If  $\expartial_\varphi \models \wramm$, then $\varphi$ has a satisfying assignment. 
\end{lemma}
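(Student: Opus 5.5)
The proof will exploit weak-read-coherence on the variables $s_i$ to show that, for each $i$, the thread $T_f$ cannot be $\hb$-after both $T_{x_i}$ and $T_{\neg x_i}$. This lets us read off a consistent truth assignment from the $\rf$ edges into the reads $\rd(c_j,1)$ of $T_f$. Only the $\wramm$ axioms are used, in fact only weak-read-coherence. So fix $\rf$ and $\mo$ such that $\ex = \tuple{\E,\po,\rf,\mo} \models \wramm$.

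\emph{Step 1: $\rf$ edges of clause reads give $\hb$ paths.} Consider a read $\rd(c_j,1)$ in $T_f$. Its writer $\rf^{-1}(\rd(c_j,1))$ is a $\wt(c_j,1)$ in some thread $T_\ell$, where $\ell$ is one of the literals of $C_j$; by construction only those threads write to $c_j$. Take $\ell = x_i$ (the case $\ell=\neg x_i$ is symmetric). The read $\rd(v_i,1)$ at the top of $T_{x_i}$ must read from $\wt(v_i,1)$ in $T_i^1$, since this is the only write of value $1$ to $v_i$. This yields the path
\[
\wt(s_i,1)\xra{\po}\wt(s_i,2)\xra{\po}\wt(v_i,1)\xra{\rf}\rd(v_i,1)\xra{\po^*}\wt(c_j,1)\xra{\rf}\rd(c_j,1)\xra{\po^+}\rd(s_i,2),
\]
where the last $\po^+$ step is in $T_f$, because all clause reads precede all reads of the $s$-variables. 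Hence both writes to $s_i$ in $T_i^1$ are $\hb$-before $\rd(s_i,2)$ and $\rd(s_i,1)$ of $T_f$. Symmetrically, if some clause read is served by $T_{\neg x_i}$, then both writes to $s_i$ in $T_i^0$ are $\hb$-before these reads.

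\emph{Step 2: at most one of $T_{x_i}$, $T_{\neg x_i}$ serves $T_f$.} Suppose some clause read is served by $T_{x_i}$ and some other is served by $T_{\neg x_i}$. By Step 1, all four writes to $s_i$ are $\hb$-before $\rd(s_i,1)$ in $T_f$. This read must read from $\wt(s_i,1)$ in $T_i^1$ or in $T_i^0$. In either case
\[
\rd(s_i,1)\xra{\rf^{-1}}\wt(s_i,1)\xra{\po}\wt(s_i,2)\xra{\hb}\rd(s_i,1),
\]
with the middle write taken from the same thread. This is a weak-read-coherence violation, a contradiction. I expect this to be the key step. The only subtlety is checking that the $\po$-later $\wt(s_i,2)$ in the same thread is indeed $\hb$-before the read, which Step 1 provides. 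The read $\rd(s_i,2)$ only serves to let $T_f$ remain consistent in the satisfiable direction and plays no role here.

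\emph{Step 3: define the assignment and verify it.} Set $x_i$ to true iff some $\rd(c_j,1)$ in $T_f$ reads from $T_{x_i}$, and false otherwise. Consider any clause $C_j$, and let its read $\rd(c_j,1)$ be served by $T_\ell$ with $\ell\in C_j$. If $\ell = x_i$, then $x_i$ is true by definition. If $\ell=\neg x_i$, then by Step 2 no clause read is served by $T_{x_i}$, so $x_i$ is false and $\neg x_i$ holds. Either way $C_j$ is satisfied, so $\varphi$ is satisfiable.
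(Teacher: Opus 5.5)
Your proof is correct and follows the paper's argument essentially step for step. The $\rf$ edges into the clause reads of $T_f$ give $\hb$ paths from $T_i^1$ (resp.\ $T_i^0$) to $\rd(s_i,1)$, so clauses served from both $T_{x_i}$ and $T_{\neg x_i}$ force a weak-read-coherence violation at $\rd(s_i,1)$, and the assignment is then read off exactly as in the paper; your version is simply more explicit, e.g.\ in justifying that $\rd(v_i,1)$ must read from $T_i^1$ and in verifying that each clause is satisfied.
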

\begin{proof} 
    If $\expartial_\varphi \models \wramm$ there is an $\rf$ function which satisfies porf-acyclicity and weak-read-coherence. We need to prove that this $\rf$ encodes a valid satisfying assignment. This can be shown by proving that there are no two distinct reads $\rd(c_j, 1)$ and $\rd(c_\ell, 1)$ which read from $T_{x_i}$ and $T_{\neg x_i}$ for some variable $x_i$, since it gives the following assignment: $A(x_i) = true$ if some read event $\rd(c_j, 1)$ reads from the write in $T_{x_i}$; $A(x_i) = false$ otherwise. Moreover, the assignment $A$ is satisfying since every clause is made  true by this assignment.

    Let us now prove the property. Suppose $\rd(c_j, 1)$ and $\rd(c_\ell, 1)$ read from $T_{x_i}$ and $T_{\neg x_i}$ respectively. Since $\rd(c_j, 1)$ reads from $T_{x_i}$, there is an $\hb$ path from $T_i^1$ to $\rd(c_j, 1)$ due to the $\rf$ edges: $\wt(v_i, 1) \xra{\rf} \rd(v_i, 1)$ and $T_{x_i}: \wt(c_j, 1) \xra{\rf} \rd(c_j, 1)$. Similarly, there is an $\hb$ path from $T_i^0$ to $\rd(c_\ell, 1)$. Hence there is an $\hb$ path from both $T_i^1$ and $T_i^0$ to the event $\rd(s_i, 1)$ in $T_f$. Since the only matching write $\wt(s_i, 1)$ appears above $\wt(s_i, 2)$ in these threads, any assignment to $\rd(s_i, 1)$ violates weak-read-coherence, a contradiction to the hypothesis that $\rf$ satisfies weak-read-coherence.
\end{proof}

\begin{restatable}{lemma}{lemsatimpliessra}
    \label{lem:sat-implies-sra}
If $\varphi$ has a satisfying assignment, then $\expartial_\varphi \models \sramm$. 
\end{restatable}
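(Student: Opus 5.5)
The plan is to exhibit, from a satisfying assignment $A$, an explicit $\rf$ and $\mo$, and to check the three $\sramm$ axioms (porf-acyclicity, strong-write-coherence, read-coherence) by embedding everything into one linear order. Write $T_i^{A(x_i)}$ for the \emph{true-side} generator of $x_i$ (this is $T_i^1$ if $A(x_i)=true$ and $T_i^0$ otherwise) and $T_i^{\neg A(x_i)}$ for the other, \emph{false-side}, generator. Call a literal thread $T_\ell$ \emph{true} or \emph{false} according to $A(\ell)$. Fix an indexing of the true literal threads. For each clause $C_j$, let $\ell(j)$ be the true literal of $C_j$ whose thread has the \emph{smallest} index; it exists since $A$ satisfies $\varphi$.

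\textbf{Choice of $\rf$.} Each $\rd(v_i,1)$ reads from $\wt(v_i,1)$ in $T_i^1$, and each $\rd(v_i,0)$ reads from $\wt(v_i,0)$ in $T_i^0$; these are the unique matching writes. Each $\rd(c_j,1)$ in $T_f$ reads from $\wt(c_j,1)$ in $T_{\ell(j)}$. Then $\rd(s_i,2)$ reads from the $\wt(s_i,2)$ of the true-side generator, and $\rd(s_i,1)$ reads from the $\wt(s_i,1)$ of the false-side generator.

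\textbf{Choice of $\mo$.} Consider the linear order $L$ that lists, as contiguous blocks: all true-side generators, then all false-side generators, then all false literal threads, then the true literal threads in \emph{decreasing} index, and finally $T_f$. Define $\mo$ as the restriction of $L$ to writes on each location. Every $\po$ and $\rf$ edge goes forward in $L$, because readers of $v_i$ sit after the generators and $T_f$ sits last. Hence $\po\cup\rf\cup\mo\subseteq L$, so $\acy(\hb\cup\mo)$ holds, which is strong-write-coherence, and porf-acyclicity follows.

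\textbf{Read-coherence.} For each read $r$ with $\rf^{-1}(r)=w$, we must show that no $w'$ with $w\xra{\mo}w'$ satisfies $w'\xra{\hb}r$. We treat the reads by location.
\begin{itemize}
\item For $c_j$: the writers of $c_j$ are true or false literal threads. Among the true ones, the thread of $\ell(j)$ has the least index, so it comes last in $L$. Thus $w$ is $\mo$-maximal, and there is no such $w'$.
\item For $v_i$: each thread $T_{x_i}$ or $T_{\neg x_i}$ begins with its read of $v_i$, so no write of the other generator is $\hb$-before that read.
\item For $s_i$: the writes $\mo$-after the true-side $\wt(s_i,2)$ are the two writes of the false-side generator. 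The false-side generator reaches other threads by $\rf$ only through its $\wt(v_i,\cdot)$, which is read by the false literal thread. By construction, no $\rd(c_j,1)$ reads from a false literal thread. So the only $\hb$ path from the false side into $T_f$ is the $\rf$ edge into $\rd(s_i,1)$, which is $\po$-after $\rd(s_i,2)$. Finally, the only write $\mo$-after the false-side $\wt(s_i,1)$ is the false-side $\wt(s_i,2)$, and it is not $\hb$-before $T_f$ at all.
\end{itemize}

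The main obstacle is the $c_j$ case. A clause read must see an $\mo$-maximal write among those $\hb$-before it, yet several true literal threads may each reach $T_f$ and write overlapping sets of clauses. This is resolved by the canonical ``least-index true literal'' choice for $\ell(j)$ together with placing the true literal threads in decreasing index order in $L$. Everything else is routine path-checking. The key structural fact is that false-side threads never feed $T_f$ except through the single read $\rd(s_i,1)$; this fact is exactly what the chosen $\rf$ guarantees.
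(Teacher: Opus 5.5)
Your proof is correct, and it differs from the paper's mainly in how $\mo$ is built and how the axioms are checked. Your $\rf$ is the paper's, except that you fix $\ell(j)$ canonically as the smallest-index true literal, whereas the paper allows any true literal.

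\textbf{What each route does.} The paper prescribes $\mo$ only on the $s_i$ and $c_j$ writes. It then verifies strong-write-coherence by a case analysis on which $\mo$ edge could lie on an $\hb\cup\mo$ cycle. You instead take $\mo$ to be the restriction of a single linear order $L$ that already contains $\po\cup\rf$. This gives you several things at once:
\begin{itemize}
\item porf-acyclicity and strong-write-coherence follow in one step;
\item $\mo$ on $v_i$ is determined (the paper leaves it unconstrained, which is harmless);
\item nothing depends on the order in which a literal thread issues its clause writes. The paper's remark that the chosen $c_j$-write has outgoing edges only into $T_f$ overlooks its $\po$-successors in the literal thread.
\end{itemize}
The price is your canonical choice of $\ell(j)$, which is what makes each $c_j$-source $L$-last.

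\textbf{The order on $s_i$.} Your order on $s_i$ is not the one written in the paper, and yours is the right one. The paper puts $T_i^{A(x_i)}:\wt(s_i,2)$ $\mo$-last, after $T_i^{\neg A(x_i)}:\wt(s_i,1)$. That order violates read-coherence at $\rd(s_i,1)$ via
$T_i^{\neg A(x_i)}:\wt(s_i,1) \xra{\mo} T_i^{A(x_i)}:\wt(s_i,2) \xra{\rf} \rd(s_i,2) \xra{\po} \rd(s_i,1)$.
Your order places both true-side writes before both false-side writes. This is what actually makes the source of $\rd(s_i,1)$ $\mo$-latest in its view, as the paper's own read-coherence paragraph claims.
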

\begin{proof}
Suppose $\varphi$ has a satisfying assignment $A$. We need to synthesize $\rf$ and $\mo$ such that the resulting execution graph satisfes porf-acyclicity, strong-write-coherence and strong-read-coherence. Since $A$ is a satisfying assignment, for every clause $C_j$ there is a literal $\ell_j$ that is made true. Let $\rf^{-1}(\rd(c_j, 1))$ be the write event $\wt(c_j, 1)$ in $T_{\ell_j}$. Assigning writes this way introduces an $\hb$ path from $T_i^{A(x_i)}$ to $T_f$ for all variables that are indeed being utilized to make the clauses true. Importantly, there is no $\hb$ path created from $T_{i}^{\neg A(x_i)}$ to the thread $T_f$. Therefore, the read event $\rd(s_i, 2)$ can read from the write in $T_i^{A(x_i)}$ and $\rd(s_i, 1)$ can read from the unused thread $T_i^{\neg A(x_i)}$. This gives an $\rf$ which satisfies weak-read-coherence. By design of the execution graph, the $\rf$ also satisfies porf-acyclicity. For the lemma, we require to prove strong-write-coherence and read-coherence. For this we need to also synthesize an $\mo$. 
Consider any  $\mo$ relation that respects the following orders:
    \begin{itemize}[leftmargin=0.1em]
    \item $T_{i}^{A(x_i)}: \wt(s_i, 1) \xra{\mo} T_i^{\neg A(x_i)}: \wt(s_i, 1) \xra{\mo}T_{i}^{\neg A(x_i)}: \wt(s_i, 2) \xra{\mo} T_i^{A(x_i)}: \wt(s_i, 2)$,
    \item For each clause $j$ there are three threads writing $\wt(c_j, 1)$.  For each clause $C_j$ choose a specific literal $\ell_j$ that is made true by the satisfying assignment $A$. The $\mo$ relation orders the other write statements $\wt(c_j, 1)$ before $T_{\ell_j}: \wt(c_j, 1)$. 
    \end{itemize}  
It can be shown that the constructed $\rf$ and $\mo$ satisfy strong-write-coherence and read-coherence (see Appendix~\ref{app:hardness} for the full proof).
\end{proof}

\subsection{Reduction to \twowriter\ systems}
\label{sec:hardness-twowriter}

The reduction in Section~\ref{sec:hardness-threewriter} involved \threewriter\ systems.
Mainly, each clause variable $c_j$ was written by three threads corresponding to its three literals. Here, our objective is to prove hardness even when there are atmost two threads writing to each variable. We will now reduce 3-CNF-SAT to the consistency testing problem for \twowriter~ systems. However, in this reduction, a formula $\varphi$ with $k$ threads and $m$ clauses yields an execution graph with $\mathcal{O}(k + m)$ theads. While this still proves $\NP$-hardness, it does not give the stronger hardness under ETH, parameterized by the number of threads. 

\begin{theorem}
\label{thm:two-writer-hardness}
    The $\vm$ problem is $\NP$-complete for $\MemModel \in \{\sramm, \ramm, \wramm\}$, even when restricted to \twowriter\ execution graphs.
\end{theorem}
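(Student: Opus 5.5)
The $\NP$ upper bound is as already argued: guess $\rf$ and $\mo$ and check the axioms in polynomial time~\cite{Tunc2023}. For hardness I would adapt the \threewriter\ reduction of Section~\ref{sec:hardness-threewriter}. The only location written by three threads there is $c_j$, written by the three literal threads of $C_j$. The $s_i$ and $v_i$ gadgets are already \twowriter, since $s_i$ is written only by $T_i^1,T_i^0$ and $v_i$ only by $T_i^1$ or $T_i^0$. So I would keep the threads $T_i^1,T_i^0,T_{x_i},T_{\neg x_i}$ and $T_f$ unchanged except for how clause satisfaction reaches $T_f$. The plan is again a range reduction via Lemma~\ref{lem:sra-ra-wra-relationship}: show that $\varphi$ satisfiable implies $\ypartial_\varphi\models\sramm$, and that $\ypartial_\varphi\models\wramm$ implies $\varphi$ satisfiable.

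\emph{Clause gadget.} For each clause $C_j=(\ell_j^1,\ell_j^2,\ell_j^3)$ I would add one auxiliary thread $T_{C_j}$; this accounts for the $\mathcal{O}(k+m)$ threads. Literal thread $T_{\ell_j^1}$ writes $\wt(c_j^1,1)$, and $T_{\ell_j^2}$ writes $\wt(c_j^1,1)$ as well, so two writers. Thread $T_{C_j}$ reads $\rd(c_j^1,1)$ and then writes $\wt(c_j,1)$. Literal thread $T_{\ell_j^3}$ also writes $\wt(c_j,1)$, again two writers. The read $\rd(c_j,1)$ in $T_f$ is unchanged. I would double-check one subtle point: $T_{C_j}$ must not be able to produce $\wt(c_j,1)$ unconditionally. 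It is blocked by its read of $c_j^1$, whose only writers are the two literal threads. Every location now has at most two writers.

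\emph{Soundness} ($\wramm\Rightarrow$ SAT). Take a consistent $\rf$. The read $\rd(c_j,1)$ in $T_f$ reads either from $T_{\ell_j^3}$, or from $T_{C_j}$, which in turn reads $c_j^1$ from $T_{\ell_j^1}$ or $T_{\ell_j^2}$. In every case there is an $\hb$ path $T_i^{b}\xra{\rf}T_{\ell}\xra{\hb}\rd(c_j,1)$, where $\ell$ is the chosen literal on $x_i$ and $b$ its polarity. As in Lemma~\ref{lem:wra-eth}, if both polarities of some $x_i$ were used, both $\wt(s_i,2)$ writes would $\hb$-precede $\rd(s_i,1)$ in $T_f$, which violates weak-read-coherence. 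So the induced assignment is well defined and satisfies every clause.

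\emph{Completeness} (SAT $\Rightarrow\sramm$). Fix a true literal $\ell_j$ per clause. If $\ell_j$ is $\ell_j^3$, let $\rd(c_j,1)$ read from it; in that case $\rd(c_j^1,1)$ in $T_{C_j}$ still has to read from something. This is the main obstacle. That read creates an $\hb$ path from a possibly false literal thread into $T_{C_j}$, but not into $T_f$, since $T_f$ does not read from $T_{C_j}$. So the unused gadget harmlessly reads from any of its two writers. The $\hb$ into $T_f$ comes only from used literals. One must check that $\hb$ paths cannot leak: $T_{C_j}$ is read only by $T_f$, and the literal threads are read only by $T_{C_j}$ and $T_f$. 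Hence every $\hb$ predecessor of $\rd(s_i,\cdot)$ is on the chosen side, and $\rd(s_i,2)$ and $\rd(s_i,1)$ are assigned as in Lemma~\ref{lem:sat-implies-sra}. For $\mo$ I would mimic that proof: for each of $c_j^1,c_j$, order the write that is actually read from $\mo$-last, and for $s_i$ use the same four-write order. Porf-acyclicity holds because the thread graph under $\rf$ is layered: $T_i^b\to T_{\ell}\to T_{C_j}\to T_f$. Strong-write-coherence holds because $\mo$ only relates writes in incomparable threads (different literal threads, or $T_{C_j}$ versus a literal thread not $\hb$-related to it), with the chosen orientation consistent with $\hb$. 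I expect this last check, showing that no $\hb\cup\mo$ cycle arises through $T_{C_j}$, to need the most care. It should reduce to the observation that $\mo$ edges only point into writes that are read by $T_{C_j}$ or $T_f$, which are sinks in the layering.
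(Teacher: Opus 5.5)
Your construction is exactly the paper's $\ypartial_\varphi$ up to renaming: your $c_j^1$, $c_j$, $T_{C_j}$ are its $c_j$, $d_j$, $T_j$. Your soundness argument and your completeness choices are also the paper's, namely letting the unused gadget read arbitrarily because $T_f$ never reads from it, making the read-from write $\mo$-last, and using the $s_i$ order of Lemma~\ref{lem:sat-implies-sra}, so this is essentially the same proof. The one step to tighten, which the paper also glosses over, is strong-write-coherence: $\mo$-targets in literal threads are not sinks, since their $\po$-successors may be $\mo$-sources, and under an unlucky write order you get a cycle (say $C_1, C_2$ both begin with literals $a, b$, $T_a$ writes $c_2^1$ before $c_1^1$, $T_b$ writes $c_1^1$ before $c_2^1$, $T_{C_1}$ reads from $T_b$ and $T_{C_2}$ reads from $T_a$, giving $T_a{:}\wt(c_1^1,1)\xra{\mo}T_b{:}\wt(c_1^1,1)\xra{\po}T_b{:}\wt(c_2^1,1)\xra{\mo}T_a{:}\wt(c_2^1,1)\xra{\po}T_a{:}\wt(c_1^1,1)$), so you should have every literal thread write its clause variables in increasing clause index and argue that all $\po$, $\rf$, $\mo$ edges outside the $s_i,v_i$ gadgets and $T_f$ are non-decreasing in clause index, with no cycle among the events of a single clause.
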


The key challenge in proving the hardness is to reduce the number of threads writing to each clause variable $c_j$ from three (one for each literal in the clause) to at most two, while still encoding the satisfiability of the 3-CNF formula. This can be achieved by introducing one auxiliary variable $d_j$ per clause. Suppose $C_j = (\ell_j^1 \vee \ell_j^2 \vee \ell_j^3)$. In all the reductions of Section~\ref{sec:hardness-threewriter}, there are three threads $T_{\ell_j^1}, T_{\ell_j^2}$ and $T_{\ell_j^3}$. We make the following changes to constructions of $\expartial_\varphi$ in Section~\ref{sec:hardness-threewriter}.

\begin{itemize}[leftmargin=0.1em]
\item Replace $\wt(c_j, 1)$ in $T_{\ell_j^3}$ with $\wt(d_j, 1)$, i.e., in the third literal, the write is on $d_j$.
\item Add a new thread $T_{j}$ with two events as follows: $\rd(c_j, 1)~;~\wt(d_j, 1)$
\item Finally, in $T_f$, replace all $\rd(c_j, 1)$ with $\rd(d_j, 1)$
\end{itemize}

Notice that due to the introduction of the new threads $T_j$, the total number of threads also depends on the number of clauses $m$. Let the resulting partial execution graph be called $\ypartial_\varphi$. In Appendix~\ref{sec:app-np-hardness} we prove the correctness of the reduction.
\section{Conclusion}
\label{sec:conclusion}

Testing shared memory implementations is a key problem in concurrency, whose computational foundations have been laid by the work of~\cite{Gibbons} for sequential consistency. Ever since, the work has been applied to various other memory models and also used in stateless model-checking algorithms for concurrent programs. We have contributed to the study of this problem under the release-acquire semantics. We have presented a trichotomy of the landscape, with \onewriter\ solved in polynomial-time, \twowriter\ being $\NP$-complete, and \threewriter\ having a stronger lower bound under the Exponential-Time-Hypothesis (ETH). Our presentation works for $\ramm$ and its close variants $\sramm$ and $\wramm$. 

A milder variant of the problem studies the complexity when the reads-from relation $\rf$ is also given. Then the question is to synthesize an $\mo$ such that the axioms can be satisfied. This ``$\rf$-given'' problem can be solved in polynomial-time~\cite{Tunc2023}, under no restrictions on the execution graphs. When $\rf$ is not provided apriori, while the general problem is known to be $\NP$-hard~\cite{ChakrabortyKMP24}, no subclass or bounded variant with a polynomial-time algorithm is known in the literature (to the best of our knowledge). We have shown that the \onewriter\ subclass lends itself to a polynomial-time algorithm.

Our results can also be extended to the memory model called $\rlxmm$, whose consistency axioms are a relaxed version of write-coherence and read-coherence, where essentially $\hb$ is replaced with $\po$. This makes the consistency check easier for the \onewriter\ algorithm. For the hardness, we need to make small changes, but we still obtain the same trichotomy of results. Appendix~\ref{app:relaxed} provides the details. Another set of memory models that are closely related w.r.t the analysis techniques are the models $\ccmm$, $\cmmm$ and $\cvmm$ coming from distributed databases. It is known that $\ccmm$ coincides with $\wramm$ while $\cvmm$ coincides with $\sramm$~\cite{Lahav:2022}. Some care needs to be taken for $\cmmm$, but yet again, we get the same trichotomy of results. Appendix~\ref{app:cc-models} has the details. 

Adapting these complexity theoretic insights to improve algorithmics for the verification of concurrent programs under the release-acquire semantics, is one direction for future work. 

\bibliographystyle{splncs04}
\bibliography{m}

@article{Ahamad1995,
author = {Ahamad, Mustaque and Neiger, Gil and Burns, James E. and Kohli, Prince and Hutto, Phillip W.},
title = {Causal Memory: Definitions, Implementation, and Programming},
year = {1995},
issue_date = {March     1995},
publisher = {Springer-Verlag},
address = {Berlin, Heidelberg},
volume = {9},
number = {1},
issn = {0178-2770},
url = {https://doi.org/10.1007/BF01784241},
doi = {10.1007/BF01784241},
journal = {Distrib. Comput.},
month = {mar},
pages = {37-49},
numpages = {13}
}

@article{Lahav:2022,
author = {Lahav, Ori and Boker, Udi},
title = {What's Decidable About Causally Consistent Shared Memory?},
year = {2022},
volume = {44},
number = {2},
doi = {10.1145/3505273},
journal = {ACM Trans. Program. Lang. Syst.},
month = {apr},
articleno = {8},
numpages = {55},
}

@article{Tunc2023,
author = {Tun\c{c}, H\"{u}nkar Can and Abdulla, Parosh Aziz and Chakraborty, Soham and Krishna, Shankaranarayanan and Mathur, Umang and Pavlogiannis, Andreas},
title = {Optimal Reads-From Consistency Checking for C11-Style Memory Models},
year = {2023},
issue_date = {June 2023},
publisher = {Association for Computing Machinery},
address = {New York, NY, USA},
volume = {7},
number = {PLDI},
url = {https://doi.org/10.1145/3591251},
doi = {10.1145/3591251},
journal = {Proc. ACM Program. Lang.},
month = {jun},
articleno = {137},
numpages = {25}
}

@article{Burckhardt:2014,
	title        = {{Principles of Eventual Consistency}},
	author       = {Sebastian Burckhardt},
	year         = 2014,
	journal      = {Foundations and Trends in Programming Languages},
	volume       = 1,
	number       = {1-2},
	pages        = {1--150},
	doi          = {10.1561/2500000011}
}

@inbook{Luo:2021,
author = {Luo, Weiyu and Demsky, Brian},
title = {C11Tester: A Race Detector for C/C++ Atomics},
year = {2021},
publisher = {Association for Computing Machinery},
address = {New York, NY, USA},
doi = {10.1145/3445814.3446711},
booktitle={ASPLOS'21},
pages={630-646},
}

@inproceedings{Lahav:2017,
 author = {Lahav, Ori and Vafeiadis, Viktor and Kang, Jeehoon and Hur, Chung-Kil and Dreyer, Derek},
 title = {Repairing Sequential Consistency in {C/C++11}},
 booktitle = {PLDI 2017},
 year = {2017},
 pages = {618--632},
 doi = {10.1145/3062341.3062352},
 note ={Technical Appendix Available at \url{https://plv.mpi-sws.org/scfix/full.pdf}},
}

@inproceedings{Batty:2011,
 author = {Batty, Mark and Owens, Scott and Sarkar, Susmit and Sewell, Peter and Weber, Tjark},
 title = {Mathematizing {C++} concurrency},
 booktitle = {POPL'11},
 year = {2011},
 pages     = {55--66},
 publisher = {{ACM}},
 doi  = {10.1145/1926385.1926394},
}

@article{WilliamsW18,
  author       = {Virginia {Vassilevska Williams} and
                  R. Ryan Williams},
  title        = {Subcubic Equivalences Between Path, Matrix, and Triangle Problems},
  journal      = {J. {ACM}},
  volume       = {65},
  number       = {5},
  pages        = {27:1--27:38},
  year         = {2018}
}

@article{Gibbons,
  author       = {Phillip B. Gibbons and
                  Ephraim Korach},
  title        = {Testing Shared Memories},
  journal      = {{SIAM} J. Comput.},
  volume       = {26},
  number       = {4},
  pages        = {1208--1244},
  year         = {1997}
}

@inproceedings{Mathur20,
  author       = {Umang Mathur and
                  Andreas Pavlogiannis and
                  Mahesh Viswanathan},
  title        = {The Complexity of Dynamic Data Race Prediction},
  booktitle    = {{LICS}},
  pages        = {713--727},
  publisher    = {{ACM}},
  year         = {2020}
}

@inproceedings{rv-equivalence,
  author       = {Pratyush Agarwal and
                  Krishnendu Chatterjee and
                  Shreya Pathak and
                  Andreas Pavlogiannis and
                  Viktor Toman},
  title        = {Stateless Model Checking Under a Reads-Value-From Equivalence},
  booktitle    = {{CAV} {(1)}},
  series       = {Lecture Notes in Computer Science},
  volume       = {12759},
  pages        = {341--366},
  publisher    = {Springer},
  year         = {2021}
}

@article{ImpagliazzoPZ01,
  author       = {Russell Impagliazzo and
                  Ramamohan Paturi and
                  Francis Zane},
  title        = {Which Problems Have Strongly Exponential Complexity?},
  journal      = {J. Comput. Syst. Sci.},
  volume       = {63},
  number       = {4},
  pages        = {512--530},
  year         = {2001}
}

@inproceedings{fine-grained-complexity,
  author       = {Peter Chini and
                  Prakash Saivasan},
  editor       = {Nitin Saxena and
                  Sunil Simon},
  title        = {A Framework for Consistency Algorithms},
  booktitle    = {40th {IARCS} Annual Conference on Foundations of Software Technology
                  and Theoretical Computer Science, {FSTTCS} 2020, December 14-18, 2020,
                  {BITS} Pilani, {K} {K} Birla Goa Campus, Goa, India (Virtual Conference)},
  series       = {LIPIcs},
  volume       = {182},
  pages        = {42:1--42:17},
  publisher    = {Schloss Dagstuhl - Leibniz-Zentrum f{\"{u}}r Informatik},
  year         = {2020},
  url          = {https://doi.org/10.4230/LIPIcs.FSTTCS.2020.42},
  doi          = {10.4230/LIPICS.FSTTCS.2020.42},
  bibsource    = {dblp computer science bibliography, https://dblp.org}
}

@article{SC-Lamport78,
  author       = {Leslie Lamport},
  title        = {Time, Clocks, and the Ordering of Events in a Distributed System},
  journal      = {Commun. {ACM}},
  volume       = {21},
  number       = {7},
  pages        = {558--565},
  year         = {1978}
}

@article{ChakrabortyKMP24,
  author       = {Soham Chakraborty and
                  Shankara Narayanan Krishna and
                  Umang Mathur and
                  Andreas Pavlogiannis},
  title        = {How Hard Is Weak-Memory Testing?},
  journal      = {Proc. {ACM} Program. Lang.},
  volume       = {8},
  number       = {{POPL}},
  pages        = {1978--2009},
  year         = {2024}
}

@inproceedings{BouajjaniEGH17,
  author       = {Ahmed Bouajjani and
                  Constantin Enea and
                  Rachid Guerraoui and
                  Jad Hamza},
  title        = {On verifying causal consistency},
  booktitle    = {{POPL}},
  pages        = {626--638},
  publisher    = {{ACM}},
  year         = {2017}
}

@inproceedings{ChenLHCSWP09,
  author       = {Yunji Chen and
                  Yi Lv and
                  Weiwu Hu and
                  Tianshi Chen and
                  Haihua Shen and
                  Pengyu Wang and
                  Hong Pan},
  title        = {Fast complete memory consistency verification},
  booktitle    = {{HPCA}},
  pages        = {381--392},
  publisher    = {{IEEE} Computer Society},
  year         = {2009}
}

@inproceedings{ManovitH06,
  author       = {Chaiyasit Manovit and
                  Sudheendra Hangal},
  title        = {Completely verifying memory consistency of test program executions},
  booktitle    = {{HPCA}},
  pages        = {166--175},
  publisher    = {{IEEE} Computer Society},
  year         = {2006}
}

@article{Qadeer03,
  author       = {Shaz Qadeer},
  title        = {Verifying Sequential Consistency on Shared-Memory Multiprocessors
                  by Model Checking},
  journal      = {{IEEE} Trans. Parallel Distributed Syst.},
  volume       = {14},
  number       = {8},
  pages        = {730--741},
  year         = {2003}
}

@article{tso_rf_bui2021reads,
  title={The reads-from equivalence for the TSO and PSO memory models},
  author={Bui, Truc Lam and Chatterjee, Krishnendu and Gautam, Tushar and Pavlogiannis, Andreas and Toman, Viktor},
  journal={Proceedings of the ACM on Programming Languages},
  volume={5},
  number={OOPSLA},
  pages={1--30},
  year={2021},
  publisher={ACM New York, NY, USA}
}

@article{Lahav2021,
  author       = {Ori Lahav and
                  Udi Boker},
  title        = {What's Decidable About Causally Consistent Shared Memory?},
  journal      = {{ACM} Trans. Program. Lang. Syst.},
  volume       = {44},
  number       = {2},
  pages        = {8:1--8:55},
  year         = {2022},
  url          = {https://doi.org/10.1145/3505273},
  doi          = {10.1145/3505273},
  bibsource    = {dblp computer science bibliography, https://dblp.org}
}

@book{Jaja1992ParallelAlgorithms,
  author    = {Joseph J{\'a}J{\'a}},
  title     = {An Introduction to Parallel Algorithms},
  publisher = {Addison--Wesley},
  year      = {1992},
  isbn      = {978-0201548568}
}

@article{KLSV:popl18,
author = {Kokologiannakis, Michalis and Lahav, Ori and Sagonas, Konstantinos and Vafeiadis, Viktor},
title = {Effective stateless model checking for C/C++ concurrency},
year = {2017},
issue_date = {January 2018},
publisher = {Association for Computing Machinery},
address = {New York, NY, USA},
volume = {2},
number = {POPL},
url = {https://doi.org/10.1145/3158105},
doi = {10.1145/3158105},
journal = {Proc. ACM Program. Lang.},
month = dec,
articleno = {17},
numpages = {32}
}

@article{Kokologiannakis:POPL:2017,
author = {Kokologiannakis, Michalis and Lahav, Ori and Vafeiadis, Viktor},
title = {Kater: Automating Weak Memory Model Metatheory and Consistency Checking},
year = {2023},
issue_date = {January 2023},
publisher = {Association for Computing Machinery},
address = {New York, NY, USA},
volume = {7},
number = {POPL},
url = {https://doi.org/10.1145/3571212},
doi = {10.1145/3571212},
journal = {Proc. ACM Program. Lang.},
month = jan,
articleno = {19},
numpages = {29}
}

@article{Margalit:2021:POPL,
author = {Margalit, Roy and Lahav, Ori},
title = {Verifying observational robustness against a c11-style memory model},
year = {2021},
issue_date = {January 2021},
publisher = {Association for Computing Machinery},
address = {New York, NY, USA},
volume = {5},
number = {POPL},
url = {https://doi.org/10.1145/3434285},
doi = {10.1145/3434285},
journal = {Proc. ACM Program. Lang.},
month = jan,
articleno = {4},
numpages = {33}
}

@inproceedings{Norris:2013:OOPSLA,
author = {Norris, Brian and Demsky, Brian},
title = {CDSchecker: checking concurrent data structures written with C/C++ atomics},
year = {2013},
isbn = {9781450323741},
publisher = {Association for Computing Machinery},
address = {New York, NY, USA},
url = {https://doi.org/10.1145/2509136.2509514},
doi = {10.1145/2509136.2509514},
booktitle = {Proceedings of the 2013 ACM SIGPLAN International Conference on Object Oriented Programming Systems Languages \& Applications},
pages = {131–150},
numpages = {20},
location = {Indianapolis, Indiana, USA},
series = {OOPSLA '13}
}

@article{Lokshtanov:ETH:2011,
  author       = {Daniel Lokshtanov and
                  D{\'{a}}niel Marx and
                  Saket Saurabh},
  title        = {Lower bounds based on the Exponential Time Hypothesis},
  journal      = {Bull. {EATCS}},
  volume       = {105},
  pages        = {41--72},
  year         = {2011},
  url          = {http://eatcs.org/beatcs/index.php/beatcs/article/view/92},
  bibsource    = {dblp computer science bibliography, https://dblp.org}
}

\appendix

\section{Appendix for Section~\ref{sec:onewriter}}

\subsection{Proofs of Section~\ref{sec:onewriter-structural-properties}}

\lemonewriterreducestowra*

\begin{proof}
    Let us begin with $\MemModel = \ramm$. For $\ex$ to be consistent with $\ramm$, it has to satisfy the three axioms: porf-acyclicity, write-coherence and read-coherence~(see Table~\ref{table:ax:models-2}). In a \onewriter\ system, for $\mo$ edges, both its source and target are in the same thread. Therefore, to satisfy write-coherence, $\mo$ should agree with $\po$: in other words, if $w_1~\mo_x~w_2$, then $w_1~\po^+~w_2$. Hence, notice that to prove the lemma, it is sufficient to show that the read-coherence axiom and weak-read-coherence axiom coincide for \onewriter. A violating cycle for read-coherence is of the form
    \[r \xra{\rf} w_1 \xra{\mo_x} w_2 \xra{\hb} r\]
    As seen above the $\mo_x$ edge is simply $\po^+$, hence $w_1 \xra{\mo_x} w_2$ can be replaced with $w_1 \xra{\po^+} w_2$. A violating cycle for weak-read-coherence is of the form
    \[r \xra{\rf} w_1 \xra{\hb} w_2 \xra{\hb} r\]
    with $w_1$ and $w_2$ being writes on the same variable $x$. Since the writes are on the same variable, $w_1 \xra{\hb} w_2$ can be replaced with $w_1 \xra{\po^+} w_2$. This proves that the violating cycles for both the axioms coincide for \onewriter\ systems.

    Now, let $\MemModel = \sramm$. For strong-write-coherence to be true, we require $\mo$ to agree with $\po$. If $X \models \sramm$, then the axioms porf-acyclicity, strong-write-coherence and read-coherence are true. Since strong-write-coherence holds, $\mo$ needs to agree with $\po$. We have shown above that read-coherence coincides with weak-read-coherence, when $\mo$ agrees with $\po$. Hence $X \models \wramm$. Now, for the other direction. Suppose $\mo$ agrees with $\po$ and weak-read-coherence are true. From above, we deduce read-coherence holds. If strong-write-coherence is violated, then there will be a porf-cycle, a contradiction. Hence, the lemma follows for $\MemModel = \sramm$.
\end{proof}

\lemmaonewritertwoproperties*

\begin{proof}
For the first item, suppose $\rf_1$ induces a porf-cycle of the form:
\[ e_1 \xra{\rf_1} f_1 \xra{\po} e_2 \xra{\rf_1} f_2 \cdots e_m \xra{\rf_1} f_{m} \xra{\po} e_1\]
Notice that each $e_i$ is a write event and $f_i$ is a read event such that $e_i = \rf_1(f_i)$. Since $\rf_1 \sqsubset \rf_2$, we know $\rf_1(f_i) \xra{\po^*} \rf_2(f_i)$. Hence: $e_i \xra{\po^*} e'_i$ where $e'_i = \rf_2(f_i)$. This gives the following $\po \cup \rf_2$ cycle:
 \[ e_1 \xra{\po^*} e'_1 \xra{\rf_2} f_1 \xra{\po} e_2 \xra{\po^*} e'_2 \xra{\rf_2} f_2 \cdots e_m \xra{\po^*} e'_m \xra{\rf_2} f_{m} \xra{\po} e_1\]
 This proves the first item.

 For the second item, denote $\rf_0 = \min(\rf_1, \rf_2)$. We will show that if $\rf_0$ violates weak-read-coherence, then either $\rf_1$ or $\rf_2$ violates it. Suppose $\rf_0$ induces a violation of weak-read-coherence. There is a cycle of the form:
 \[ r \xra{\rf_0^{-1}} w \xra{\po^+} w' \xra{\hb_0} r \]
 where both $w$ and $w'$ are on the same variable, and hence the edge between them is $\po^+$. By definition, either $\rf_1$ or $\rf_2$ associates $r$ to $w$. Let us say $\rf_1(r) = w$. If we show $w' \hb_0 r$ then we are done. Since $w'~\hb_0~r$, there is a path with $\po$ and $\rf_0$ edges from $w'$ to $r$:
 \[w' \xra{\po^*} e_1 \xra{\rf_0} f_1 \xra{\po^*} e_2 \xra{\rf_0} f_2 \cdots e_m \xra{\rf_0} f_m \xra{\po^*} r \]
 We will now form a $\po \cup \rf_1$ path from $w'$ to $r$. For each $e_i \xra{\rf_0} f_i$, either $e_i \xra{\rf_1} f_i$ too, in which case we retain that edge. If not, by construction of $\rf_0$, we know that $e_i \xra{\rf_2} f_i$ and $e_i \po^* \rf_1(f_i)$. Replace $e_i \xra{\rf_0} f_i$ by the sequene $e_i \po^* \rf_1(f_i) \xra{\rf_1} f_i$ as in the proof of the first item, to deduce an $\hb_1$ path from $w'$ to $r$. This show that there is a violating cycle for weak-read-coherence induced by $\rf_1$ too, thereby proving the second item. 
\end{proof}

\subsection{Algorithm}
\label{app:algorithm}

\begin{algorithm}[h!]
\caption{$\wramm$-consistency checking in \onewriter\ systems}
\label{algo:one-writer-wra}
\Fn{$\CheckC(\expartial)$}{
    $\rf \gets \Initialize(\expartial)$ \\
    \While{$\rf$ violates weak-read-coherence}{\label{algo-while-main-onewriter}
        $\rf \gets \update(\expartial, \rf)$  \label{algo-update-call}
    }
    \Comment{Now $\rf = \rf_{\min}$}
    \If{$\rf$ satisfies porf-acyclicity}{ \label{algo-check-porf}
        \Return $\expartial$ is $\wramm$-consistent \label{algo-declare-consistent}
    }
    \Else {
        \Return $\expartial$ is not $\wramm$-consistent  \label{algo-porf-cycle}
    }
}

\Fn{$\Initialize(\expartial)$}{
\For{each read $r = \rd(x, v)$}{
$t \gets$  thread containing $r$ \\
$t_x \gets$ writer thread for $x$ \\
\If{$t \neq t_x$}{
    $\rf^{-1}_0(r) \gets$ $\po$-smallest write $w$ of the form $\wt(x,v)$ in $t_x$ \\
    \If{there is no such write}{
        \Return $\expartial$ is not $\wramm$-consistent \label{init-not-consistent1}
    }
}
\If{$t = t_x$}{
    $\rf^{-1}_0(r) \gets$ $\po$-smallest write $w$ of the form $\wt(x,v)$ in $t_x$ s.t. $w~\po^+r$ \\
    \If{there is no such write}{
        \Return $\expartial$ is not $\wramm$-consistent \label{init-not-consistent2}
    }
}
}
\Return $\rf_0$
}

\Fn{$\update(\expartial, \rf_i)$}{
    Pick a cycle violating weak-read-coherence: $r_i \xra{\rf_i^{-1}} w_i \xra{\hb_i} w'_i \xra{\hb_i} r_i$ \label{algo-violatecycle} \\
    Let $r_i = \rd(x, v)$, $w_i = \wt(x, v)$, $w'_i = \wt(x, v')$ \\
    $t \gets$ thread of $r_i$ \\
    $t_x \gets$ writer thread of variable $x$ \\
    \If{$t \neq t_x$}{
    $\rf^{-1}_{i+1}(r_i) \gets$ $\po$-smallest write $w''_i$ of the form $\wt(x,v)$ in $t_x$ s.t. $w'_i ~\po* w''_i$ \label{algo-update1} \\
    \If{there is no such write}{
        \Return $\expartial$ is not $\wramm$-consistent  \label{algo-update-notconsistent1}
    }
}
\If{$t = t_x$}{
    $\rf^{-1}_{i+1}(r_i) \gets$ $\po$-smallest write $w''_i$ of the form $\wt(x,v)$ in $t_x$ s.t. $w'_i ~\po^* w''_i ~\po^+r_i$ \label{algo-update2}\\
    \If{there is no such write}{
        \Return $\expartial$ is not $\wramm$-consistent \label{algo-update-notconsistent2}
    }
}
$\rf^{-1}_{i+1}(r') \gets \rf_i(r')$ for all $r' \neq r_i$ \\
\Return $\rf_{i+1}$
}
\end{algorithm}

\thmwracorrectness*
\begin{proof}
Let $\rf_0$ be the initial $\rf$ function assigned by Line 2 of the algorithm. Notice that if $\Initialize$ executes either Line~\ref{init-not-consistent1} or \ref{init-not-consistent2}, then there is a read for which no write can be assigned, and hence there is no $\rf$ function at all, and there is nothing more to prove. Let us assume $\Initialize$ returns an $\rf$ function $\rf_0$. 

Let $\rf_i$ be the $\rf$ function assigned by the $i^{th}$ call to the $\update$ function  in Line~\ref{algo-update-call}. Let us assume that in the call to $\update(\expartial, \rf_i)$, a read event $r_i$'s write changes from $w_i$ to $w''_i$, to result in $\rf_{i+1}$. 
By construction, $\rf_{i+1}$ differs from $\rf_i$ only at $r_i$. Furthermore, the $\update$ function ensures that  $w_i ~\po^+ w''_i$ (Lines~\ref{algo-update1} and \ref{algo-update2}). 

\emph{The algorithm terminates.} At each call to the $\update$ procedure, a read event $r_i$ is picked and is associated with a new write event $w_i''$ which is $\po$-later than the current write event $w_i$. Therefore, there cannot be more than $n$ calls to the  $\update$ function. 

\emph{The algorithm is correct.} Suppose the algorithm stops after $m$ calls to $\update$ function in Line 4. We will now inductively prove that $\rf_i \sqsubseteq \rf^\dagger$ for every  $\rf$ function $\rf^\dagger$ that satisfies weak-read-coherence. The property is trivially true for $\rf_0$. Assume the property to be true for $\rf_{i}$, that is $\rf_i \sqsubseteq \rf^\dagger$. We will show that the call to the update function preserves the property. Since $\update$ has been called, there is a cycle violating weak-read-coherence: $r_i \xra{\rf_i} w_i \xra{\hb_i} w'_i \xra{\hb_i} r_i$. 

Let $\hb^\dagger$ be the happens-before relation induced by $\rf^{\dagger}$. Since $\rf_i \sqsubseteq \rf^\dagger$, by an argument similar to the proof of the second item of Lemma~\ref{lem:onewriter-axioms-two-properties}, we deduce $w'_i \xra{\hb^{\dagger}} r_i$. As $\rf^{\dagger}$ satisfies weak-read-coherence, this would mean $w'_i ~\po^*~\rf^{\dagger}(r_i)$: since $w'_i$ happens before $r_i$, the event $\rf^\dagger(r_i)$ that writes to $r_i$ should be $\po$-later to $w'_i$. In Lines~\ref{algo-update1} and \ref{algo-update2} of the algorithm, we pick the earliest matching write event $w''_i$ that appears $\po$-after $w'_i$. Therefore, this would imply $w''_i~\po^* \rf^{\dagger}(r_i)$. For the other events $r'$, there is no change in the $\update$ function. Hence, we deduce $\rf_{i+1} \sqsubseteq \rf^{\dagger}$ for every $\rf$ function $\rf^{\dagger}$ that satisfies weak-read-coherence.

When the Algorithm stops at Lines~\ref{algo-update-notconsistent1} or \ref{algo-update-notconsistent2}, it has not found a matching write that is $\po$-later than $w'_i$. From the above invariant that $\rf_i \sqsubseteq \rf^{\dagger}$, this means there is no consistent $\rf^\dagger$ for this partial execution graph $\expartial$. The algorithm correctly returns the verdict of inconsistency in Lines \ref{algo-update-notconsistent1} or \ref{algo-update-notconsistent2}. When the algorithm reaches \ref{algo-check-porf}, an $\rf$ function $\rf_m$ satisfying weak-read-coherence is found. By the above invariant, $\rf_m = \rf_{\min}$, the smallest $\rf$ that satisfies weak-read-coherence. Then, we test for porf-acyclicity. From the first item of Lemma~\ref{lem:onewriter-axioms-two-properties}, if it does not satisfy porf-acyclicity, no bigger $\rf$ function satisfies it, and we can conclude that both weak-read-coherence and porf-acyclicity cannot be simultaneously satisfied. Hence the algorithm correctly returns that the partial execution graph is no $\wramm$-consistent in Line~\ref{algo-porf-cycle}.

\noindent \textit{Complexity.} As discussed before, there are atmost $n$ calls to $\update$. Before each call, weak-read-coherence needs to be checked. This can be posed as a graph problem: there are $n$ nodes corresponding to the events, and there are three kinds of edges $\po$, $\rf$ and $\rf^{-1}$; does there exist a cycle of the form $r \xra{\rf^{-1}} w_x \xra{\po^*} w'_x \xra{(\po \cup \rf)^*} r$?  For each read event $r$ and its corresponding write event $w$, we need to check if there exist a $w'$ such that (1) there is a path of $\po$ edges from $w$ to $w'$, and (2) there is a path from $w'$ to $r$ over a combination of $\po$ and $\rf$ edges. For each read, this can be achieved by in $\mathcal{O}(n)$ time by a graph search. Hence, overall, across all possible reads, this algorithm takes $\mathcal{O}(n^2)$ time. Now, once inside the $\update$ procedure, finding the next write takes $\mathcal{O}(n)$ time. Therefore, each call to update costs $\mathcal{O}(n^2)$ and there are $n$ calls. This gives a complexity of $\mathcal{O}(n^3)$. Furthermore, the $\Initialize$ procedure takes at most $\mathcal{O}(n^2)$: for each read, there is atmost $\mathcal{O}(n)$ time spent on finding the matching write. The final check for $\po \cup \rf$ cycle can be done in $\mathcal{)}(n)$. Hence we get an overall complexity of $\mathcal{O}(n^3)$. 
\end{proof}

\subsection{Super linear lower bound for checking weak-read-coherence}
\label{app:super-linear-lower-bound}
This section is devoted to the proof of Proposition~\ref{prop:wra-onewriter-lowerbound}. 

\noindent{\bf{Reduction}}. Given an undirected graph $G = (V_G, E_G)$ with $n$ vertices, we construct a partial execution graph $\expartial = \tup{\E, \po}$ 
with $|\E|=\mathcal{O}(V_G+E_G)$ such that $\expartial$ satisfies weak-read-coherence iff $G$ is triangle-free.
For simplicity, we let $V_G=\{1,\dots, n\}$. As the graph is undirected, we assume that if $(u,v) \in E$ then $(v, u) \in E$.

\smallskip 
\noindent \emph{Events, threads  and memory locations}. For each  vertex $v \in V_G$, we have threads $t_{a_v}, t_{b_v}, t_{c_v}$ and $t_{d_v}$. 
For each edge $(u,v) \in E_G$, we have locations $a_{uv}, b_{uv}, c_{uv}$. We also have locations $a_v, b_v, c_v$ for $v \in V_G$. 
The events are $\wt(a_v,0), \wt(a_v,1), \rd(a_v, 0)$, as well as $\{\wt(x_{uv},0), \rd(x_{uv},0) \mid x \in \{a,b,c\}\}$. 
We also have events $\wt(b_v),\wt(c_v)$ (the values do not matter).  
This gives us $4|V_G|$ threads, $3(|E_G|+|V_G|)$ locations and $5|V_G|+2|E_G|$ events. 

We now describe the threads. 
Let $v \in V_G$ and $x \in \{a,b, c\}$. Let $N_v=\{\alpha \mid (v, \alpha) \in E_G\}$.
Let $W(x_{vw},0), R(x_{vw},0)$ be  macros representing respectively, 
the sequence of events $\wt(x_{v\alpha_1},0) \dots \wt(x_{v\alpha_m},0)$ 
and $\rd(x_{v\alpha_1},0) \dots \rd(x_{v\alpha_m},0)$ such that 
$\alpha_1, \dots, \alpha_n \in N_v$. 

\begin{align*}
& t_{a_v}: \wt(a_v,0) \wt(a_v,1) W(a_{vw},0) \\
& t_{b_v}: R(a_{wv},0) ~w(b_v)~ W(b_{vw},0) \\
& t_{c_v}: R(b_{wv},0)~w(c_v)~W(c_{vw},0) \\
& t_{d_v}: R(c_{wv},0)\rd(a_v,0)
\end{align*}

The construction is such that each location is written into by a unique thread. Further, each location is accessed only by at most two threads. Each location (barring $a_v$) is written into exactly once by its writer thread and the $\rf$ is trivially determined. Thus, the constructed $\expartial$ qualifies as a single writer. 
The construction has the following properties. 
\begin{enumerate}
    \item $(x, y) \in E_G$ iff $\wt(a_{x},0)~\po~\wt(a_{x},1)~\hb~\wt(b_{y})$
    \item $(x, y) \in E_G$ iff $\wt(b_{x})~\hb~\wt(c_{y})$
    \item $(x, y) \in E_G$ iff $\wt(c_{x})~\hb~\rd(a_{y},0)$
\end{enumerate}

Thus, a path $x-y-z$ of length 2 results in $\wt(a_x,1)~\po~\wt(a_x,0)~\hb~\wt(b_y)~\hb~\wt(c_z)$, and  a
path $x-y-z-w$ of length 3 results in 
\[\wt(a_x,1)~\po~\wt(a_x,0)~\hb~\wt(b_y)~\hb~\wt(c_z)~\hb~\rd(a_w,0) \]

The key invariant behind the reduction is that for any $v \in V_G$, $\wt(a_v,1)~ \hb ~\rd(a_v,0)$ iff $G$ contains a triangle. The former 
violates weak-read-coherence. Figure \ref{fig:super-linear-lower-bound} shows the reduction on a graph with 3 vertices $\{1,2,3\}$, having edges between every distinct pair of vertices.

\begin{figure*}[!htbp]
\newcommand{\xtstep}{0.3}
\newcommand{\ytstep}{0.7}
\newcommand{\ybias}{-0.3 }
\newcommand{\xstep}{1.4}
\newcommand{\ystep}{-0.6}
\newcommand{\xtscale}{0.8}
\def\crossoutopacity{0.3}
\def \numevents{5}
\def\scale{0.73}
\centering
\scalebox{\scale}{
\begin{tikzpicture}[thick,font=\footnotesize,
pre/.style={<-,shorten >= 2pt, shorten <=2pt, very thick},
post/.style={->,shorten >= 3pt, shorten <=3pt,   thick},
seqtrace/.style={line width=1},
und/.style={very thick, draw=gray},
virt/.style={circle,draw=black!50,fill=black!20, opacity=0}]
\tikzstyle{event} = [rectangle, fill=white, inner sep=0, minimum height=4.6mm, minimum width=14mm, rounded corners]

\node[] (S11) at (-5.5*\xstep,.1) {\normalsize $t_{a_1}$};
\node[] (S12) at (-5.5*\xstep,\numevents * \ystep) {};
\node[] (S21) at (-4.5*\xstep,.1) {\normalsize $t_{a_2}$};
\node[] (S22) at (-4.5*\xstep,\numevents * \ystep) {};
\node[] (S31) at (-3.5*\xstep,.1) {\normalsize $t_{a_3}$};
\node[] (S32) at (-3.5*\xstep,\numevents * \ystep) {};
\node[] (S41) at (-2.5*\xstep,.1) {\normalsize $t_{b_1}$};
\node[] (S42) at (-2.5*\xstep,\numevents * \ystep) {};
\node[] (S51) at (-1.5*\xstep,.1) {\normalsize $t_{b_2}$};
\node[] (S52) at (-1.5*\xstep,\numevents * \ystep) {};
\node[] (S61) at (-.5*\xstep,.1) {\normalsize $t_{b_3}$};
\node[] (S62) at (-.5*\xstep,\numevents * \ystep) {};
\node[] (S71) at (.5*\xstep,.1) {\normalsize $t_{c_1}$};
\node[] (S72) at (.5*\xstep,\numevents * \ystep) {};
\node[] (S81) at (1.5*\xstep,.1) {\normalsize $t_{c_2}$};
\node[] (S82) at (1.5*\xstep,\numevents * \ystep) {};
\node[] (S91) at (2.5*\xstep,.1) {\normalsize $t_{c_3}$};
\node[] (S92) at (2.5*\xstep,\numevents * \ystep) {};
\node[] (S101) at (3.5*\xstep,.1) {\normalsize $t_{d_1}$};
\node[] (S102) at (3.5*\xstep,\numevents * \ystep) {};
\node[] (S111) at (4.5*\xstep,.1) {\normalsize $t_{d_2}$};
\node[] (S112) at (4.5*\xstep,\numevents * \ystep) {};
\node[] (S121) at (5.5*\xstep,.1) {\normalsize $t_{d_3}$};
\node[] (S122) at (5.5*\xstep,\numevents * \ystep) {};

\draw[seqtrace] (S11) to (S12);
\draw[seqtrace] (S21) to (S22);
\draw[seqtrace] (S31) to (S32);
\draw[seqtrace] (S41) to (S42);
\draw[seqtrace] (S51) to (S52);
\draw[seqtrace] (S61) to (S62);
\draw[seqtrace] (S71) to (S72);
\draw[seqtrace] (S81) to (S82);
\draw[seqtrace] (S91) to (S92);
\draw[seqtrace] (S101) to (S102);
\draw[seqtrace] (S111) to (S112);
\draw[seqtrace] (S121) to (S122);

\node[event] (11) at (-5.5*\xstep, 1*\ystep + 0*\ybias) {$\wt(a_1,0)$};
\node[event] (12) at (-5.5*\xstep, 2*\ystep + 0*\ybias) {$\wt(a_1,1)$};
\node[event] (13) at (-5.5*\xstep, 3*\ystep + 0*\ybias) {$\wt(a_{12},0)$};
\node[event] (14) at (-5.5*\xstep, 4*\ystep + 0*\ybias) {$\wt(a_{13},0)$};

\node[event] (11) at (-4.5*\xstep, 1*\ystep + 0*\ybias) {$\wt(a_2,0)$};
\node[event] (12) at (-4.5*\xstep, 2*\ystep + 0*\ybias) {$\wt(a_2,1)$};
\node[event] (13) at (-4.5*\xstep, 3*\ystep + 0*\ybias) {$\wt(a_{21},0)$};
\node[event] (14) at (-4.5*\xstep, 4*\ystep + 0*\ybias) {$\wt(a_{23},0)$};

\node[event] (11) at (-3.5*\xstep, 1*\ystep + 0*\ybias) {$\wt(a_3,0)$};
\node[event] (12) at (-3.5*\xstep, 2*\ystep + 0*\ybias) {$\wt(a_3,1)$};
\node[event] (13) at (-3.5*\xstep, 3*\ystep + 0*\ybias) {$\wt(a_{31},0)$};
\node[event] (14) at (-3.5*\xstep, 4*\ystep + 0*\ybias) {$\wt(a_{32},0)$};

\node[event] (11) at (-2.5*\xstep, 1*\ystep + 0*\ybias) {$\rd(a_{31},0)$};
\node[event] (12) at (-2.5*\xstep, 2*\ystep + 0*\ybias) {$\rd(a_{21},0)$};
\node[event] (12) at (-2.5*\xstep, 3*\ystep + 0*\ybias) {$\wt(b_1)$};
\node[event] (13) at (-2.5*\xstep, 4*\ystep + 0*\ybias) {$\wt(b_{12},0)$};
\node[event] (14) at (-2.5*\xstep, 5*\ystep + 0*\ybias) {$\wt(b_{13},0)$};

\node[event] (11) at (-1.5*\xstep, 1*\ystep + 0*\ybias) {$\rd(a_{32},0)$};
\node[event] (12) at (-1.5*\xstep, 2*\ystep + 0*\ybias) {$\rd(a_{12},0)$};
\node[event] (12) at (-1.5*\xstep, 3*\ystep + 0*\ybias) {$\wt(b_2)$};
\node[event] (13) at (-1.5*\xstep, 4*\ystep + 0*\ybias) {$\wt(b_{21},0)$};
\node[event] (14) at (-1.5*\xstep, 5*\ystep + 0*\ybias) {$\wt(b_{23},0)$};

\node[event] (11) at (-.5*\xstep, 1*\ystep + 0*\ybias) {$\rd(a_{23},0)$};
\node[event] (12) at (-.5*\xstep, 2*\ystep + 0*\ybias) {$\rd(a_{13},0)$};
\node[event] (12) at (-.5*\xstep, 3*\ystep + 0*\ybias) {$\wt(b_3)$};
\node[event] (13) at (-.5*\xstep, 4*\ystep + 0*\ybias) {$\wt(b_{31},0)$};
\node[event] (14) at (-.5*\xstep, 5*\ystep + 0*\ybias) {$\wt(b_{32},0)$};

\node[event] (11) at (.5*\xstep, 1*\ystep + 0*\ybias) {$\rd(b_{31},0)$};
\node[event] (12) at (.5*\xstep, 2*\ystep + 0*\ybias) {$\rd(b_{21},0)$};
\node[event] (12) at (.5*\xstep, 3*\ystep + 0*\ybias) {$\wt(c_1)$};
\node[event] (13) at (.5*\xstep, 4*\ystep + 0*\ybias) {$\wt(c_{12},0)$};
\node[event] (14) at (.5*\xstep, 5*\ystep + 0*\ybias) {$\wt(c_{13},0)$};

\node[event] (11) at (1.5*\xstep, 1*\ystep + 0*\ybias) {$\rd(b_{32},0)$};
\node[event] (12) at (1.5*\xstep, 2*\ystep + 0*\ybias) {$\rd(b_{12},0)$};
\node[event] (12) at (1.5*\xstep, 3*\ystep + 0*\ybias) {$\wt(c_2)$};
\node[event] (13) at (1.5*\xstep, 4*\ystep + 0*\ybias) {$\wt(c_{21},0)$};
\node[event] (14) at (1.5*\xstep, 5*\ystep + 0*\ybias) {$\wt(c_{23},0)$};

\node[event] (11) at (2.5*\xstep, 1*\ystep + 0*\ybias) {$\rd(b_{23},0)$};
\node[event] (12) at (2.5*\xstep, 2*\ystep + 0*\ybias) {$\rd(b_{13},0)$};
\node[event] (12) at (2.5*\xstep, 3*\ystep + 0*\ybias) {$\wt(c_3)$};
\node[event] (13) at (2.5*\xstep, 4*\ystep + 0*\ybias) {$\wt(c_{31},0)$};
\node[event] (14) at (2.5*\xstep, 5*\ystep + 0*\ybias) {$\wt(c_{32},0)$};

\node[event] (11) at (3.5*\xstep, 1*\ystep + 0*\ybias) {$\rd(c_{31},0)$};
\node[event] (12) at (3.5*\xstep, 2.5*\ystep + 0*\ybias) {$\rd(c_{21},0)$};
\node[event] (13) at (3.5*\xstep, 4*\ystep + 0*\ybias) {$\rd(a_{1},0)$};

\node[event] (11) at (4.5*\xstep, 1*\ystep + 0*\ybias) {$\rd(c_{32},0)$};
\node[event] (12) at (4.5*\xstep, 2.5*\ystep + 0*\ybias) {$\rd(c_{12},0)$};
\node[event] (13) at (4.5*\xstep, 4*\ystep + 0*\ybias) {$\rd(a_{2},0)$};

\node[event] (11) at (5.5*\xstep, 1*\ystep + 0*\ybias) {$\rd(c_{23},0)$};
\node[event] (12) at (5.5*\xstep, 2.5*\ystep + 0*\ybias) {$\rd(c_{13},0)$};
\node[event] (13) at (5.5*\xstep, 4*\ystep + 0*\ybias) {$\rd(a_{3},0)$};

\end{tikzpicture}
}
\caption{
A partial execution $\expartial$ for a graph $G=(\{1,2,3\}, \{(1,2), (2,1),(1,3), (3,1), (2,3), (3,2)\})$. 
We have $(\wt(a_1,0), \rd(a_1,0))\in \hb$  violating weak read coherence in $\expartial$.
\label{fig:super-linear-lower-bound}
}
\end{figure*}
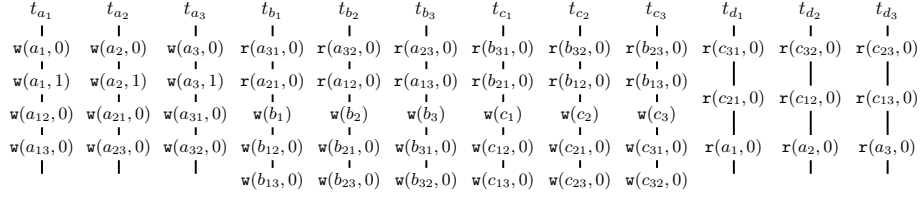

\noindent{\bf{Correctness and Time Complexity}}. Assume that $G$ has a triangle $(x, y, z)$. Wlg, let $x < y < z$. Then $\wt(a_x,1) ~\hb~ \rd(a_x,0)$ as follows: 
$\wt(a_x,1) ~\po ~\wt(a_{xy},0)$ $~\rf~ \rd(a_{xy},0)~\po~\wt(b_{yz},0)~\rf ~$ $\rd(b_{yz},0)~\po~\wt(c_{zx},0)~\rf~\rd(c_{zx},0)~\po~\rd(a_x,0)$. 
Together with $\wt(a_x,0)~\po~\wt(a_x,1)$, we obtain a weak-read-coherence violation. 
 Triangle-freeness in the input graph $G$ ensures satisfying weak-read-coherence. In the construction, the only way to violate weak-read-coherence 
is to have $\wt(a_v,0)~\po~\wt(a_v,1)~\hb ~\rd(a_v,0)$ for some $v \in V_G$. This is achieved by:
\[ \wt(a_v,1)~\po~\wt(a_{vu},0)~\hb~\wt(b_{uw},0)~\hb~\wt(c_{wv},0)~\rf~\rd(c_{wv},0)~\po~\rd(a_v,0)\] 

This points to a triangle $v u w$ in $G$.   
The total time to construct $\expartial$ is $O(V_G+E_G)$.

\section{Appendix for Section~\ref{sec:hardness}}
\label{app:hardness}

\subsection{An example to illustrate the reduction}
\label{app:hardness-example}
Table~\ref{tab:example-finegrainedwra} gives an instance of the fine-grained hardness reduction for \threewriter~. 
A satisfying assignment for $\varphi$ is $x_1=1, x_2=0, x_3=0$. The $\rf$ synthesis would be as follows:
\begin{itemize}
    \item $\tuple{T_{x_1}:\wt(c_1,1)}~\rf~\tuple{T_f:\rd(c_1,1)}$, $\tuple{T_{x_1}:\wt(c_2,1)}~\rf~\tuple{T_f:\rd(c_2,1)}$
    \item $\tuple{T_1^1:\wt(s_1,2)}~\rf~\tuple{T_f:\rd(s_1,2)}$, $\tuple{T_1^0:\wt(s_1,1)}~\rf~\tuple{T_f:\rd(s_1,1)}$
    \item $\tuple{T_2^0:\wt(s_2,2)}~\rf~\tuple{T_f:\rd(s_2,2)}$, $\tuple{T_2^1:\wt(s_2,1)}~\rf~\tuple{T_f:\rd(s_2,1)}$
    \item $\tuple{T_3^0:\wt(s_3,2)}~\rf~\tuple{T_f:\rd(s_3,2)}$, $\tuple{T_3^1:\wt(s_3,1)}~\rf~\tuple{T_f:\rd(s_3,1)}$
\end{itemize}

All other $\rf$-edges are trivial. Note that the resulting execution graph satisfies all the weak memory axioms of $\sramm$ (Refer Table~\ref{table:ax:models-2}).
\begin{table}[h]
    
\noindent
\begin{minipage}{0.15\textwidth}
\centering
\begin{tabular}{c | c}
    $T_1^1$ & $T_1^0$ \\
    $\wt(s_1,1)$~& ~$\wt(s_1,1)$ \\ 
    $\wt(s_1,2)$~& ~$\wt(s_1,2)$ \\ 
    $\wt(v_1, 1)$~& ~$\wt(v_1, 0)$ \\

\end{tabular}
\vspace{.1cm}

\begin{tabular}{c | c}
    $T_2^1$ & $T_2^0$ \\
    $\wt(s_2,1)$~& ~$\wt(s_2,1)$ \\ 
    $\wt(s_2,2)$~& ~$\wt(s_2,2)$ \\ 
    $\wt(v_2, 1)$~& ~$\wt(v_2, 0)$ \\
\end{tabular}
\vspace{.1cm}

\begin{tabular}{c | c}
    $T_3^1$ & $T_3^0$ \\
    $\wt(s_3,1)$~& ~$\wt(s_3,1)$ \\ 
    $\wt(s_3,2)$~& ~$\wt(s_3,2)$ \\ 
    $\wt(v_3, 1)$~& ~$\wt(v_3, 0)$ \\
   
\end{tabular}

\end{minipage}
\hfill
\begin{minipage}{0.15\textwidth}
\centering
\centering
\begin{tabular}{c | c}
    $T_{x_1}$ & $T_{\neg x_1}$ \\
    $\rd(v_1,1)$ ~& ~$\rd(v_1,0)$ \\ 
    $\wt(c_1,1)$~ &  \\ 
    $\wt(c_2, 1)$~&  \\

\end{tabular}
\vspace{.1cm}

\begin{tabular}{c | c}
   $T_{x_2}$ & $T_{\neg x_2}$ \\
    $\rd(v_2,1)$ ~& ~$\rd(v_2,0)$ \\ 
    $\wt(c_1,1)$~ & ~$\wt(c_2,1)$ \\ 
    
\end{tabular}
\vspace{.1cm}

\begin{tabular}{c | c}
    $T_{x_3}$ & $T_{\neg x_3}$ \\
    $\rd(v_3,1)$ ~& ~$\rd(v_3,0)$ \\ 
    $\wt(c_1,1)$~ & ~$\wt(c_2,1)$ \\

\end{tabular}
\end{minipage}
\hfill
\begin{minipage}{0.15\textwidth}
\centering
\begin{tabular}{c}
    $T_f$ \\
    $\rd(c_1, 1)$ \\
    $\rd(c_2, 1)$ \\
    $\rd(s_1, 2)$ \\
    $\rd(s_1, 1)$ \\
    $\rd(s_2, 2)$ \\
    $\rd(s_2, 1)$ \\
    $\rd(s_3, 2)$ \\
    $\rd(s_3, 1)$ \\

\end{tabular}
\end{minipage}
\vspace{2em}
\caption{$\varphi=(x_1\lor x_2\lor x_3)\land (x_1 \lor \neg x_2\lor \neg x_3)$}
    \label{tab:example-finegrainedwra}
\end{table}

\subsection{Missing proofs of Section~\ref{sec:hardness-threewriter}}
\lemsatimpliessra*
\begin{proof}
Suppose $\varphi$ has a satisfying assignment $A$. We need to synthesize $\rf$ and $\mo$ such that the resulting execution graph satisfes porf-acyclicity, strong-write-coherence and strong-read-coherence. Since $A$ is a satisfying assignment, for every clause $C_j$ there is a literal $\ell_j$ that is made true. Let $\rf(\rd(c_j, 1))$ be the write event $\wt(c_j, 1)$ in $T_{\ell_j}$. Assigning writes this way introduces an $\hb$ path from $T_i^{A(x_i)}$ for all variables that are indeed being utilized to make the clauses true. Importantly, there is no $\hb$ path created from $T_{i}^{\neg A(x_i)}$ to the thread $T_f$. Therefore, the read event $\rd(s_i, 2)$ can read from the write in $T_i^{A(x_i)}$ and $\rd(s_i, 1)$ can read from the unused thread $T_i^{\neg A(x_i)}$. This gives an $\rf$ which satisfies weak-read-coherence. By design of the execution graph, the $\rf$ also satisfies porf-acyclicity. However, for the lemma, we need to show consistency for $\sramm$, hence we require to prove strong-write-coherence and read-coherence. For this we need to also synthesize an $\mo$. 
Consider any  $\mo$ relation that respects the following orders:
    \begin{itemize}[leftmargin=0.1em]
    \item $T_{i}^{A(x_i)}: \wt(s_i, 1) \xra{\mo} T_i^{\neg A(x_i)}: \wt(s_i, 1) \xra{\mo}T_{i}^{\neg A(x_i)}: \wt(s_i, 2) \xra{\mo} T_i^{A(x_i)}: \wt(s_i, 2)$,
    \item For each clause $j$ there are three threads writing $\wt(c_j, 1)$.  For each clause $C_j$ choose a specific literal $\ell_j$ that is made true by the satisfying assignment $A$. The $\mo$ relation orders the other write statements $\wt(c_j, 1)$ before $T_{\ell_j}: \wt(c_j, 1)$. 
    \end{itemize}  
It can be shown that the constructed $\rf$ and $\mo$ satisfy strong-write-coherence and read-coherence.

\paragraph*{Strong-write-coherence.} We will show that the constructed $\rf$ and $\mo$ satisfy strong-write-coherence, that is acyclicity of $\hb \cup \mo$. Suppose not. There is a cycle $\mathcal{C}$ consisting of $\po$, $\rf$ and $\mo$ edges (thereby violating strong-write-coherence). Since we have seen that there is no $\po \cup \rf$ cycle, cycle $\mathcal{C}$ should contain an $\mo$ edge.  An edge  $\tuple{T_{i}^1: \wt(s_i, 1)} ~\mo~\tuple{T_i^1: \wt(s_i, 1)}$ cannot be present in $\mathcal{C}$ since there can be no incoming $\po$ or $\rf$ edge to $\tuple{T_{i}^1: \wt(s_i, 1)}$. Now, consider an edge $\tuple{T_{i}^0: \wt(s_i, 2)}~\mo~\tuple{T_{i}^1: \wt(s_i, 2)}$. The only incoming edge to $\tuple{T_{i}^0: \wt(s_i, 2)}$ is a $\po$ edge from $\tuple{T_{i}^0: \wt(s_i, 1)}$, and there can be no $\po$ or $\rf$ edges into the latter event. Hence, even the second type of $\mo$ edges cannot be present in $\mathcal{C}$. 
    This leaves us with $\mo$ edges of the form $\tuple{T_{\ell'}: \wt(c_j, 1)} ~\mo~ \tuple{T_{\ell_j}: \wt(c_j,1)}$. Notice that from the latter event, $\tuple{T_{\ell_j}: \wt(c_j, 1)}$ the only outgoing edges are to $T_f$, and from $T_f$, there are no outgoing edges, hence there is no path from an event in $T_f$ to $\tuple{T_{\ell'}: \wt(c_j, 1)}$. This proves that such an $\mo$ edge cannot be part of an $\hb \cup \po$ cycle too. We conclude that there are no $\hb \cup \mo$ cycles.

    \paragraph*{Read-coherence.} Finally, we need to prove read-coherence, i.e., $\irr(\rf^{-1}; \mo_x; \hb)$. If read coherence is violated, there is a read event $r$, and two write events $w, w'$ such that $w~\rf~r$, $w~\mo~w'$ and $w'~\hb~r$. Such a violation can potentially happen on variables which have multiple write events: in our case $s_i$ and $c_j$.
    To do this, let us collect the view of each read event on these variables and note that no read observes a value that is $\mo$-overwritten in its happens-before past. 
    \begin{itemize}[leftmargin=0.1em]
    \item By construction $\tuple{T_f: \rd(s_i, 2)}$ contains $T_i^{A(x_i)}$ and does not contain $T_i^{\neg A(x_i)}$. Therefore there is no write event $\mo$-later than $T_i^{A(x_i): \wt(s_i, 2)}$ in this view. 
    \item For the read event $\tuple{T_f: \rd(s_i, 1)}$, all writes to $s_i$ except $\tuple{T_i^{\neg{A(x_i)}}: \wt(s_i, 2)}$ are visible to it. Moreover, the event $\tuple{T_i^{\neg A(x_i)}: \wt(s_i, 1)}$ that writes to it is the $\mo$-latest in the view.
    \item For the variable $c_j$, the view of $\tuple{T_f: \rd(c_j, 1)}$ contains only the write $\wt(c_j, 1)$ from the thread $T_{\ell_j}$ corresponding to the literal $\ell_j$ that satisfies $C_j$ under $A$. All other writes to $c_j$ occur before this write in $\mo$.
    \end{itemize}
    
    This shows that the constructed $\rf$ and $\mo$ are consistent with $\sramm$.
\end{proof}

\subsection{Missing proofs from Section~\ref{sec:hardness-twowriter}}
\label{sec:app-np-hardness}

\begin{lemma}
If  $\ypartial_\varphi \models \wramm$, then $\varphi$ has a satisfying assignment. 
\end{lemma}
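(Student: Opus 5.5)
I would follow the proof of Lemma~\ref{lem:wra-eth} and adapt it to the modified construction $\ypartial_\varphi$. Assume $\ypartial_\varphi \models \wramm$, and fix an $\rf$ that satisfies porf-acyclicity and weak-read-coherence. The first step is to read off, for each clause $C_j$, a literal thread that ``supports'' it. In $T_f$ the read $\rd(d_j,1)$ now reads either from $\wt(d_j,1)$ in $T_{\ell_j^3}$ or from $\wt(d_j,1)$ in the new thread $T_j$.
\begin{itemize}
\item In the first case, we say $C_j$ is supported by $T_{\ell_j^3}$.
\item In the second case, the event $\rd(c_j,1)$ of $T_j$ must read from $\wt(c_j,1)$ in $T_{\ell_j^1}$ or in $T_{\ell_j^2}$, since these are the only writes to $c_j$. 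We say $C_j$ is supported by that thread.
\end{itemize}
In either case there is an $\hb$ path from the supporting literal thread $T_\ell$ to $\rd(d_j,1)$ in $T_f$. In the second case the path is $T_\ell:\wt(c_j,1) \xra{\rf} T_j:\rd(c_j,1) \xra{\po} T_j:\wt(d_j,1) \xra{\rf} T_f:\rd(d_j,1)$. The read $\rd(v_i,\cdot)$ at the top of $T_\ell$ must read from the unique matching write, so there is also an $\hb$ path from $T_i^1$ (if $\ell = x_i$) or from $T_i^0$ (if $\ell = \neg x_i$) to that read in $T_f$.

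Next I would define the assignment: $A(x_i)=\mathit{true}$ if some clause is supported by $T_{x_i}$, and $A(x_i)=\mathit{false}$ otherwise. Every clause is supported by one of its own literal threads. It therefore suffices to show that no variable $x_i$ has both $T_{x_i}$ and $T_{\neg x_i}$ supporting some clause. If both did, the $\hb$ paths above would give $T_i^1:\wt(s_i,2) \xra{\hb} T_f:\rd(s_i,1)$ and $T_i^0:\wt(s_i,2) \xra{\hb} T_f:\rd(s_i,1)$, because the clause reads precede the $s$-reads in $T_f$ in program order. The read $\rd(s_i,1)$ must read from $\wt(s_i,1)$ in $T_i^1$ or in $T_i^0$. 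Either choice $w$ satisfies $w \xra{\po} \wt(s_i,2) \xra{\hb} \rd(s_i,1)$ within the same thread, which violates weak-read-coherence. This is a contradiction, so $A$ is well defined and satisfying.

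The main point needing care is the first step: confirming that the modified construction does not introduce new writes to $c_j$ or $d_j$. After the modification, $c_j$ is written only by $T_{\ell_j^1}$ and $T_{\ell_j^2}$, and $d_j$ only by $T_{\ell_j^3}$ and $T_j$, so the case split is exhaustive. I would also check that the routing through $T_j$ still yields a genuine $\hb$ path into $T_f$ before the $s_i$ reads. This holds because the $\rf$ edge into $\rd(d_j,1)$ is composed with the $\po$ edge in $T_f$ down to $\rd(s_i,1)$. Everything else is identical to the argument for Lemma~\ref{lem:wra-eth}.
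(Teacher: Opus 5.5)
Your proposal is correct and follows the same route as the paper's proof. You trace each $\rd(d_j,1)$ in $T_f$, either directly or through $T_j$, back to a literal thread. You then observe that if both $T_{x_i}$ and $T_{\neg x_i}$ were used, both writes $\wt(s_i,2)$ would lie in the $\hb$-past of $\rd(s_i,1)$, so every write it could read from gives a weak-read-coherence violation. Your version is more explicit than the paper's terse proof, since you spell out the exhaustive case split on the writers of $c_j$ and $d_j$ and the exact violating cycle.
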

\begin{proof}
The proof proceeds similar to Lemma~\ref{lem:wra-eth}. We need to argue that if there is an $\rf$ that satisfies weak-read-coherence, then there cannot be two distinct reads $\rd(d_{j}, 1)$ and $\rd(d_{p}, 1)$ in $T_f$ that (directly or indirectly) read from threads corresponding to both a literal and its negation for some variable $x_i$. The $\rd(d_j, 1)$ event in $T_f$ can either read from  $T_{\ell_j^3}$ or from $T_{j}$. Similarly, $\rd(d_p, 1)$ can read from $T_{\ell_p^3}$ or $T_p$. If it so happens that the two threads from which $\rd(d_{j}, 1)$ and $\rd(d_{p}, 1)$ read from are of the form $T_{x_i}$ and $T_{\neg x_i}$, then both $\tuple{T_i^0: \wt(s_i, 2)}$ and $\tuple{T_i^1: \wt(s_i, 2)}$ are in the view of $\tuple{T_f: \rd(s_i, 2)}$. Then, since the execution satisfies weak-read-coherence, there is no write event available for $\tuple{T_f: \rd(s_i, 1)}$. A contradiction. Hence the set of literals making the clauses true in $T_f$ encode a satisfying assignment. 
\end{proof}

\begin{lemma}
If $\varphi$ has a satisfying assignment, then $\ypartial_\varphi^{\ramm} \models \sramm$. 
\end{lemma}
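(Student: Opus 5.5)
We adapt the proof of Lemma~\ref{lem:sat-implies-sra} to $\ypartial_\varphi$, adding the new threads $T_j$ and the new variables $d_j$. Fix a satisfying assignment $A$, and for each clause $C_j$ fix a true literal $\ell_j \in \{\ell_j^1,\ell_j^2,\ell_j^3\}$.

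\textbf{Choice of $\rf$.} The $\rf$ edges for the $v_i$ variables are forced: $\rd(v_i,1)$ reads from $T_i^1$ and $\rd(v_i,0)$ reads from $T_i^0$. For each clause, the $d_j$ and $c_j$ reads are chosen as follows.
\begin{itemize}
\item If $\ell_j = \ell_j^3$, let $T_f:\rd(d_j,1)$ read from $T_{\ell_j^3}:\wt(d_j,1)$. The read $T_j:\rd(c_j,1)$ may read from either writer of $c_j$; we take the one from $T_{\ell_j^1}$.
\item Otherwise $\ell_j \in \{\ell_j^1,\ell_j^2\}$. Let $T_j:\rd(c_j,1)$ read from $T_{\ell_j}:\wt(c_j,1)$, and let $T_f:\rd(d_j,1)$ read from $T_j:\wt(d_j,1)$.
\end{itemize}
In both cases, the only threads $T_{\ell}$ that reach $T_f$ through $\hb$ are those with $\ell$ true under $A$. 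Threads $T_j$ whose write to $d_j$ is unused do not reach $T_f$. Consequently, for every $i$, only $T_i^{A(x_i)}$ is $\hb$-before $T_f$. We then let $\rd(s_i,2)$ read from $T_i^{A(x_i)}$ and $\rd(s_i,1)$ read from $T_i^{\neg A(x_i)}$, exactly as before.

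\textbf{Choice of $\mo$.} For $s_i$ we keep the same $\mo$ as in Lemma~\ref{lem:sat-implies-sra}. For $c_j$ and $d_j$, each of which has at most two writes, we place the write that is read last in $\mo$.

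\textbf{Porf-acyclicity.} Every $\rf$ edge goes strictly forward in the layering
\[
T_i^b \to T_{\ell} \to T_j \to T_f .
\]
There are no writes in $T_f$, and $T_j$ only writes $d_j$, which is read by $T_f$ alone. Hence no $\po\cup\rf$ cycle exists.

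\textbf{Strong-write-coherence.} Take a hypothetical $\hb\cup\mo$ cycle. The $s_i$ edges are excluded by the same argument as in Lemma~\ref{lem:sat-implies-sra}, since their sources have no incoming $\po$ or $\rf$ edges, except along the top of their own thread. For $\mo$ edges on $c_j$ or $d_j$, the target is the write that is read, and from it every path leads only forward in the layering, into $T_j$ or $T_f$. It cannot return to the source, which lies in some $T_{\ell}$ at an equal or earlier layer. The only path from $T_j$ back to a $T_\ell$ would have to pass through $T_f$, which has no outgoing edges.

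\textbf{Read-coherence.} For each read on $c_j$, $d_j$ or $s_i$, we check that its writer is $\mo$-maximal among the writes in its $\hb$-view.
\begin{itemize}
\item For $s_i$ the argument is verbatim the earlier one.
\item For $c_j$: the only reader is $T_j$. Its $\hb$-view contains only the write it reads, since the other $c_j$ writer is not $\hb$-before $T_j$.
\item For $d_j$: if $T_f$ reads from $T_j$, the other writer $T_{\ell_j^3}$ is $\mo$-earlier, so the read is fine. If $T_f$ reads from $T_{\ell_j^3}$, then $T_j$ must not be in the view of $T_f$. This holds because $T_j$'s write to $d_j$ is unread, and $T_j$ has no other outgoing $\rf$ edge.
\end{itemize}

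The main obstacle is making sure the extra $\hb$ paths through the threads $T_j$ never bring $T_i^{\neg A(x_i)}$ into the view of $T_f$. This is handled by the choices above: $T_j$ reads from a false literal only when its write to $d_j$ is unused, and in that case $T_j$ does not reach $T_f$.
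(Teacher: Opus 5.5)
Your construction is the same as the paper's: the same $\rf$ choices for $c_j$ and $d_j$, the same ``unused write $\mo$-before used write'' rule, and the $s_i$ relations copied from Lemma~\ref{lem:sat-implies-sra}. That copied part is where your proof breaks, and the paper's proof has the same slip.

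\textbf{The $s_i$ order violates read-coherence.} Take the order $T_i^{A(x_i)}{:}\wt(s_i,1) \xra{\mo} T_i^{\neg A(x_i)}{:}\wt(s_i,1) \xra{\mo} T_i^{\neg A(x_i)}{:}\wt(s_i,2) \xra{\mo} T_i^{A(x_i)}{:}\wt(s_i,2)$. The read $T_f{:}\rd(s_i,1)$ reads from $w = T_i^{\neg A(x_i)}{:}\wt(s_i,1)$. The write $w' = T_i^{A(x_i)}{:}\wt(s_i,2)$ satisfies $w \xra{\mo} w'$ and $w' \xra{\rf} T_f{:}\rd(s_i,2) \xra{\po} T_f{:}\rd(s_i,1)$. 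This is a read-coherence cycle: by your own criterion, the writer of $\rd(s_i,1)$ is not $\mo$-maximal in its view. So the step ``for $s_i$ the argument is verbatim the earlier one'' fails.

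Given your $\rf$, the order is in fact forced. Read-coherence at $\rd(s_i,1)$ demands $T_i^{A(x_i)}{:}\wt(s_i,2) \xra{\mo} T_i^{\neg A(x_i)}{:}\wt(s_i,1)$, and write-coherence orders each thread's two writes by $\po$. Hence $\mo_{s_i}$ must place both writes of $T_i^{A(x_i)}$ before both writes of $T_i^{\neg A(x_i)}$. With this order everything works:
\begin{itemize}
\item $\rd(s_i,2)$ is fine because no event of $T_i^{\neg A(x_i)}$ is $\hb$-before it, which is what your analysis of the threads $T_\ell$ and $T_j$ shows.
\item $\rd(s_i,1)$ is fine because its only $\mo$-later write, $T_i^{\neg A(x_i)}{:}\wt(s_i,2)$, is not in its view.
\item The edge $T_i^{A(x_i)}{:}\wt(s_i,2) \xra{\mo} T_i^{\neg A(x_i)}{:}\wt(s_i,1)$ lies on no $\hb\cup\mo$ cycle. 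Its source is entered only from $T_i^{A(x_i)}{:}\wt(s_i,1)$, which has no incoming edges.
\end{itemize}

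\textbf{The strong-write-coherence argument rests on a false claim.} Paths do not only move forward in your layering, for two reasons:
\begin{itemize}
\item From the used write in a literal thread $T_\ell$, $\po$ continues to later writes of $T_\ell$. These may be unused, and hence sources of $\mo$ edges into other literal threads.
\item When $\ell_j=\ell_j^3$, your $d_j$ edge is $T_j{:}\wt(d_j,1) \xra{\mo} T_{\ell_j^3}{:}\wt(d_j,1)$, which leads from $T_j$ back into a literal thread.
\end{itemize}
Acyclicity genuinely depends on the order of clause writes inside literal threads. Suppose $T_\ell$ wrote $c_a$ before $c_b$, $T_{\ell'}$ wrote $c_b$ before $c_a$, and the used writes were $T_\ell{:}\wt(c_a,1)$ and $T_{\ell'}{:}\wt(c_b,1)$. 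Then $T_\ell{:}\wt(c_b,1) \xra{\mo} T_{\ell'}{:}\wt(c_b,1) \xra{\po} T_{\ell'}{:}\wt(c_a,1) \xra{\mo} T_\ell{:}\wt(c_a,1) \xra{\po} T_\ell{:}\wt(c_b,1)$ is a cycle.

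The missing idea is index monotonicity. Assume, as in the paper's example, that each literal thread issues its clause writes in increasing order of $j$. Give every write on $c_j$ or $d_j$, and every event of $T_j$, the index $j$. Every $\rf$, $\mo$ and $\po$ edge among these events preserves the index, except $\po$ between writes of a literal thread, which strictly increases it.

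Now locate a hypothetical cycle. It avoids $T_f$, which has no outgoing edges. It avoids the threads $T_i^b$, since their events are entered only via $\po$ and $\mo_{s_i}$, and $\mo_{s_i}$ is a total order containing $\po$ there. It therefore also avoids the reads $\rd(v_i,\cdot)$, which are entered only from the threads $T_i^b$. So any cycle would use edges of a single index $j$ only.

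These edges form a DAG:
\begin{itemize}
\item The unused $c_j$ write has no incoming edge at index $j$, and the used $c_j$ write is entered only from it.
\item $T_j$ is entered only through its read of $c_j$.
\item The two $d_j$ writes are joined by a single $\mo$ edge.
\end{itemize}
This yields $\hb\cup\mo$ acyclicity. The paper's own argument at this point is equally informal.
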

\begin{proof}
Let $A$ be a satisfying assignment. The $\rf$ and $\mo$ relations on events with variables $s_i$ remains the same as in the proof of Lemma~\ref{lem:sat-implies-sra}. There are changes for $c_j$ and $d_j$. Consider a clause $C_j$.
\begin{itemize}
\item If $A$ makes $\ell_j^3$ true then:
\begin{itemize}
\item $\tuple{T_{\ell_j^3}: \wt(d_j, 1)}~\rf~\tuple{T_f: \rd(d_j, 1)}$,
\item Choose an arbitrary $\wt(c_j, 1)$ for $\tuple{T_j: \rd(c_j, 1)}$
\end{itemize}
\item Else, $A$ makes one of the first two literals of $C_j$ true. Arbitrarily pick one $\ell_j^i$, for $i \in \{1,2\}$:  
\begin{itemize}
\item $\tuple{T_{\ell_j^i}: \wt(c_j, 1)}~\rf~ \tuple{T_j: \rd(c_j, 1)}$,
\item $\tuple{T_{j}:\wt(d_j, 1)}~\rf~\tuple{T_f: \rd(d_j, 1)}$
\end{itemize}
\end{itemize}
For every variable-value pair, there are atmost two writes. With the $\rf$ picked as above, the $\mo$ relation gets fixed: the unused write is $\mo$-before the used write.

Clearly, the $\rf$ relation agrees with $\po$ and hence does not violate porf-acyclicity. We need to prove strong-write-coherence and read-coherence as in the proof of Lemma~\ref{lem:sat-implies-sra}.

\paragraph*{Strong-write-coherence.} 
Suppose, for contradiction, that there is a cycle in $\hb \cup \mo$. As in Lemma~\ref{lem:sat-implies-sra}, the only possible $\mo$ edges are between the two writes to $s_i$, $c_j$, or $d_j$. For $s_i$, the argument is identical to before: the only incoming edges to the earlier write are from $\po$, and there are no cycles. For $c_j$ and $d_j$, each has at most two writers, and the $\mo$-later event leads to $T_f$ from which there is no path back to the $\mo$-earlier event. Thus, no cycle can be formed. Therefore, strong-write-coherence holds.

\paragraph*{Read-coherence.}
Suppose, for contradiction, that there is a violation of read-coherence: there exists a read $r$ and two writes $w, w'$ to the same variable such that $w~\rf~r$, $w~\mo~w'$, and $w'~\hb~r$. As before, this can only occur for variables $s_i$, $c_j$, or $d_j$. For $s_i$, the argument is unchanged from Lemma~\ref{lem:sat-implies-sra}. For $c_j$ and $d_j$, the view of each read contains only the $\rf$-used write, and the other write (if any) is $\mo$-before and not visible to the read. Thus, no such violation can occur.

Therefore, the constructed $\rf$ and $\mo$ are consistent with $\sramm$. 
\end{proof}

\section{$\rlxmm$ Memory Model}
\label{app:relaxed}
In this appendix we study the complexity of the $\vm$- problem for the `relaxed' fragment of C-11. Relaxed model satisfy the coherence axioms: \RlxRCoh{} and \RlxWCoh{} (Table~\ref{table:ax:models-2-app}). Essentially all $\hb$'s in \WRCoh{} and \WCoh{} are replaced by $\po$. For better unerstanding, Figure~\ref{fig:relaxed-illust-app} shows instances for violation of the relaxed axioms. For the stronger variation of the relaxed model, $\rlxmm$-Acyclic we have the extra requirement of  \PORF{}.  The overview of this section is as follows:
\begin{itemize}
 \item Appendix~\ref{sec-onewriter-relaxed} gives a  polynomial time algortihm for the $\vm$-problem restricted to  \onewriter~ under $\rlxmm$-Acyclic memory model.
 
 \item Appendix~\ref{sec:finegrained-relaxed} gives a conditional lower bound for the $\vm$-problem  under ETH for \threewriter.
 
 \item Appendix~\ref{sec:np-relaxed} shows that the $\vm$-problem restricted to \twowriter~ is NP-hard
\end{itemize}

\begin{table}
    \caption{\small Coherence Axioms for $\rlxmm$-Acyclic variant of C11}
    \label{table:ax:models-2-app}
    \centering
    \begin{tabular}{lr}
    \multicolumn{2}{c}{\underline{Relaxed-Acyclic}} \\
    $\irr(\po\cup\rf)$ & (\PORF{})\\
    $\irr(\mo_x;\po)$ & (\RlxWCoh{})\\
    $\irr(\rf^{-1};\mo_x;\rf^?;\po)$ & (\RlxRCoh{})\\
    \end{tabular}
\end{table}

 \begin{figure}[t]
    \def\ystep{0.4}
    \centering
     \resizebox{0.5\textwidth}{!}{
	\begin{tikzpicture}[yscale=1]
	
      \node (t11) at (12.3,0*\ystep)  {$\wt_1(x)$};
      \node (t12) at (12.3,-4*\ystep) {$\wt_2(x)$};
      \node (t0) at (12.3,-5*\ystep) {$(a)$};
      \draw[mo] (t12) to[bend left=45] node[left]{$\mo$} (t11);
      \draw[po] (t11) to (t12);
     \node (t11) at (16.3,0*\ystep)  {$\wt_1(x)$};
      \node (t12) at (16.3,-4*\ystep) {$\wt_2(x)$};
      \node (t21) at (18.1,0*\ystep) {$\rd_2(x)$};
      \node (t22) at (18.1,-4*\ystep) {$\rd_1(x)$};
      \node (t1) at (17.2,-5*\ystep) {$(b)$};
      \draw[mo] (t11) to node[left]{$\mo$} (t12);
      \draw[po] (t21) to (t22);
      \draw[rf,bend left=0] (t12) to node[above, pos=.8, sloped]{$\rf$}(t21);
      \draw[rf,bend right=0] (t11) to node[below,pos=0.8, sloped]{$\rf$} (t22);
    
    \end{tikzpicture}
	 }
    \caption{Violation of (a)~\RlxWCoh:$\wt_2(x)~\mo~\wt_1(x), \wt_1(x)~\po~\wt_2(x)$ (b)~\RlxRCoh{}: $\wt_1(x)~\mo~\wt_2(x)$, $\rd_2(x)~\po~\rd_1(x)$ 
      with $\wt_2(x)~\rf~\rd_2(x)$ and $\wt_1(x)~\rf~\rd_1(x)$.}
      \label{fig:relaxed-illust-app}
\end{figure}

\subsection{Polynomial Time for \onewriter\ systems}\label{sec-onewriter-relaxed}
Here we give a polynomial time algorithm for solving  the $\vm$-problem for \onewriter~ under $\rlxmm$-Acyclic memory model. Note that as $\rlxmm$-Acyclic is stronger than $\rlxmm$. So any polynomial time result for $\rlxmm$-Acyclic would also hold for $\rlxmm$.
As for \onewriter~ systems we have $\mo=\po$, \RlxWCoh{} would always hold. Hence to show that an execution graph $\expartial$ is $\rlxmm$-Acyclic-consistent, we need to verify \PORF{} and \RlxRCoh{}. We can state the following lemma for $\rlxmm$ similar to  Lemma~\ref{lem:onewriter-axioms-two-properties}.

\begin{restatable}{lemma}{lemmaonewriterrelaxedtwoproperties}
    \label{lem:relaxed-onewriter-axioms-two-properties}
Let $\rf_1$ and $\rf_2$ be two reads-from functions over a partial \onewriter\ execution graph $\ex=(\E, \po, \mo)$ s.t. $\mo$ agrees with $\po$. 
\begin{itemize}
\item Suppose $\rf_1 \sqsubseteq \rf_2$. Then, if $\rf_1$ induces a $\po \cup \rf_1$ cycle, then $\rf_2$ induces a $\po \cup \rf_2$ cycle.
\item Suppose both $\rf_1$ and $\rf_2$ satisfy relaxed-read-coherence. Then, the $\rf$ function $\min(\rf_1, \rf_2)$ also satisfies relaxed-read-coherence.
\end{itemize} 
\end{restatable}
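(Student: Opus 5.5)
The first item is purely structural and does not depend on the coherence axiom, so I will reuse the argument of Lemma~\ref{lem:onewriter-axioms-two-properties} verbatim. Take a $\po\cup\rf_1$ cycle
\[ e_1 \xra{\rf_1} f_1 \xra{\po} e_2 \xra{\rf_1} f_2 \cdots e_m \xra{\rf_1} f_m \xra{\po} e_1 . \]
Since $\rf_1 \sqsubseteq \rf_2$ and each location has a single writer thread, we have $\rf_1^{-1}(f_i)~\po^*~\rf_2^{-1}(f_i)$. I therefore replace each edge $e_i \xra{\rf_1} f_i$ by the path $e_i \xra{\po^*} \rf_2^{-1}(f_i) \xra{\rf_2} f_i$. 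This yields a closed $\po\cup\rf_2$ walk containing at least one $\rf_2$ edge, hence a cycle.

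For the second item, let $\rf_0=\min(\rf_1,\rf_2)$. I argue by contraposition: assume $\rf_0$ violates \RlxRCoh{} and derive a violation for $\rf_1$ or $\rf_2$. A violation has the form
\[ r \xra{\rf_0^{-1}} w \xra{\mo_x} w' \xra{\rf_0^?} e \xra{\po} r . \]
Here $\mo_x$ is $\po^+$ inside the writer thread of $x$, and either $e = w'$ or $e$ is a read with $\rf_0^{-1}(e)=w'$. In the first case ($e=w'$, so $w'~\po~r$), pick $i$ with $\rf_i^{-1}(r)=w$. Then $r \xra{\rf_i^{-1}} w \xra{\mo} w' \xra{\po} r$ is already an $\rf_i$-violation, since the path from $w'$ to $r$ does not involve $\rf$ at all.

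The main obstacle is the second case, where $e$ is a read $r'$ with $w'=\rf_0^{-1}(r')$ and $r'~\po~r$. This is where the fact that $\min$ picks the $\po$-smaller write matters. Choose $i$ with $\rf_0^{-1}(r)=\rf_i^{-1}(r)=w$. Because $\rf_0$ takes the minimum, $w'=\rf_0^{-1}(r')~\po^*~\rf_i^{-1}(r')=:w''$. Then $w~\po^+~w'~\po^*~w''$, so $w~\mo_x~w''$. Since $\rf_i^{-1}(r')=w''$ and $r'~\po~r$, we get
\[ r \xra{\rf_i^{-1}} w \xra{\mo_x} w'' \xra{\rf_i} r' \xra{\po} r , \]
which is an $\rf_i$-violation of \RlxRCoh{}.

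The key check is that the same index $i$ works for both reads. It does, because $r$'s write in $\rf_i$ is fixed to $w$, and $r'$'s write in $\rf_i$ can only move $\po$-later than $w'$. Moving it later keeps it $\mo$-after $w$, so the violation survives. Unlike the $\wramm$ case, no rerouting of $\hb$ paths is needed, because the relaxed axiom involves at most one $\rf$ hop. Hence $\rf_0$ violating the axiom forces $\rf_1$ or $\rf_2$ to violate it, which proves the second item.
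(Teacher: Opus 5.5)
Your proof is correct and follows the paper's route. For the first item you reroute each $\rf_1$ edge through $\po^*$ to the corresponding $\rf_2$ edge. For the second item you pick the $\rf_i$ that agrees with $\min(\rf_1,\rf_2)$ on $r$, then note that $r'$'s write under $\rf_i$ is $\po$-later than $w'$ and hence still $\mo_x$-after $w$. You also handle the case where $\rf^?$ is the identity ($w'~\po~r$ directly), which the paper's proof silently omits, so your version is slightly more complete.
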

\begin{proof}
 The proof for the first part is same as that of $\wramm$ in lemma~\ref{lem:onewriter-axioms-two-properties} and hence we skip it.
 
 For the second part, we denote $\rf_0=min(\rf_1,\rf_2)$. We shall show that that if $\rf_0$ violates relaxed-read-coherence, then either $\rf_1$ or $\rf_2$ violates it. Suppose $\rf_0$ induces a violation of relaxed-read-coherence. There is a cycle of the form:
 \[ r \xra{\rf_0^{-1}} w \xra{\po^+} w' \xra{\rf_0} r' \xra{\po^+} r\], where all of $w,w',r$ and $r'$ are in the same variable.
 By definition, either $\rf_1$ or $\rf_2$ associates $r$ to $w$. WLOG, say $\rf_1(r) = w$. Then either $w'~\rf_1~r'$ or $w'~\rf_2~r'$. For the former case, $\rf_1$ is not $\rlxmm$-consistent. For the later case,  there is a write event $w^{''}$ on $\xvar$ such that $w'~\po^*~w{''}~\rf_1~r'$ as $\rf_0$ maps $r'$ to the $\po$-earlier read between $\rf_1(r')$ and $\rf_2(r')$. Again relaxed-read-coherence of $\rf_1$ fails thereby proving the second item. 
\end{proof}

\textbf{Finding the smallest $\rlxmm$-consistent $\rf$}
To show that an execution graph, $\expartial$ satisfies $\rlxmm$ consistency, we check for \PORF{}, \RlxRCoh{} and \RlxWCoh{}. 

Algorithm~\ref{algo:one-writer-wra} can be modified to obtain a PTIME algorithm for $\rlxmm$. The modifications are as follows:
\begin{itemize}
 \item Any appearance of \WRCoh{} would be replaced  by \RlxRCoh{}.
 
 \item Any appearance of $\wramm$-consistency would be replaced by $\rlxmm$-consistency.
 
 \item Line~20 would be replaced by `Pick a cycle violating relaxed-read-coherence: $r_i \xra{\rf_i^{-1}} w_i \xra{po} w'_i \xra{\rf} r'_i \xra{po} r_i$'. 
 
 \item We use check for \PORF{} in line~\ref{algo-check-porf} iof Algorithm~\ref{algo:one-writer-wra} only if the underlying memory model is $\rlxmm$-Acyclic.
 
 \end{itemize}

Note that, by construction of the algorithm, \RlxWCoh{} is never violated. The correctness of the algorithm uses Lemma~\ref{lem:relaxed-onewriter-axioms-two-properties} and is analogous to that of Algorithm~\ref{algo:one-writer-wra}. Theorem~\ref{th-wra-correctness} can be restated in terms of $\rlxmm$ as follows.

\begin{restatable}{theorem}{thmrlxcorrectness}\label{th-wra-correctness}
A partial execution graph $\expartial = (\E, \po)$ of a one-writer system is consistent for $\rlxmm$ iff the modified version of Algorithm~\ref{algo:one-writer-wra} for $\rlxmm$ memory model, executes Line~$6$. Moreover, the algorithm runs in time $\mathcal{O}(n^3)$ where $n = |E|$  and returns the smallest $\rf$ function that is $\rlxmm$-consistent, if one exists.
\end{restatable}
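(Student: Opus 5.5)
The argument mirrors the proof of Theorem~\ref{th-wra-correctness}, with Lemma~\ref{lem:relaxed-onewriter-axioms-two-properties} in place of Lemma~\ref{lem:onewriter-axioms-two-properties}. First I would fix $\mo$. In a \onewriter\ graph, relaxed-write-coherence $\irr(\mo_x;\po)$ forces $\mo_x$ to agree with $\po$, since all writes to $x$ lie in one thread and are totally ordered by $\po$. Hence $\mo$ is determined, relaxed-write-coherence holds automatically, and only $\rf$ must be synthesised so that relaxed-read-coherence holds, together with porf-acyclicity for the $\rlxmm$-Acyclic variant.

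Next I would show that, for any $\rf^\dagger$ satisfying relaxed-read-coherence, the invariant $\rf_i \sqsubseteq \rf^\dagger$ holds at every iteration. For $\rf_0$ this is immediate: $\rf_0$ picks the $\po$-smallest admissible write, and same-thread reads must read from $\po$-earlier writes because of $\irr(\rf^{-1};\mo_x;\po)$ with $\rf^?$ empty. For the inductive step, take the violating cycle $r_i \xra{\rf_i^{-1}} w_i \xra{\po^+} w'_i \xra{\rf_i^?} r'_i \xra{\po} r_i$. If $w'_i$ reaches $r_i$ directly by $\po$, then $\rf^\dagger$ must also map $r_i$ to a write $\po$-after $w'_i$. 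Otherwise $w'_i = \rf_i^{-1}(r'_i)$. By the invariant, $w'_i \,\po^*\, \rf^{\dagger-1}(r'_i)$, and relaxed-read-coherence of $\rf^\dagger$ applied to $r'_i \,\po\, r_i$ forces $\rf^{\dagger-1}(r_i)$ to be $\po^*$-after $\rf^{\dagger-1}(r'_i)$, hence after $w'_i$. Since the update chooses the earliest matching write $\po$-after $w'_i$ (and $\po$-before $r_i$ if they share a thread), we get $\rf_{i+1}\sqsubseteq\rf^\dagger$. If no such write exists, no consistent $\rf^\dagger$ exists, so the negative answer is correct.

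Termination and minimality follow as before. Each update strictly advances one read along $\po$, so there are at most $n$ updates. On exit, $\rf$ satisfies relaxed-read-coherence and lies below every consistent $\rf$, so it is the minimum one. For $\rlxmm$-Acyclic, the first item of Lemma~\ref{lem:relaxed-onewriter-axioms-two-properties} shows that a porf cycle in this minimum implies a porf cycle in every larger $\rf$, hence in every candidate. Thus reaching the ``consistent'' line is equivalent to consistency. For the complexity bound, checking relaxed-read-coherence means, for each read $r$, testing whether some $\po$-predecessor read $r'$ on the same variable (or a write $\po$-before $r$) is fed by a write strictly $\po$-after $\rf^{-1}(r)$. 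This costs $\mathcal{O}(n)$ per read with precomputed thread positions, so $\mathcal{O}(n^2)$ per iteration and $\mathcal{O}(n^3)$ overall. The final porf check is linear.

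The main obstacle is the inductive step for the indirect case $w'_i \xra{\rf_i} r'_i$. Here the violation depends on $\rf_i$ at a different read, so it transfers to $\rf^\dagger$ only by combining the invariant at $r'_i$ with transitivity of the relaxed constraint along $\po$. I would state that transfer carefully, including the $\rf^?$ case, as a separate claim.
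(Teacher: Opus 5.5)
Your overall route matches the paper's. You fix $\mo$ via relaxed-write-coherence, maintain $\rf_i \sqsubseteq \rf^\dagger$, finish with item~1 of Lemma~\ref{lem:relaxed-onewriter-axioms-two-properties}, and charge $\mathcal{O}(n^2)$ per round. Your inductive step is more careful than the paper's ``analogous'': it also covers the reflexive case $w'_i \xra{\po} r_i$ of $\rf^?$, which the paper's modified Line~20 ignores. Two steps, however, do not hold as written.

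\textbf{Same-thread reads.} The justification ``same-thread reads must read from $\po$-earlier writes because of $\irr(\rf^{-1};\mo_x;\po)$'' is wrong. If $r \xra{\po^+} w = \rf^{-1}(r)$, then every $w'$ with $w \xra{\mo_x} w'$ lies $\po$-after $r$, so the forbidden pattern can never form. Only porf-acyclicity ($r \xra{\po} w \xra{\rf} r$) excludes such future reads. Your base case $\rf_0 \sqsubseteq \rf^\dagger$ survives anyway, since a write $\po$-after $r$ is also $\po$-after $\rf_0^{-1}(r)$. However, the negative verdicts of $\Initialize$ and of the same-thread branch of $\update$ depend on it; this is your sentence ``if no such write exists, no consistent $\rf^\dagger$ exists''. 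So the invariant must range over $\rf^\dagger$ satisfying relaxed-read-coherence \emph{and} porf-acyclicity. The argument then proves the claim for $\rlxmm$-Acyclic only. For the two coherence axioms alone, the claim is false with this $\Initialize$: the single thread $\rd(x,1);\wt(x,1)$ is consistent (with $\wt(x,1) \xra{\rf} \rd(x,1)$), yet it is rejected.

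\textbf{Number of updates.} The inference ``each update strictly advances one read, so at most $n$ updates'' does not follow, and the paper's count has the same defect. It only bounds the updates of a single read by the number of writes to its variable, so the total is $\mathcal{O}(n^2)$. With an arbitrary choice of violating cycle this bound is attained. Take a writer thread $\wt(x,1),\wt(x,2),\wt(x,1),\dots$ and a reader thread $\rd(x,2),\rd(x,1),\rd(x,2),\dots$. Repair the reads in $\po$ order, each time against the violating witness whose write is $\po$-smallest. Then the $j$-th read moves about $j/2$ times, giving $\Theta(n^2)$ updates, and your per-round cost yields only $\mathcal{O}(n^4)$.

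The fix is to prescribe the repair order. Relaxed-read-coherence relates a read only to $\po$-earlier events of its own thread. So always repair the $\po$-earliest violating read of a thread, and jump it past the $\po$-latest witness. The witnesses are the writes read by earlier same-thread reads on $x$ and the earlier same-thread writes on $x$. That read is then final, so every read is updated at most once. This gives at most $n$ updates and the claimed bound.
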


The calculation of complexity of the algorithm is similar to that of $\wramm$ (Section~\ref{sec:onewriter-algorithm}). We want to check for a cycle of the form  $r \xra{\rf_0^{-1}} w \xra{\po^+} w' \xra{\rf_0} r' \xra{\po^+} r$. 
For each read event $r$ and its corresponding write event $w$, we need to check if there exist a $w'$ such that (1) there is a path of $\po$ edges from $w$ to $w'$,(2) $w'~\rf~r'$ and (3) $r'~\po^+~r$. For each read, this can be achieved in $\mathcal{O}(n)$ time by a graph search. Hence, overall, across all possible reads, this algorithm takes $\mathcal{O}(n^2)$ time. Therefore, each call to update costs $\mathcal{O}(n^2)$ and there can be at most $n$ calls. This gives a complexity of $\mathcal{O}(n^3)$.

\subsection{Fine-grained hardness under $ETH$ for \threewriter\ systems}
\label{sec:finegrained-relaxed}

\begin{lemma}\label{lem:relaxed-finegrained}
    Under ETH, there is no $2^{o(k)} \cdot n^{O(1)}$ algorithm for the consistency testing  problem for $\rlxmm$ where $k$ is the number of threads, and $n$ the number of events, even when the executions graphs are \threewriter~ and the number of writes per variable is four. 
\end{lemma}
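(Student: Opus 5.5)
The plan is to adapt the \threewriter\ reduction of Section~\ref{sec:hardness-threewriter} to $\rlxmm$. As there, I would use a range reduction: show that a satisfying assignment of $\varphi$ yields $\rf,\mo$ satisfying the stronger $\rlxmm$-Acyclic axioms, and that any $\rlxmm$-consistent $\rf,\mo$ yields a satisfying assignment. Since both directions must go through the same partial execution graph, the statement then holds for both variants. The reduction must also keep the parameters of Theorem~\ref{thm:eth-hardness}: $\mathcal{O}(k)$ threads, $\mathcal{O}(k+m)$ events, and at most three writer threads per location. Under these parameters a $2^{o(k)}\cdot n^{\mathcal{O}(1)}$ algorithm would contradict ETH exactly as before.

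The key difficulty is that the $\rlxmm$ axioms of Table~\ref{table:ax:models-2-app} replace $\hb$ by $\po$. Consequently, the argument of Lemma~\ref{lem:wra-eth} no longer applies: it transported the ``both $T_i^1$ and $T_i^0$ were used'' information along an $\hb$ path into $T_f$. The only cross-thread constraint left is relaxed-read-coherence,
\[ r \xra{\rf^{-1}} w \xra{\mo_x} w' \xra{\rf} r' \xra{\po} r , \]
which only relates two reads of the \emph{same} location in the \emph{same} thread. So I would redesign the gadget so that the conflicting choices become two $\po$-ordered reads of one location. For each variable $x_i$, I keep the location $s_i$ with exactly four writes, namely $\wt(s_i,1)\,\wt(s_i,2)$ in each of two threads. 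That gives two writer threads and four writes, matching the ``four writes per variable'' in the statement.

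I then move these $s_i$ writes into the literal threads themselves: $T_{x_i}$ is $\wt(s_i,1)\,\wt(s_i,2)$ followed by the $\wt(c_j,1)$ for $j\in S_{x_i}$, and dually for $T_{\neg x_i}$. In $T_f$, the read $\rd(c_j,1)$ is followed by reads on $s_i$ that are forced by relaxed-read-coherence to inherit the $\mo$-position of the thread that supplied $c_j$.

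Concretely, the aim is the following local property. If some $\rd(c_j,1)$ in $T_f$ reads from $T_{x_i}$ and another $\rd(c_\ell,1)$ reads from $T_{\neg x_i}$, then the $\po$-later reads $\rd(s_i,2)$ and $\rd(s_i,1)$ in $T_f$ admit no $\mo_{s_i}$ consistent with \RlxWCoh{} (which forces $\mo$ to agree with $\po$ inside each writer thread) and \RlxRCoh{}. This mirrors the ``no write left for $\rd(s_i,1)$'' contradiction of Lemma~\ref{lem:wra-eth}. For soundness (satisfying assignment $\Rightarrow$ consistency), I would copy the $\rf$/$\mo$ choices of Lemma~\ref{lem:sat-implies-sra}: interleave $\mo_{s_i}$ as ``true-side $1$, false-side $1$, false-side $2$, true-side $2$'', and put the chosen literal's $\wt(c_j,1)$ $\mo$-last. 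Porf-acyclicity would again be immediate from the layered shape of the graph, in which $\rf$ edges only flow towards $T_f$.

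The main obstacle is the first direction. A read of $c_j$ and a read of $s_i$ are on different locations, so relaxed-read-coherence cannot link them directly. I expect to need the literal thread to also write $s_i$-related information on the clause location, or to route each clause read through an intermediate read in $T_f$ of the same location, so that the $\rf^?$ in $\irr(\rf^{-1};\mo_x;\rf^?;\po)$ can fire. My first attempt would encode the literal in the value written to $c_j$ together with a follow-up same-location read. Only if that fails would I add the extra $\mo$-ordering reads needed, while keeping the number of writers per location at three.
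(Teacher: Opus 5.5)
Your proposal has a genuine gap, and it is the obstacle you flag yourself at the end: the gadget you describe encodes nothing. Your literal threads contain no reads, and the reads $\rd(s_i,2);\rd(s_i,1)$ of $T_f$ can always be satisfied. Let $\rd(s_i,2)$ read from $T_{x_i}$ and $\rd(s_i,1)$ from $T_{\neg x_i}$, and place both $s_i$-writes of $T_{x_i}$ $\mo$-before those of $T_{\neg x_i}$. This respects $\irr(\mo_x;\po)$ and $\irr(\rf^{-1};\mo_x;\rf^?;\po)$ no matter which threads serve the reads $\rd(c_j,1)$, and $\po\cup\rf$ is trivially acyclic. So every instance is consistent.

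The failure is structural, not a matter of tuning. Both relaxed coherence axioms only relate events on a single location, so apart from porf-acyclicity, $\rlxmm$-consistency splits into independent per-location problems. No choice of values, and no extra same-location reads on $c_j$, can make the writer chosen for $\rd(c_j,1)$ constrain $\rd(c_\ell,1)$ or the $s_i$-reads. The only axiom coupling different locations is porf-acyclicity. Consequently, with one location per clause, the direction ``consistent $\Rightarrow$ satisfiable'' cannot be proved for plain $\rlxmm$, as your range reduction intends. It must use porf-acyclicity. The paper's converse argument does this implicitly, when it asserts that a literal thread whose clause write is read must have read $v_i$ from the initialising threads.

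The missing idea is to put the conflicting \emph{reads} inside the literal threads, on one location, and let porf-acyclicity carry information across locations. The paper encodes the value of $x_i$ as the $\mo$-order between $\wt(v_i,0)$ in a thread $\mathsf{Init}_0$ and $\wt(v_i,1)$ in a thread $\mathsf{Init}_1$. $T_{x_i}$ starts with $\rd(v_i,0);\rd(v_i,1)$ and $T_{\neg x_i}$ with $\rd(v_i,1);\rd(v_i,0)$, each followed by its writes $\wt(c_j,1)$. If both threads read from the $\mathsf{Init}$ writes, relaxed-read-coherence inside them forces opposite $\mo$-orders between the two $\mathsf{Init}$ writes.

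$T_f$, \emph{after} its reads $\rd(c_j,1)$, writes every $v_i$ again, so the literal thread not chosen by the assignment can read from these late writes. A literal thread that serves some $\rd(c_j,1)$ cannot do this, since
\[ \wt(c_j,1)\xra{\rf}\rd(c_j,1)\xra{\po}\wt(v_i,\cdot)\xra{\rf}\rd(v_i,\cdot)\xra{\po}\wt(c_j,1) \]
would be a $\po\cup\rf$ cycle. Hence $T_{x_i}$ and $T_{\neg x_i}$ cannot both serve $T_f$. Setting $x_i$ true iff $T_{x_i}$ serves some clause read then gives a satisfying assignment.

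The instance has $2k+3$ threads and $\mathcal{O}(k+m)$ events. It has three writer threads per location: $\mathsf{Init}_0,\mathsf{Init}_1,T_f$ for $v_i$, and the three literal threads for $c_j$. So the ETH argument of Theorem~\ref{thm:eth-hardness} carries over unchanged.
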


Let $\varphi=\{C_i\}_{i \in [m]}$ be a Boolean formula over $k$ variables 
$\{x_j\}_{j \in [k]}$ and $m$ clauses of the form $C_j=(\ell_j^1, \ell_j^2, \ell_j^3)$ with where each literal is either a variable $x_i$ or its negation $\neg x_i$. 
Given a 3-CNF-SAT formula $\varphi$ with $k$ variables and $m$ clauses, we construct an abstract execution $\expartial=(\E, \po)$ with $2k+3$ threads, accessing $k+m$ memory locations $v_1, v_2, \dots, v_k, c_1, \dots, c_m$ and 2 values $1, 0$ such that $\varphi$ is satisfiable iff $\expartial_\varphi \models \rlxmm$. 
$\expartial_\varphi$ follows the general scheme depicted in Table \ref{tab:reduction-threads-relaxed}.

We remark that an $\mo$ edge from $\wt(v_i, 0)$ to $\wt(v_i,1)$ encode variable $x_i$ being set to \emph{true}, and an $\mo$ edge in the opposite direction $x_i$ being set to \emph{false}.

\begin{wrapfigure}{r}{0.5\textwidth}
  \vspace{-\baselineskip}
  \centering
  \includegraphics[width=0.3\textwidth]{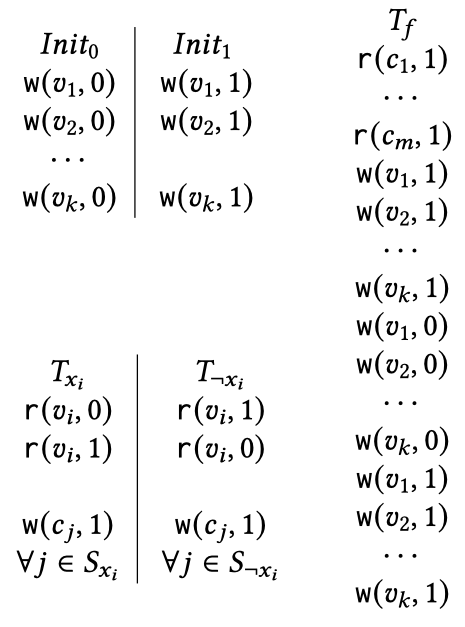}
    \caption{Threads in the reduction for $\rlxmm$}
    \label{tab:reduction-threads-relaxed}
\vspace{-\baselineskip}
\end{wrapfigure}
The threads $\mathsf{Init}{1}$ and $\mathsf{Init}{0}$ initialize the valuation used to evaluate the formula. For each variable $x_i$ corresponding to the Boolean variables of $\varphi$, $\mathsf{Init}{1}$ writes $1$ and $\mathsf{Init}{0}$ writes $0$. 
$T_{x_i}$ has a read statement $\rd(v_i,0)$ followed by $\rd(v_i,1)$ (which denotes reading \emph{true} for $x_i$) and sets all clauses containing $x_i$ to true, using the statement $\wt(c_j, 1)$ for all $j \in S_{x_i}$.

Similarly, thread $T_{\neg x_i}$ reads $\rd(v_i, 1)$ followed by $\rd(v_i, 0)$ (denoting the value \emph{false} for $x_i$) and sets all clauses containing $\neg x_i$ to true. 
Finally the thread $T_{f}$ needs to make sure that all clauses are set to true, accomplished by the read statements $\rd(c_1, 1), \dots, \rd(c_m, 1)$.
Then, $T_{f}$ proceeds through the following steps (1) all the variables $v_i$s are set to $0$, followed by (2) setting all the variables $v_i$s to $1$, and finally (3) setting all the variables $v_i$s to $0$ again.
This sequence ensures that any threads $T_{x_i}$, $T_{\neg x_i}$ which were not executed in Phase 2 now have their conditions satisfied and can complete their execution.

\begin{lemma}
    $\varphi$ is satisfiable iff $\expartial_\varphi  \models \rlxmm$.
\end{lemma}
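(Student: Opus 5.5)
We prove the two directions separately. Throughout, we rely on the layout of $\expartial_\varphi$ in Table~\ref{tab:reduction-threads-relaxed} as described above: $\mathsf{Init}1$ writes $\wt(v_i,1)$ and $\mathsf{Init}0$ writes $\wt(v_i,0)$ for every $i$. The thread $T_{x_i}$ performs $\rd(v_i,0)\,\rd(v_i,1)$ followed by the writes $\wt(c_j,1)$ for $j\in S_{x_i}$, and $T_{\neg x_i}$ performs $\rd(v_i,1)\,\rd(v_i,0)$ followed by its clause writes. Finally, $T_f$ first reads $\rd(c_1,1)\cdots\rd(c_m,1)$ and then performs three phases of writes to the $v_i$, with values $0$, then $1$, then $0$. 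By \RlxWCoh{}, the $\mo_{v_i}$-order among the three $T_f$-writes is forced to follow $\po$. The assignment is encoded by the relative $\mo$-order of the two $\mathsf{Init}$ writes on $v_i$. We read \RlxRCoh{} for two $\po$-ordered reads $r\,\po\,r'$ on the same variable as the requirement that $\rf^{-1}(r')$ is not $\mo$-before $\rf^{-1}(r)$.

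\emph{Satisfiable implies consistent.} Fix a satisfying assignment $A$ and proceed as follows.
\begin{itemize}
\item \emph{Order on $v_i$.} If $A(x_i)=true$, let $\mo_{v_i}$ be $\mathsf{Init}0 \to \mathsf{Init}1 \to T_f(0) \to T_f(1) \to T_f(0)$. If $A(x_i)=false$, swap the two $\mathsf{Init}$ writes.
\item \emph{True literal.} The thread of the true literal reads both of its values from the two $\mathsf{Init}$ writes, which appear in the correct $\mo$-order.
\item \emph{False literal.} The thread of the false literal reads its first value from the $\mo$-later $\mathsf{Init}$ write. It reads its second value from the first $T_f$ phase that writes the required value, which lies $\mo$-after. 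For example, $T_{x_i}$ with $A(x_i)=false$ reads $0$ from $\mathsf{Init}0$ and $1$ from phase~2 of $T_f$.
\item \emph{Clause variables.} For each clause $C_j$ choose a true literal $\ell_j$. Let $\rd(c_j,1)$ in $T_f$ read from $T_{\ell_j}$, and put that write $\mo$-last among the writes to $c_j$.
\end{itemize}
We then check \RlxWCoh{} and \RlxRCoh{} case by case. Each thread has at most two reads per variable, and the $\rf$-sources were chosen $\mo$-monotone in $\po$. If the intended model is $\rlxmm$-Acyclic, we also check porf-acyclicity. The only $\rf$ edges into $T_f$ come from true-literal threads, and these read only from $\mathsf{Init}$ threads. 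The edges out of $T_f$ go only to false-literal threads, whose writes are not read by $T_f$. Hence no $\po\cup\rf$ cycle arises.

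\emph{Consistent implies satisfiable.} Given a consistent $\rf,\mo$, define $A(x_i)=true$ iff $\mathsf{Init}0$'s write is $\mo_{v_i}$-before $\mathsf{Init}1$'s write. Every $\rd(c_j,1)$ in $T_f$ reads from some $T_\ell$ with $j\in S_\ell$, so it suffices to show that each such $\ell$ is true under $A$. The key claim is that a literal thread $T_\ell$ feeding $T_f$ must read both of its $v_i$ values from the $\mathsf{Init}$ threads. If it read from some $T_f$ write $w$, then since $T_f$'s reads of the $c_j$ are $\po$-before $w$, we would get $w\,\rf\,(\text{read in }T_\ell)\,\po\,\wt(c_j,1)\,\rf\,\rd(c_j,1)\,\po\,w$, which is a $\po\cup\rf$ cycle. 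Given the claim, \RlxRCoh{} applied to the two $\po$-ordered reads of $T_{x_i}$ forces $\mathsf{Init}0\ \mo\ \mathsf{Init}1$, so $A(x_i)=true$. Symmetrically, using $T_{\neg x_i}$ forces $A(x_i)=false$. Hence every literal used by $T_f$ is true under $A$, and $A$ satisfies every clause.

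\emph{Main obstacle.} The hard step is the key claim. It needs porf-acyclicity, which holds in $\rlxmm$-Acyclic but not in bare $\rlxmm$, so the argument as planned covers only the Acyclic variant. For plain $\rlxmm$ one must instead argue through coherence alone. The approach would be to exploit the fixed $\po$-order of the three $T_f$ phases and the at most four writes per $v_i$ to show that reading from $T_f$ cannot make both $T_{x_i}$ and $T_{\neg x_i}$ usable with an $\mathsf{Init}$-order contradicting the one they certify. That requires a case analysis on where the $\mathsf{Init}$ writes sit within $\mo_{v_i}$ relative to the $T_f$ chain, and I would resolve that case analysis first before finalising the plan.
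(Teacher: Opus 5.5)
Your argument is correct for $\rlxmm$-Acyclic and follows the paper's route. In the forward direction you build $\rf$ and $\mo$ explicitly from a satisfying assignment. The paper lets the false-literal thread take both values from $T_f$'s writes, where you take one from the $\mo$-later $\mathsf{Init}$ write; this difference is immaterial. In the converse, a literal thread feeding $T_f$ must read from the $\mathsf{Init}$ threads, and then \RlxRCoh{} pins the $\mo$-order of the two $\mathsf{Init}$ writes. The paper's converse uses exactly your key claim. It justifies it operationally (``the preceding read instructions of $T_{x_i}$ have been executed''), which is porf-acyclicity in disguise.

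Do not pursue the case analysis in your last paragraph. For the model with only \RlxWCoh{} and \RlxRCoh{} the statement is false, so no coherence-only argument exists. Take any unsatisfiable $\varphi$ and choose the reads-from and modification order as follows:
\begin{itemize}
\item each $T_{x_i}$ reads $\rd(v_i,0)$ from $T_f$'s phase-1 write and $\rd(v_i,1)$ from its phase-2 write;
\item each $T_{\neg x_i}$ reads $\rd(v_i,1)$ from phase 2 and $\rd(v_i,0)$ from phase 3;
\item both $\mathsf{Init}$ writes are placed $\mo$-before $T_f$'s chain;
\item each $\rd(c_j,1)$ reads from any literal thread writing $c_j$.
\end{itemize}
Then $\mo$ follows $\po$ on $T_f$, and each literal thread's two reads are sourced $\mo$-monotonically. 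No thread both reads and writes the same location, and $T_f$ reads each $c_j$ once. So both coherence axioms hold, and only porf-acyclicity fails, via precisely the cycle you exhibited.

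The three $T_f$ phases are what make every literal thread readable once porf-acyclicity is dropped. So the lemma must be read for $\rlxmm$-Acyclic, the variant whose tabulated axioms include porf-acyclicity. Under that reading your proof is complete.
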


\begin{proof}
    Suppose $A$ is a satisfying assignment for $\varphi$. We explictly describe the $\rf$ relation and $\mo$ relation as follows. 
    
    If $A(x_i) = true$:
    \begin{itemize}[leftmargin=0.1em]
    \item $\tuple{Init_0: \wt(v_i, 0)}~\rf~\tuple{T_{x_i}: \rd(v_i, 0)}$ and $\tuple{Init_1: \wt(v_i, 1)}~\rf~\tuple{T_{x_i}: \rd(v_i, 1)}$,
    \item $\tuple{T_f: \wt(v_i, 1)}~\rf~\tuple{T_{\neg x_i}: \rd(v_i, 1)}$, and $\tuple{T_f: \wt(v_i, 0)}~\rf~\tuple{T_{\neg x_i}: \rd(v_i, 0)}$. 
    \end{itemize} 
    Else, when $A(x_i) = false$, 
    \begin{itemize}[leftmargin=0.1em]
    \item $\tuple{Init_1: \wt(v_i, 1)}~\rf~\tuple{T_{\neg x_i}: \rd(v_i, 1)}$, and $\tuple{Init_0: \wt(v_i, 0)}~\rf~\tuple{T_{\neg x_i}: \rd(v_i, 0)}$,
    \item $\tuple{T_f: \wt(v_i, 0)}~\rf~\tuple{T_{ x_i}: \rd(v_i, 0)}$, and $\tuple{T_f: \wt(v_i, 1)}~\rf~\tuple{T_{ x_i}: \rd(v_i, 1)}$. 
    \end{itemize} 
            
    Coming to variables $c_i$s, for each $T_f: \rd(c_j, 1)$ there is a literal $\ell_j$ in clause $C_j$ that is true. Let $\tuple{T_{\ell_j}: \wt(c_j, 1)}~\rf~ \tuple{T_f: \rd(c_j, 1)}$. Observe that the constructed $\rf$ satisfies $\po-\rf$-acyclicity.
    
    The $\rf$ induces the following $\mo$ relation. If $A(x_i) = true$: 
    \begin{itemize}[leftmargin=0.1em]
    \item $\tuple{Init_0: \wt(v_i, 0)} ~\mo~\tuple{Init_1: \wt(v_i, 1)} ~\mo~\tuple{T_f: \wt(v_i, 1)}~\mo~\tuple{T_f: \wt(v_i, 0)} ~\mo~\\ \tuple{T_f: \wt(v_i, 1)}$
    \item For each clause $C_j$ there are three threads writing $\wt(c_j, 1)$.  For $C_j$, choose a specific literal $\ell_j$ that is made true by the satisfying assignment $A$. 
    The $\mo$ relation orders the write statements $\wt(c_j, 1)$ in literals that are set to true before $T_{\ell_j}: \wt(c_j, 1)$, and those in literals that set to false after $T_{\ell_j}: \wt(c_j, 1)$.
    \end{itemize}  
    When $A(x_i) = false$, an $\mo$ order can be defined analogously. 

    \paragraph*{Relaxed Write-coherence.} From our construction, it is easy to see $\irr(\mo_x; \po)$.

    \paragraph*{Relaxed Read-coherence.} We need to prove  $\irr(\rf^{-1}; \mo_x; \rf^{?}; \po)$. 
    If relaxed read coherence is violated, the violating pattern is of the following form there are read events $r_1$ and $r_2$, and two write events $w_1, w_2$ such that $w_1~\rf~r_1$, $w_2~\rf~r_2$, $w_1~\mo~w_2$ and $r_2~\po~r_1$.
    Such a violation can potentially happen on variables which have multiple read events in the same thread, which is only the case for $v_i$.
    We consider the various possibilities below. 
    We consider the case $A(x_i) = true$ below. 
        \begin{itemize}[leftmargin=0.1em]
            \item $\tuple{T_{x_i}: \rd(v_i, 0)} \po \tuple{T_{x_i}: \rd(v_i, 1)}$, reads-from $Init_0$ and $Init_1$ respectively, where $\mo$ agrees with the $\po$.
            \item $\tuple{T_{\neg x_i}: \rd(v_i, 1)} \po \tuple{T_{\neg x_i}: \rd(v_i, 0)}$, reads-from the first two writes on $v_i$ in $T_f$ respectively, where, once again $\mo$ agrees with the $\po$, and irreflexivity is ensured.
        \end{itemize}
        A similar argument can be made for the case the case $A(x_i) = false$.

    This shows that the constructed $\rf$ and $\mo$ are consistent with $\rlxmm$. 
    
    Conversely, suppose $\expartial_\varphi \models \rlxmm$. 
    We need to show that there are no two $\rd(c_j, 1)$ and $\rd(c_\ell, 1)$ reading from some $T_{x_i}$ and $T_{\neg x_i}$.
    Suppose not. 
    Let $\rd(c_j, 1)$ read from $T_{x_i}: \wt(c_j, 1)$ and $\rd(c_\ell, 1)$ read from $T_{\neg x_i}: \wt(c_\ell, 1)$.    
    Consider the thread $T_{x_i}$. Since $T_{x_i}: \wt(c_j, 1)$ has been read from, this means that the preceding read instructions of $T_{x_i}$ have been executed, and the only possible read from relation is as follows: 
    $\tuple{Init_0: \wt(v_i, 0)}~\rf~\tuple{T_{x_i}: \rd(v_i, 0)}$,
    and $\tuple{Init_1: \wt(v_i, 1)}~\rf~\tuple{T_{x_i}: \rd(v_i, 1)}$. 
    By relaxed read coherence axiom, this implies $\tuple{Init_0: \wt(v_i, 0)} ~\mo~\tuple{Init_1: \wt(v_i, 1)}$    

    Now, consider the thread $T_{\neg x_i}$. As argued fro $T_{x_i}$, since the last instruction has been read from, the preceding read instructions of $T_{x_i}$ have been executed with the only reads-from relation being 
    $\tuple{Init_1: \wt(v_i, 1)}~\rf~\tuple{T_{\neg x_i}: \rd(v_i, 1)}$ and $\tuple{Init_0: \wt(v_i, 0)}~\rf~\tuple{T_{\neg x_i}: \rd(v_i, 0)}$. 
    However, this violates relaxed read coherence axiom, as $\tuple{Init_0: \wt(v_i, 0)} ~\mo~\tuple{Init_1: \wt(v_i, 1)}$ but $\tuple{T_{\neg x_i}: \rd(v_i, 1)}~\po~\\ \tuple{T_{\neg x_i}: \rd(v_i, 0)}$. 
    This contradicts our assumption that there are two $\rd(c_j, 1)$ and $\rd(c_\ell, 1)$ reading from some $T_{x_i}$ and $T_{\neg x_i}$. 
    Thus, the assignment induced by the $T_i^b$ threads executed before $T_f$ satisfies $\varphi$.
\end{proof}

\subsection{NP-hardness for \twowriter\ systems}
\label{sec:np-relaxed}

Given a 3-CNF-SAT formula $\varphi$ with $k$ variables and $m$ clauses, we construct an abstract execution $\expartial=(\E, \po)$ with $2k+m+3$ threads, accessing $k+2m+1$ memory locations $v_1, v_2, \dots, v_k, c_1, \dots, c_m, d_1, \dots, d_m, s$ and 2 values $0, 1$ such that $\varphi$ is satisfiable iff $\ypartial_\varphi^{\rlxmm}  \models \rlxmm$.
$\ypartial_\varphi^{\rlxmm}$ follows the general scheme depicted in Table \ref{tab:np-reduction-threads-relaxed}.

\begin{table}[h]
    
\noindent
\begin{minipage}{0.3\textwidth}
\centering
\begin{tabular}{c | c}
    $Init_0$ & $Init_1$ \\
    $\wt(v_1, 0)$~& ~$\wt(v_1, 1)$ \\
    $\wt(v_2, 0)$~ & ~$\wt(v_2, 1)$ \\
    $\cdots$ \\
    $\wt(v_k, 0)$~ & ~$\wt(v_k, 1)$\\
    & $\rd(s, 1)$ \\
    & $\wt(v_1, 0)$ \\
    & $\wt(v_2, 0)$ \\
    & $\cdots$ \\
    & $\wt(v_k, 0)$ \\
    & $\wt(v_1, 1)$ \\
    & $\wt(v_2, 1)$ \\
    & $\cdots$ \\
    & $\wt(v_k, 1)$ \\
\end{tabular}
\end{minipage}
\hfill
\begin{minipage}{0.3\textwidth}
\centering
\begin{tabular}{c | c}
    $T_{x_i}$~ & ~$T_{\neg x_i}$ \\
        $\rd(v_i, 0)$ ~&~  $\rd(v_i, 1)$ \\
        $\rd(v_i, 1)$ ~&~ $\rd(v_i, 0)$ \\
        & \\        
        $\wt(c_j/d_j, 1)$ ~&~ $\wt(c_j/d_j, 1)$ \\
        $\forall j \in S_{x_i}$ ~&~  $\forall j \in S_{\neg x_i}$\\
        $d_j$ if $x_i$ is ~&~   the third literal\\
        $c_j$  ~&~  otherwise
\end{tabular}
\end{minipage}
\hfill
\begin{minipage}{0.15\textwidth}
\centering
\begin{tabular}{c}
    $T_j$ \\
    $\rd(c_j, 1)$\\
    $\wt(d_j, 1)$ \\
    $\forall j \in \{1,\cdots,m\}$\\
\end{tabular}
\end{minipage}
\hfill
\begin{minipage}{0.15\textwidth}
\centering
\begin{tabular}{c}
    $T_f$ \\
    $\rd(d_1, 1)$ \\
    $\rd(d_2, 1)$ \\
    $\cdots$ \\
    $\rd(d_m, 1)$ \\
    $\wt(s, 1)$ \\
\end{tabular}
\end{minipage}
\vspace{2em}
\caption{Threads in the reduction for $\rlxmm$}
    \label{tab:np-reduction-threads-relaxed}
 \vspace{-.8cm}   
\end{table}

\begin{lemma}
    $\varphi$ is satisfiable iff $\ypartial_\varphi^{\rlxmm}  \models \rlxmm$.
\end{lemma}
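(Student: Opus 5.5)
We prove the two directions separately, taking the model to include \PORF{} (the $\rlxmm$-Acyclic variant of Table~\ref{table:ax:models-2-app}); the backward direction uses it essentially. Two structural facts drive the argument. First, the only writer of $s$ is $T_f$, so $\tuple{Init_1:\rd(s,1)}$ must read from $\tuple{T_f:\wt(s,1)}$. Hence every $\po$-later write of $Init_1$ (the ``late'' writes to $v_i$) lies $\hb$-after all of $T_f$. Second, each $v_i$ has exactly two writer threads, $Init_0$ and $Init_1$, and $c_j$, $d_j$ have at most two writers each, so the instance is \twowriter. Write $w_0$ for $\tuple{Init_0:\wt(v_i,0)}$, $w_1$ for the early $\tuple{Init_1:\wt(v_i,1)}$, and $w'_0, w''_1$ for the late writes of $Init_1$. \RlxWCoh{} forces $w_1~\mo~w'_0~\mo~w''_1$. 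Read on a single variable, \RlxRCoh{} says: if $r_2~\po~r_1$, then $\rf^{-1}(r_1)$ is not $\mo$-before $\rf^{-1}(r_2)$.

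\emph{Consistency implies satisfiability.} Let $\rf,\mo$ witness consistency, and assign $x_i$ true iff some read of $T_f$ on $d_j$ reaches, through an $\rf$ chain possibly passing through $T_j$ and $c_j$, a write of $T_{x_i}$. By construction every clause then contains a literal realised this way, so it suffices to rule out both $T_{x_i}$ and $T_{\neg x_i}$ being realised. Suppose both are. Both threads then have an $\hb$ path into $T_f$ before $\wt(s,1)$. If either of them read from $w'_0$ or $w''_1$, we would get a cycle
\[\text{late write}~\rf~\cdots~\hb~\tuple{T_f:\wt(s,1)}~\rf~\tuple{Init_1:\rd(s,1)}~\po~\text{late write},\]
contradicting \PORF{}. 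So $T_{x_i}$ reads $w_0$ and then $w_1$, and $T_{\neg x_i}$ reads $w_1$ and then $w_0$. \RlxRCoh{} in $T_{x_i}$ excludes $w_1~\mo~w_0$, so $w_0~\mo~w_1$. \RlxRCoh{} in $T_{\neg x_i}$ excludes $w_0~\mo~w_1$. This is a contradiction.

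\emph{Satisfiability implies consistency.} Fix a satisfying assignment $A$. If $A(x_i)$ is true, set $\mo_{v_i}$ to $w_0,w_1,w'_0,w''_1$; let $T_{x_i}$ read $w_0$ then $w_1$, and let $T_{\neg x_i}$ read $w_1$ then $w'_0$. If $A(x_i)$ is false, set $\mo_{v_i}$ to $w_1,w_0,w'_0,w''_1$; let $T_{\neg x_i}$ read $w_1$ then $w_0$, and let $T_{x_i}$ read $w'_0$ then $w''_1$. In every case the two reads of a thread take $\mo$-increasing writes, and $\mo$ agrees with $\po$ in $Init_1$, so \RlxRCoh{} and \RlxWCoh{} hold on $v_i$. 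For clause $C_j$: if $\ell_j^3$ is true, $T_f$ reads $d_j$ from $T_{\ell_j^3}$, and $T_j$ reads $c_j$ from an arbitrary writer. Otherwise $T_j$ reads $c_j$ from a true $\ell_j^1$ or $\ell_j^2$, and $T_f$ reads $d_j$ from $T_j$. Order $\mo$ on $c_j$ and $d_j$ arbitrarily. These variables are each read at most once per thread and their writers lie in distinct threads, so both relaxed coherence axioms hold trivially for them.

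The main obstacle is \PORF{}. The threads of false literals read late writes, so they are $\hb$-after $T_f$. We must check that no $\rf$ edge feeding $T_f$ leaves such a thread. The edges into $T_f$ come only from true-literal threads, or from $T_j$ whose $c_j$-source was chosen true in exactly the case where $T_f$ reads from $T_j$. True-literal threads read only $w_0$ and $w_1$, which have no incoming edges. Hence the $\hb$ order runs $Init_0/Init_1$(early), then true-literal threads and the relevant $T_j$, then $T_f$, then late $Init_1$, then false-literal threads and the remaining $T_j$. This layering gives an acyclic $\po\cup\rf$, which completes the proof.
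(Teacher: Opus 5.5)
Your proof is correct and follows essentially the paper's route: the same $\rf$/$\mo$ witness built from a satisfying assignment, and for the converse the same relaxed-read-coherence clash, where $T_{x_i}$ forces $w_0~\mo~w_1$ while $T_{\neg x_i}$ forbids it. The paper covers with the operational phrase ``the preceding read instructions have been executed'' the step you make explicit: \PORF{}, through the edge $\tuple{T_f:\wt(s,1)}~\rf~\tuple{Init_1:\rd(s,1)}$, excludes the late $Init_1$ writes, so the lemma genuinely needs the $\rlxmm$-Acyclic variant; you also check \PORF{} of your witness by an explicit layering, which the paper leaves implicit.
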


\begin{proof}
    Suppose $A$ is a satisfying assignment for $\varphi$. Then, the execution proceeds as follows. 
    The $\rf$ relation and $\mo$ relation are exactly as in Lemma~\ref{lem:relaxed-finegrained}, except the following.
    If $A(x_i) = true$:
    \begin{itemize}
        \item $\tuple{T_f: \wt(s, 1)}~\rf~\tuple{Init_1: \rd(s, 1)}$,
        \item $\tuple{Init_1: \wt(v_i, 1)}~\rf~\tuple{T_{\neg x_i}: \rd(v_i, 1)}$, and $\tuple{Init_1: \wt(v_i, 0)}~\rf~\tuple{T_{\neg x_i}: \rd(v_i, 0)}$. 
    \end{itemize} 
    For clause variables $c_j$ and $d_j$ , the $\rf$ (and consequently $\mo$) relation depends on the literals set to true by $A$. 
    For each clause $C_j$ there are two threads writing $\wt(c_j, 1)$, and two threads writing $\wt(d_j, 1)$. 
    
    If $A$ makes $\ell_j^3$ true then $\tuple{T_{\ell_j^3}: \wt(d_j, 1)}~\rf~\tuple{T_f: \rd(d_j, 1)}$. Otherwise $A$ makes $\ell_j^1$ or $\ell_j^2$ true. Let $\ell_j^k$ be the literal that is made true by the satisfying assignment $A$. Then, we have $\tuple{T_{\ell_j^k}: \wt(c_j, 1)}~\rf~\tuple{T_j: \rd(c_j, 1)}$ and $\tuple{T_j: \wt(d_j, 1)}~\rf~\tuple{T_f: \rd(d_j, 1)}$.

    For every variable-value pair for $c_j$ and $d_j$, there are atmost two writes. With the $\rf$ picked as above, the $\mo$ relation gets fixed: the unused write in $T_{x}$/ $T_{\neg x}$ that is set to true by $A$ is $\mo$-before the used write, while if the unused write is in $T_{x}$/ $T_{\neg x}$ that is set to false by $A$, then it is $\mo$-after the used write.    
    
    \paragraph*{Relaxed Write-coherence.} From our construction, it is easy to see $\irr(\mo_x; \po)$.

    \paragraph*{Relaxed Read-coherence.} We need to prove  $\irr(\rf^{-1}; \mo_x; \rf^{?}; \po)$. 
    The possible violating patterns and the arguments follow as in Lemma~\ref{lem:relaxed-finegrained}.

    This yields an $\rf$ where all $\rd(d_j, 1)$ in $T_f$ read from a thread corresponding to a true literal in $A$, so $\ypartial_\varphi^{\rlxmm} \models \rlxmm$.

    Conversely, suppose $\ypartial_\varphi^{\rlxmm} \models \rlxmm$. 
    The proof proceeds as in Lemma~\ref{lem:relaxed-finegrained}, where we show that if there are two clauses $C_j$ and $C_{\ell}$ such that $\rd(d_j, 1)$ and $\rd(c_\ell, 1)$ reading from some $T_{x_i}$ and $T_{\neg x_i}$, then relaxed read coherence is violated. .
\end{proof}

\section{Causal Consistency Models}
\label{app:cc-models}

In this appendix, we study the consistency checking problem under the \emph{causal consistency models} namely, $\ccmm, \cmmm$ and $\cvmm$ which come from distributed databases.
We begin with a brief overview of the
operational and axiomatic semantics of these models, and then show that the complexity landscape observed for the Release - Acquire (RA) variants extends to these models as well. 
Recall that for RA, we established a tight characterization of complexity with respect to the number of writer threads per memory location:
\onewriter\ admits polynomial-time
solutions, \twowriter\ is $\NP$-hard, and \threewriter\ yields a stronger
conditional lower bound. Our goal in this section is to establish analogous
results for causal consistency models, thereby showing that the same complexity results extend to $\ccmm, \cmmm$ and $\cvmm$. Next, we provide an overview of these causal consistency models.

\subsection{Overview of Causal Consistency Models}
\label{app:cc-models-overview}

In the domain of distributed systems, causality is the key concept for consistency. There are three popular causal models, namely, Causal Consistency ($\ccmm$), Causal Convergence ($\cvmm$), and Causal Memory ($\cmmm$)~\cite{BouajjaniEGH17}. 
It is known that $\ccmm$ coincides with $\wramm$ while $\cvmm$ coincides with $\sramm$~\cite{Lahav:2022}. $\cmmm$ is stronger than $\ccmm$ and incomparable to $\cvmm$ and $\ramm$.  

In Causal Memory, each thread has a locally-consistent view of the order that different writes have been executed~\cite{Ahamad1995,BouajjaniEGH17}. 
This is formalised by introducing an additional relation for each thread, called the ``happened-before'' relation in the literature.  To avoid confusion with the ``happens before'' $\hb$, we call this the \emph{observed relation} $\ob{}$. 

To enhance understanding we define the notion of \emph{conflicting triplets} first.

\begin{definition}(Conflicting Triplet)
Given an execution graph $\expartial=(\E,\po)$, two events in $\E$ are called \emph{conflicting} if both are on the same variable and at least one of them is a write. If also the reads-from function $\rf$ is given, a \emph{Conflicting Triplet} is defined as a tuple $(\wt, \rd, \wt')$ of pairwise conflicting events in $\E$ such that $\wt~\rf~\rd$. 
\end{definition} 
 
\noindent{The observed relation $\ob{}$.}
Given an event $\event$, the \emph{observed relation} for $\event$ is the smallest transitive relation $\ob{\event}\subseteq \E\times \E$ with the following properties. 

\begin{enumerate}
\item For every $(\event_1, \event_2)\in \hb$ such that $(\event_i, \event)\in \hb$ for each $i\in[2]$, we have $(\event_1, \event_2)\in \ob{\event}$.
\item For every conflicting triplets $(\wt, \rd, \wt')$ such that:
(ii)~$(\wt',\rd)\in \ob{\event}$ and
(iii)~$(\rd, \event)\in \po^?$,
we have $(\wt', \wt)\in \ob{\event}$.
\end{enumerate}

The idea is that any thread which executes the read $\rd$ must observe $\wt$ 
after it has observed $\wt'$ so that it does not read an obsolete value. 
For any event $\event$, $\ob{\event}$ contains $\hb$, and $\ob{}$ grows monotonically as we move down a thread 
in $\po$-order. That is, for any $(\event_1, \event_2)\in \po$, we have $\ob{\event_1}\subseteq \ob{\event_2}$. For a thread $t$, the observed relation  is defined as $\ob{t}=\ob{\event^{\max}}$, where $\event^{\max}$ is the $\po$-maximal event of $t$.  $\cmmm$ requires all the axioms of $\wramm$, and in addition, 
requires that $\ob{t}$ is irreflexive~\cite{BouajjaniEGH17}. Figure \ref{cm:examples} depicts 
some examples. 

\begin{itemize}
\item $(\text{PORF})\land(\text{weak-Rcoh})\land (\text{\OB{}})$ \hfill [$\cmmm$]
\end{itemize}

Thus, $\wramm\mmorder \cmmm$. However, $\cmmm$ is incomparable with $\ramm$/$\sramm$, i.e., $\cmmm$ allows executions that are inconsistent in $\ramm$/$\sramm$ and vice versa.

	\begin{figure}
	\centering
	\includegraphics[width=\textwidth]{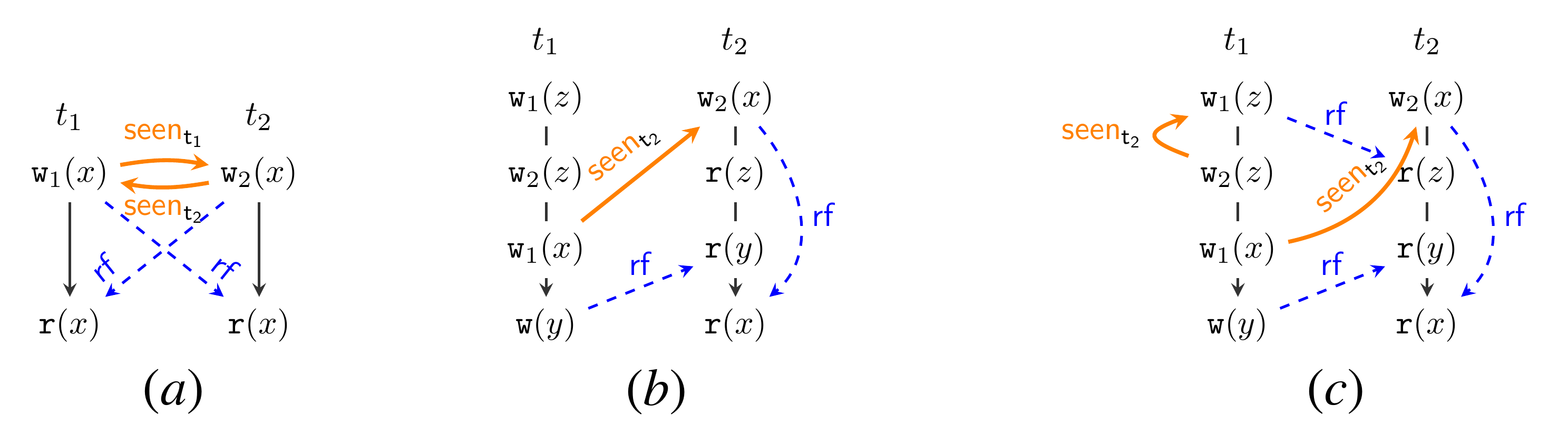}
	\caption{(a) is $\cmmm$ consistent as $\ob{t_1}, \ob{t_2}$ are acyclic. 
	(b) is also $\cmmm$ consistent : the $\wt_1(x) \ob{t_2} \wt_2(x)$ is induced 
	by $\wt_1(x)~\hb~\rd(x)$ and $\wt_2(x)~\rf~\rd(x)$. 
	(c) is obtained 
	from (b) by adding $\wt_1(z)~\rf~\rd(z)$, and is $\cmmm$ inconsistent, 
since we have $\wt_1(z)~\po~\wt_2(z)~\ob{t_2}~\wt_1(z)$. 
	$\wt_2(z)~\ob{t_2}~\wt_1(z)$ is induced 
	from the conflicting triplet $(\wt_1(z), \rd(z), \wt_2(z))$ with 
	 $\wt_2(z)~\ob{t_2}~\rd(z)$. 
	}
	\label{cm:examples}
	\end{figure}

Since $\wramm=\ccmm$ and $\sramm=\cvmm$, both the fine-grained-hardness and $\NP$-hardness result for $\wramm, \sramm$ implies the same for  $\ccmm$ and $\cvmm$. So we only need to prove the hardness results for $\cmmm$.
In Section \ref{app:cc-1w}, we discuss the polynomial-time algorithm for \onewriter\ programs and in Section~\ref{app:cc-3w}, we discuss the complexity results for \threewriter\ programs under $\cmmm$. Note that the $\NP$-hardness for \twowriter\ programs under $\cmmm$ follows from arguments analogous to those made in Section~\ref{sec:hardness-twowriter} for RA models, and therefore we do not discuss it separately.

For a $\cmmm$-consistent execution graph the following should be satisfied: $(\text{PORF})\land(\text{weak-Rcoh})\land (\text{\OB{}})$

\subsection{ \onewriter~ programs under Causal Consistency }~\label{app:cc-1w}

Recall that $\cmmm$ is essentially $\wramm$ with the extra axiom of $\ob{}$ acyclicity. We argue that the algorithm in  Section~\ref{sec:onewriter-algorithm} 
can be used for $\cmmm$ also. It is easy to see that any execution graph $\ex$  which is $\cmmm$-consistent is also $\wramm$-consistent.

The other direction follows for the \onewriter{} case. 
Suppose $\ex$ is $\wramm$-consistent but  violates $\ob{t}$ acyclicity for some thread 
 $t$. Then we have  both $\wt'~\ob{t}~ \wt$ and  $\wt~\ob{t}~ \wt'$ for writes $\wt, \wt'$ on some location $x$. 
 
 By definition of $\ob{t}$, $\wt'~\ob{t}~ \wt$ implies the existence of a conflicting triplet $(\wt, \rd, \wt')$ such that 
   $\wt~\rf ~\rd$, $\wt'~\ob{t}~\rd$, $\rd~\po^?~ e_{max}$ where $e_{max}$ is the $\po$-maximal element of $t$.
  Likewise,  $\wt~\ob{t}~ \wt'$ implies the existence of a conflicting triplet $(\wt', \rd', \wt)$ such that 
   $\wt'~\rf ~\rd'$, $\wt~\ob{t}~\rd'$, $\rd'~\po^?~ e_{max}$ where $e_{max}$ is the $\po$-maximal element of $t$. 
      Since $\wt, \wt'$  are in the same thread, either $\wt~\po~\wt'$ or 
   $\wt'~\po~\wt$. Either way, we observe one violation for \WRCoh{} wrt $\rd$ or $\rd'$. 
   Thus, inconsistency wrt $\cmmm$ results in $\wramm$ inconsistency as well. 
    Therefore, the algorithm    for $\wramm$ can be employed for $\cmmm$ as well. 
   Thus in case of \onewriter~ the time complexity of consistency testing will be same as that of $\wramm$, $O(n^3)$. 
   
   Since $\cvmm = \sramm$ and $\ccmm=\wramm$, the lower bound result under BMM hypothesis in Section~\ref{sec:onewriter} holds for both $\cvmm$ and $\ccmm$. Finally, as mentioned previously,
for the single writer case, obtaining a cyclic $\ob{t}$ for a thread $t$ results in violation of \WRCoh{}.  In the reduction given in Section~\ref{sec:onewriter}, \WRCoh{} is violated in the presence of a triangle, hence, the result also holds for $\cmmm$.

\subsection{Fine-grained Hardness under ETH for \threewriter~ systems}~\label{app:cc-3w}

We will now discuss the fine-grained hardness of \threewriter -system under $\cmmm$. 
Since  $\cmmm$ is stronger than $\wramm$, the soundness part of the reduction in 
 Lemma \ref{lem:wra-eth}  follows : Assume $\expartial_\varphi^{\ramm} \models \cmmm$. Then 
 we know $\expartial_\varphi^{\ramm} \models \wramm$. Our soundness argument shows that 
 $\varphi$ is satisfiable. 
 
 For the completeness part, assume that $\varphi$ is satisfiable. We have already synthesized an $\rf, \mo$ which respect $\sramm$(and hence $\wramm$). We have to now show that, \OB{} is preserved by the synthesized $\rf$. 

\begin{lemma}\label{lem:eth-cm}
If $\varphi$ has a satisfying assignment, then $\expartial_\varphi^{\ramm} \models \cmmm$. 
\end{lemma}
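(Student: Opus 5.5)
We reuse, unchanged, the $\rf$ and $\mo$ built in the proof of Lemma~\ref{lem:sat-implies-sra}. That execution graph already satisfies porf-acyclicity and strong-write-coherence. By Lemma~\ref{lem:sra-ra-wra-relationship} it also satisfies weak-read-coherence. So the only thing left to verify is ob-acyclicity: for every thread $t$, $\ob{t}$ must be irreflexive. Since $\ob{e}$ grows along $\po$ and $\ob{t}=\ob{e^{\max}}$, it suffices to check the maximal event of each thread. Threads other than $T_f$ contain at most one read. The interesting case is therefore $T_f$, whose maximal event $\rd(s_k,1)$ has every other event of $T_f$ in its $\po$-past.

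The plan is to show that $\ob{t}$ is contained in the transitive closure of $\hb \cup \mo$. That union is acyclic by strong-write-coherence, so irreflexivity follows. Rule (1) only adds $\hb$ edges. For rule (2), we prove by induction on the generation of $\ob{t}$ that every edge $(\wt',\wt)$ it adds satisfies $\wt'~\mo~\wt$. The triplet is $(\wt,\rd,\wt')$ with $\wt~\rf~\rd$ and $\wt'~\ob{t}~\rd$. Note that $\wt'\neq \wt$ automatically in the constructed instance, because the only variables with several writes are $s_i$, $v_i$ and $c_j$, and we handle those case by case.

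\textbf{Variables $v_i$.} The two writes carry different values and each has a single reader, in different threads. A rule-(2) edge on $v_i$ would require $\wt'$ to be ob-before a read of $v_i$ that is $\po^?$-before the maximal event of $t$. For $T_f$, no read of $v_i$ lies in its $\po$-past. For $T_{x_i}$ or $T_{\neg x_i}$, the premise would need $\wt(v_i,\neg b)$ to be ob-before the first event of that thread, and that thread's $\hb$-past contains only $T_i^{b}$. So no rule-(2) edges arise for $v_i$.

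\textbf{Variables $c_j$.} The read $\rd(c_j,1)$ in $T_f$ reads from $T_{\ell_j}$, and every other write to $c_j$ is $\mo$-before it. Hence any induced edge $(\wt',\wt)$ has $\wt'~\mo~\wt$, as required.

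\textbf{Variables $s_i$, the main obstacle.} First take $\rd(s_i,2)$, which reads $T_i^{A(x_i)}\!:\wt(s_i,2)$; this write is $\mo$-maximal. So any conflicting $\wt'$ that is ob-before the read is $\mo$-before this write, and the induced edge agrees with $\mo$.

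Now take $\rd(s_i,1)$, which reads $T_i^{\neg A(x_i)}\!:\wt(s_i,1)$. Here the write $T_i^{\neg A(x_i)}\!:\wt(s_i,2)$ is $\mo$-later. We must show it is \emph{not} $\ob{T_f}$-before $\rd(s_i,1)$. It is not $\hb$-before $\rd(s_i,1)$, because no $\hb$ path leaves $T_i^{\neg A(x_i)}$ toward $T_f$ other than through this very read. To rule out an indirect ob-path, we argue that ob-edges reaching $\rd(s_i,1)$ must end with an $\hb$ step. Such a step can only come from events already in the $\hb$-past of $\rd(s_i,1)$. A rule-(2) edge ending at that write would need some read of $s_i$ reading from it, and the only candidate reads from $T_i^{A(x_i)}$. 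So no ob-path reaches $T_i^{\neg A(x_i)}\!:\wt(s_i,2)$ from the other side. This step needs a careful closure argument: we compute $\ob{T_f}$ explicitly as $\hb$ plus the $\mo$-consistent edges above, verify that this relation is transitively closed and satisfies rule (2), and then conclude by minimality of $\ob{T_f}$.

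With these cases settled, $\ob{t}\subseteq(\hb\cup\mo)^+$, which is acyclic. Hence ob-acyclicity holds and $\expartial_\varphi^{\ramm}\models\cmmm$.
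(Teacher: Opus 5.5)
Your central claim, that every rule-(2) edge $(\wt',\wt)$ satisfies $\wt'~\mo~\wt$ for the $\mo$ reused from Lemma~\ref{lem:sat-implies-sra}, is false. It fails in the $\rd(s_i,1)$ case.

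Write $a_1,a_2$ for the writes $\wt(s_i,1),\wt(s_i,2)$ of $T_i^{A(x_i)}$, and $b_1,b_2$ for those of $T_i^{\neg A(x_i)}$. The reused order is $a_1~\mo~b_1~\mo~b_2~\mo~a_2$, with $a_2~\rf~\rd(s_i,2)$ and $b_1~\rf~\rd(s_i,1)$. At $\rd(s_i,1)$ you only treat $b_2$ as an $\mo$-later competitor of $b_1$. But $a_2$, which you yourself call $\mo$-maximal one paragraph earlier, is $\mo$-later too. It is also $\hb$-before $\rd(s_i,1)$ via $a_2~\rf~\rd(s_i,2)~\po~\rd(s_i,1)$. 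For $i<k$ both events lie in the $\hb$-past of the last event of $T_f$, so rule (1) gives $a_2~\ob{T_f}~\rd(s_i,1)$. Rule (2) applied to the triplet $(b_1,\rd(s_i,1),a_2)$ then yields $a_2~\ob{T_f}~b_1$, opposite to $b_1~\mo~a_2$. This is exactly the edge in item (ii) of the paper's proof.

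Hence $\ob{T_f}\not\subseteq(\hb\cup\mo)^+$, and strong-write-coherence says nothing about its irreflexivity. Moreover, $\rd(s_i,1)\xra{\rf^{-1}} b_1\xra{\mo} a_2\xra{\hb}\rd(s_i,1)$ is a read-coherence violation. So with this $\mo$ the graph is not $\sramm$-consistent, and invoking Lemma~\ref{lem:sra-ra-wra-relationship} to get weak-read-coherence is unjustified. Weak-read-coherence does hold, but you should check it directly; it does not involve $\mo$.

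Your embedding idea can be repaired by using the order that write- and read-coherence actually force: $a_1~\mo~a_2~\mo~b_1~\mo~b_2$. With this order the proof goes as follows.
\begin{itemize}
\item At $\rd(s_i,1)$, the competitors $a_1,a_2$ are $\mo$-earlier than $b_1$.
\item $b_2$ lies outside the $\hb$-past of $T_f$, hence outside the field of $\ob{T_f}$. An easy induction shows that all sources and targets of $\ob{T_f}$-edges lie in that $\hb$-past.
\item The real obligation moves to $\rd(s_i,2)$: you must show that $b_1$ is not $\ob{T_f}$-before $\rd(s_i,2)$. A least-fixpoint argument does this: the only $\ob{T_f}$-successors of $b_1$ are the events of $T_f$ from $\rd(s_i,1)$ on, since reads are never sources of rule-(2) edges.
\item You must also recheck acyclicity of $\hb\cup\mo$ for the new order. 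This is straightforward, because no path leads from $T_i^{\neg A(x_i)}$ back to $a_1,a_2$.
\end{itemize}

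Repaired this way, your route is a genuinely different and more systematic argument than the paper's. The paper never routes through $\mo$. It names the rule-(2) edges, which on $s_i$ run from $T_i^{A(x_i)}$ into $T_i^{\neg A(x_i)}$. It then gets irreflexivity from $\hb$-acyclicity plus the absence of any $\ob{T_f}$-edge from $T_i^{\neg A(x_i)}$ back to $T_i^{A(x_i)}$.
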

\begin{proof}
Let $A$ be the satisfying assignment for $\varphi$. 
Say $A(x_i) = true$:

(i) For all pair of events in $\E$, $e~\hb~e'$ we have $e~\ob{T_F}~e'$

(ii) For the conflicting tiplet $(\tuple{T_i^0: \wt(s_i, 1)}, \tuple{T_f: \rd(s_i, 1)}, \tuple{T_i^1: \rd(s_i, 2)})$, we have $\tuple{T_i^1: \wt(s_i, 2)}~ \ob{T_f}~ \tuple{T_f: \rd(s_i, 1)}$ and $\tuple{T_f: \rd(s_i, 1)}~\po^?~e^{T_f}_{max}$ where $e^{T_f}_{max}$ is the $\po$-maximal event in $T_f$. This implies $\tuple{T_i^1: \wt(s_i, 2)}~\ob{T_f}~\tuple{T_i^0: \wt(s_i, 1)}$. 

As there are no $\hb$-cycle and no members of $\ob{T_f}$ directed from $T_i^0$ to $T_i^1$, $\ob{T_f}$ is irreflexive.

If we consider the relation $\ob{T_{\neg x_i}}$, for the conflicting triplet $(\tuple{T_i^0: \wt(v_i, 0)},\\ \tuple{T_{x_i}: \rd(v_i,0)}, \tuple{T_i^1: \wt(v_i, 1)})$ we have $\tuple{T_i^1: \wt(v_i, 1)}~\ob{T_{\neg x_i}}~\tuple{T_i^0: \wt(v_i, 0)}$. 

Again no members of $\ob{T_{\neg x_i}}$ are directed from $T_i^0$ to $T_i^1$. As there is no $\hb$-cycle, $\ob{T_{\neg x_i}}$ is also irreflexive. 

It is easy to see that $\ob{}$ is irreflexive for all remaining threads.
Hence \OB{} is preserved by the synthesized $\rf$

\end{proof}

\subsection{NP-hardness for \twowriter~ systems}

For \twowriter~ systems the reduction from 3-SAT instance has similar construction as that of $\wramm$ and the proof of the reduction will be on similar argument as the proof of the reduction for \threewriter.

\end{document}